\newcommand{\isdef}{\triangleq}
\newcommand{\eigmax}{\boldsymbol{\lambda_{\rm max}}}
\newcommand{\eigmin}{\boldsymbol{\lambda_{\rm min}}}
\newcommand{\svdmax}{\boldsymbol{\sigma_{\rm max}}}
\newcommand{\svdmin}{\boldsymbol{\sigma_{\rm min}}}
\newcommand{\rmT}{{\rm T}}
\newcommand{\BBR}{{\mathbb R}}
\newtheorem{lma}{Lemma}
\newtheorem{theo}{Theorem}
\newenvironment{proof}{\textbf{Proof.}}{\hfill$\blacksquare$}
\DeclareMathOperator*{\argmin}{arg\,min}
\DeclareMathOperator{\rank}{rank}
\DeclareMathOperator{\diag}{diag}
\begin{document}

\begin{frontmatter}

\title{SIFt-RLS: Subspace of Information Forgetting Recursive Least Squares} 


\author[UM]{Brian Lai}\ead{brianlai@umich.edu} and \author[UM]{Dennis S. Bernstein}\ead{dsbaero@umich.edu}           

\address[UM]{Department of Aerospace Engineering, The University of Michigan, Ann Arbor, Michigan 48109, United States}  

\begin{keyword}                           
    System Identification; Recursive Identification; Parameter Estimation; Adaptive Systems; Directional Forgetting                                         
\end{keyword}                             

\begin{abstract}                          
This paper presents subspace of information forgetting recursive least squares (SIFt-RLS), a directional forgetting algorithm which, at each step, forgets only in row space of the regressor matrix, or the \textit{information subspace}.
As a result, SIFt-RLS tracks parameters that are in excited directions while not changing parameter estimation in unexcited directions.
It is shown that SIFt-RLS guarantees an upper and lower bound of the covariance matrix, without assumptions of persistent excitation, and explicit bounds are given.
Furthermore, sufficient conditions are given for the uniform Lyapunov stability and global uniform exponential stability of parameter estimation error in SIFt-RLS when estimating fixed parameters without noise.
SIFt-RLS is compared to other RLS algorithms from the literature in a numerical example without persistently exciting data.
\end{abstract}

\end{frontmatter}

\section{Introduction}

Recursive least squares (RLS) is a fundamental algorithm in systems and control theory for the online identification of fixed parameters \cite{ljung1983theory,aastrom2013adaptive}.
An important property of RLS is that the eigenvalues of the covariance matrix are monotonically decreasing and may become arbitrarily small \cite[subsection 2.3.2]{sastry1990adaptive}, \cite{goel2020recursive}, resulting in eventual slow adaptation and inability to track time-varying parameters \cite{ortega2020modified,salgado1988modified,lai2021regularization}.
A classical extension of RLS used to track time-varying parameters is exponential forgetting \cite{johnstone1982exponential,islam2019recursive}, which uses a forgetting factor to put exponentially lower weights on older terms in the least squares cost function. 
While this allows for continued adaptation of parameter estimation, a critical issue that arises is, without persistent excitation, at least one of the eigenvalues the covariance matrix may become arbitrarily large \cite{johnstone1982exponential,goel2020recursive}, a phenomenon known as covariance blow up \cite{malik1991some} or covariance windup \cite[p. 473]{aastrom2013adaptive}.
This results in significant sensitivity to measurement noise and poor parameter estimation \cite{goel2020recursive}.

As such, a well-established heuristic for RLS extensions is that the eigenvalues of the covariance matrix should have an upper bound and positive lower bound \cite{salgado1988modified}.
Later works have shown that conditions related to the boundedness of the covariance matrix are sufficient to guarantee stability of the parameter estimation error \cite{parkum1992recursive, lai2023generalized, shaferman2021continuous}. 
Common approaches to bound the covariance matrix and further improve tracking of time-varying parameters include variable-rate forgetting, in which a time-varying forgetting factor is used \cite{bruce2020convergence,fortescue1981implementation,Dasgupta1987Asymptotically,paleologu2008robust,Hung2005Gradient,mohseni2022recursive}, and resetting strategies, in which the covariance matrix is periodically reset to a desired value \cite{goodwin1983deterministic} or slowly converges back to a desired value when there is little excitation \cite{salgado1988modified,shin2020new,Lai2023Exponential}.
These methods address the fact that there may be periods of rich excitation, in which forgetting old data can more effectively adapt parameter estimates, and periods of poor excitation, in which it is preferable to rely on old data for parameter estimation.

While variable-rate forgetting and resetting strategies address periods of rich and poor excitation, these methods do not address when excitation is nonuniform. 
Nonuniform excitation may result in sufficient excitation to quickly adapt parameters in particular \textit{directions} (that is, particular linear combinations of the parameters), while parameters in other directions can only be slowly adapted or not adapted at all \cite{vahidi2005recursive,kulhavy1984tracking,salgado1988modified}.
If it is known a priori how quickly parameters will vary in different directions and which directions will be excited, techniques such as multiple forgetting may be suitable, where different forgetting factors are chosen for different directions \cite{vahidi2005recursive,fraccaroli2015new}. 
However, a more challenging problem is forgetting when such information is not known a priori. 
To tackle this problem, various directional forgetting techniques have been developed which analyze the regressor and/or the covariance matrix to gauge which directions are being excited and which are not \cite{cao2000directional,kulhavy1984tracking,bittanti1990convergence,zhu2021recursive,goel2020recursive}.

This work presents a new directional forgetting algorithm called subspace of information forgetting (SIFt) RLS.
We begin by presenting a subspace decomposition of a positive definite matrix into the sum of two positive semidefinite matrices, one \textit{parallel} to the subspace and one \textit{orthogonal} to the subspace.
This decomposition reduces to the decomposition used in \cite{cao2000directional} when the subspace is of dimension $1$.
We give a thorough analysis of this decomposition, including the main properties of the decomposition, the uniqueness of this decomposition, and a duality between the parallel and orthogonal components.

Next, we develop the SIFt-RLS algorithm, where, at each step, we call the row space of the regressor the \textit{information subspace}.
At each step, the inverse covariance matrix (also called the information matrix) is decomposed into the components parallel and and orthogonal to the information subspace, and forgetting is applied only to the parallel component. 
This approach is developed for vector measurements and specializes to the method in \cite{cao2000directional} in the case of scalar measurements. 
Forgetting techniques for parameter estimation with vector measurements are useful, for example, in adaptive control of multiple-input, multiple-output (MIMO) systems \cite{nguyen2021predictive,islam2019recursive}.

Finally, we give explicit upper and lower bounds for the eigenvalues of the covariance matrix in SIFt-RLS that are guaranteed without persistent excitation.
This goes beyond the analysis of \cite{cao2000directional} which only shows the existence of bounds.
Moreover, we show a striking similarity between the eigenvalue bounds of exponential forgetting that are guaranteed with persistent excitation and the eigenvalue bounds of SIFt.
This shows that forgetting in the information subspace yields analogous bounds to uniform forgetting when all directions are excited.
Furthermore, a counterexample shows that \cite{zhu2021recursive}, a different vector measurement extension of \cite{cao2000directional}, does not guarantee an upper bound on the eigenvalues of the covariance matrix as SIFt does.
Finally, we provide sufficient conditions for the uniform Lyapunov stability and global uniform exponential stability of the estimation error dynamics of SIFt-RLS.

\subsection{Notation and Terminology}
$I_n$ denotes the $n \times n$ identity matrix, and $0_{m \times n}$ denotes the $m \times n$ zero matrix. 
For $B \in \BBR^{m \times n}$, $\svdmax(B)$ denotes the largest singular value of $B$, and $\svdmin(B)$ denotes the smallest singular value of $B$.
For symmetric $A\in \BBR^{n \times n}$, let the $n$ real eigenvalues of $A$ be denoted by
$\eigmin(A) \triangleq \boldsymbol{\lambda_n}(A) \le \cdots \le \eigmax(A) \triangleq \boldsymbol{\lambda_1}(A)$. 
Additionally, the spectral radius of $A$, $\rho(A)$, is defined as 
\begin{align}
    \rho(A) \triangleq \max \{ \vert \boldsymbol{\lambda_1}(A) \vert, \cdots, \vert \boldsymbol{\lambda_n}(A) \vert\}.
\end{align}
Furthermore, if A is positive definite, the singular values of $A$ are equivalent to the eigenvalues of $A$ and the condition number of A, $\kappa (A)$, can be expressed as
\begin{align}
    \kappa(A) \isdef \frac{\svdmax(A)}{\svdmin(A)} = \frac{\eigmax(A)}{\eigmin(A)}.
    \label{eq::Cond_numb_Defn}
\end{align}
If $A$ is positive semidefinite but not positive definite, then $\kappa(A)\isdef\infty.$  

For $x \in \BBR^n$, $\Vert x \Vert$ denotes the Euclidean norm of $x$. For $B \in \BBR^{m \times n}$, $B^+$ denotes the Moore-Penrose inverse of $B$.
$\mathcal{R}(B)$ and $\mathcal{N}(B)$ denote the column space and nullspace of $B$, respectively.
For positive-semidefinite $R \in \BBR^{n \times n}$ and $x \in \BBR^n$, $\Vert x \Vert_R \triangleq \sqrt{x^\rmT R x}$. 
Let $( a_k )_{k=0}^\infty \subset \BBR^{m \times n}$ denote that, for all $k \ge 0$, $a_k \in \BBR^{m \times n}$. For $P,Q \in \BBR^{n \times n}$, let $P \prec Q$ (respectively $P \preceq Q$) denote that $Q-P$ is positive definite (respectively positive semidefinite). 

\begin{defn}
A sequence $(\phi_k)_{k=0}^\infty \subset \BBR^{p \times n}$ is \textbf{persistently exciting} if there exist $N \ge 1$ and $\alpha>0$ 
such that, for all $k \ge 0$,
\begin{align}
    \alpha I_n \preceq \sum_{i=k}^{k+N}\phi_i^\rmT \phi_i.\label{eqn: persistent excitation defn}
\end{align}
Furthermore, $\alpha$ and $N$ are, respectively, the \textbf{lower bound}
 and \textbf{persistency window} of $(\phi_k)_{k=0}^\infty$.
\end{defn}

\begin{defn}
\label{defn: sequence upper bounded}
A sequence $(\phi_k)_{k=0}^\infty \subset \BBR^{p \times n}$ is \textbf{bounded} if there exist $\beta>0$ 
such that, for all $k \ge 0$,
\begin{align}
    \phi_k^\rmT \phi_k \preceq \beta I_n.\label{eqn: bounded excitation defn}
\end{align}
Furthermore, $\beta$ is the \textbf{upper bound} of $(\phi_k)_{k=0}^\infty$.
\end{defn}

\section{RLS with Exponential Forgetting}

To begin, we introduce RLS with exponential forgetting in Proposition \ref{prop: RLS}, which gives a recursive way to compute the global minimizer of a least squares cost with exponentially smaller weighting on older terms.
This background will be useful for later comparison with SIFt-RLS.

\begin{prop}
    \label{prop: RLS}
    For all $k \ge 0$, let $\phi_k \in \BBR^{p \times n}$ and $y_k \in \BBR^p$. Furthermore, let $\theta_0 \in \BBR^n$, let $R_0 \in \BBR^{n \times n}$ be positive definite, and let $\lambda \in (0,1)$.
    For all $k \ge 0$, define the function $J_k \colon \BBR^n \rightarrow \BBR$ by
    \begin{align}
        J_k(\hat{\theta}) \triangleq \lambda^{k+1} \Vert \hat{\theta} - \theta_0 \Vert_{R_0}^2 + \sum_{i=0}^k  \lambda^{k-i} \Vert y_i - \phi_i \hat{\theta} \Vert_2.
    \end{align}
    Then, $J_k$ has a unique global minimizer, denoted 
    \begin{align}
        \theta_{k+1} \triangleq \argmin_{\hat{\theta} \in \BBR^n} J_k(\hat{\theta}),
    \end{align}
    which, for all $k \ge 0$, is given recursively by
    \begin{align}
        R_{k+1} &= \lambda R_k + \phi_k^\rmT \phi_k, \label{eqn: R_k update} \\
        \theta_{k+1} &= \theta_k + P_{k+1} \phi_k^\rmT(y_k-\phi_k \theta_k), \label{eqn: theta_k update}
    \end{align}
    where, for all $k \ge 0$, $R_k \in \BBR^{n \times n}$ is positive definite and $P_k \triangleq R_k^{-1} \in \BBR^{n \times n}$.
    Moreover, for all $k \ge 0$, $P_{k+1}$ is given recursively by
    \begin{align}
        P_{k+1} = \frac{1}{\lambda} P_k - \frac{1}{\lambda} P_k \phi_k^\rmT (\lambda I_p + \phi_k P_k \phi_k^\rmT)^{-1}\phi_k P_k. \label{eqn: P_k update}
    \end{align}
\end{prop}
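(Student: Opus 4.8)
The plan is to recognize $J_k$ as a strictly convex quadratic in $\hat\theta$ and then propagate its closed-form minimizer recursively. Reading $\Vert y_i - \phi_i\hat\theta\Vert_2$ as the squared residual (as the claimed formula for the minimizer requires), I would expand and collect terms to write $J_k(\hat\theta) = \hat\theta^\rmT \bar{R}_{k+1}\hat\theta - 2\hat\theta^\rmT b_{k+1} + c_k$, where $\bar{R}_{k+1} \triangleq \lambda^{k+1}R_0 + \sum_{i=0}^k \lambda^{k-i}\phi_i^\rmT\phi_i$, $b_{k+1} \triangleq \lambda^{k+1}R_0\theta_0 + \sum_{i=0}^k \lambda^{k-i}\phi_i^\rmT y_i$, and $c_k$ is independent of $\hat\theta$. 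Since $R_0 \succ 0$, $\lambda>0$, and each $\phi_i^\rmT\phi_i \succeq 0$, we have $\bar{R}_{k+1} \succ 0$; hence the Hessian $2\bar{R}_{k+1}$ is positive definite and $J_k$ has the unique global minimizer $\theta_{k+1} = \bar{R}_{k+1}^{-1}b_{k+1}$. Defining $R_k \triangleq \bar{R}_k$ (so $R_0$ agrees with the given initial matrix) and peeling the $i=k$ term off each sum gives \eqref{eqn: R_k update} together with the auxiliary recursion $b_{k+1} = \lambda b_k + \phi_k^\rmT y_k$ with $b_0 = R_0\theta_0$. A one-line induction on \eqref{eqn: R_k update} shows $R_k \succ 0$ for all $k$, so $P_k \triangleq R_k^{-1}$ is well defined.

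For the estimate recursion \eqref{eqn: theta_k update}, I would start from $\theta_{k+1} = P_{k+1}b_{k+1} = P_{k+1}(\lambda b_k + \phi_k^\rmT y_k)$, use $b_k = R_k\theta_k$ (equivalently $b_{k+1} = R_{k+1}\theta_{k+1}$, valid for all $k\ge0$), and then make the key substitution $\lambda R_k = R_{k+1} - \phi_k^\rmT\phi_k$ coming from \eqref{eqn: R_k update}. This yields $\theta_{k+1} = P_{k+1}\bigl(R_{k+1}\theta_k - \phi_k^\rmT\phi_k\theta_k + \phi_k^\rmT y_k\bigr) = \theta_k + P_{k+1}\phi_k^\rmT(y_k - \phi_k\theta_k)$, which is \eqref{eqn: theta_k update}.

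Lastly, \eqref{eqn: P_k update} is the Sherman--Morrison--Woodbury identity applied to $R_{k+1} = \lambda R_k + \phi_k^\rmT I_p\phi_k$: since $\lambda R_k \succ 0$ and $\lambda I_p + \phi_k P_k \phi_k^\rmT \succ 0$ (because $P_k \succ 0$ and $\lambda > 0$), every inverse below exists and
\begin{align}
P_{k+1} = (\lambda R_k)^{-1} - (\lambda R_k)^{-1}\phi_k^\rmT\bigl(I_p + \phi_k(\lambda R_k)^{-1}\phi_k^\rmT\bigr)^{-1}\phi_k(\lambda R_k)^{-1}. \nonumber
\end{align}
Substituting $(\lambda R_k)^{-1} = \tfrac{1}{\lambda}P_k$ and factoring the scalar $\tfrac{1}{\lambda}$ out of the middle inverse gives the stated formula. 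There is no deep obstacle here --- this is the classical RLS derivation --- so the real work is bookkeeping: fixing the $R_0$, $b_0$ conventions so the peeled recursions start correctly, carrying the positive-definiteness induction through to justify every inverse, and tracking the powers of $\lambda$ in the matrix-inversion step. I would also note that, as written, the residual terms in $J_k$ appear to be missing a square; the argument above takes $\Vert y_i - \phi_i\hat\theta\Vert_2$ to mean $\Vert y_i - \phi_i\hat\theta\Vert_2^2$.
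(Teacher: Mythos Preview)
Your proposal is correct and follows the classical RLS derivation; the paper itself does not give a self-contained proof but simply cites Theorem~1 of \cite{islam2019recursive}, whose argument is essentially the one you outline. Your observation about the missing square on the residual norm is also accurate.
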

\begin{proof}
    See Theorem 1 of \cite{islam2019recursive}.
\end{proof}

For all $k \ge 0$, We call 
$y_k \in \BBR^p$ the \textit{measurement}, 
$\phi_k \in \BBR^{p \times n}$ the \textit{regressor}, 
and $\theta_k \in \BBR^n$ the \textit{parameter estimate}. 
Moreover, we call 
$R_k \in \BBR^{n \times n}$ is the \textit{information matrix}, 
and $P_k \in \BBR^{n \times n}$ the \textit{covariance matrix}.
Finally, we call $\lambda \in (0,1)$ the \textit{forgetting factor}.

Note that RLS with exponential forgetting may experience covariance windup, where some of the eigenvalues of the covariance matrix $P_k$ may become unbounded as $k \rightarrow \infty$ (e.g. see example 3 of \cite{goel2020recursive}).
However, if the sequence of regressors $(\phi_k)_{k=0}^\infty$ is persistently exciting and bounded, then $P_k$ has guaranteed upper and lower bounds \cite{goel2020recursive}.
Proposition \ref{prop: EF RLS Covariance Bound with N=1} gives an expression for these bounds in the special case where the persistency window is $N=1$. For bounds in the case $N > 1$, see Proposition 4 of \cite{goel2020recursive}.
\begin{prop}
\label{prop: EF RLS Covariance Bound with N=1}
Let $\lambda \in (0,1)$, let $\theta_0 \in \BBR^n$, and let $R_0 \in \BBR^{n \times n}$ be positive definite. 
For all $k \ge 0$, let $y_k \in \BBR^p$, and $\phi_k \in \BBR^{p \times n}$. Finally, for all $k \ge 0$, let $R_{k} \in \BBR^{n \times n}$ and $\theta_k \in \BBR^n$ be recursively updated by \eqref{eqn: R_k update} and \eqref{eqn: theta_k update} and $P_k \triangleq R_k^{-1}$.
Then the following statements hold:
\begin{enumerate}
    \item If $\left( \phi_k \right)_{k=0}^\infty$ is persistently exciting with lower bound $\alpha > 0$ and persistency window $N = 1$, then, for all $k \ge 0$,
    \begin{align}
        \eigmin(R_k) &\ge \min \{ \frac{\alpha}{1-\lambda}, \eigmin(R_0) \} ,
        \label{eqn: RLS Rk lower bound}
        \\
        \eigmax(P_k) &\le \max \{ \frac{1-\lambda}{\alpha} , \eigmax(P_0) \} .
        \label{eqn: RLS Pk upper bound}
    \end{align}
    \item If $\left( \phi_k \right)_{k=0}^\infty$ is bounded with upper bound $\beta \in (0,\infty)$, then, for all $k \ge 0$, 
    \begin{align}
        \eigmax(R_k) \le \max \{ \frac{\beta}{1-\lambda}, \eigmax(R_0) \},
        \label{eqn: RLS Rk upper bound}
        \\
        \eigmin(P_k) \ge \min \{ \frac{1-\lambda}{\beta}, \eigmin(P_0) \}.
        \label{eqn: RLS Pk lower bound}
    \end{align}
\end{enumerate}
\end{prop}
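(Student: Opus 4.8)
The plan is to prove everything at the level of the information matrix $R_k$ and then transfer the bounds to $P_k = R_k^{-1}$ using two elementary facts: for positive definite $A$ one has $\eigmax(A^{-1}) = 1/\eigmin(A)$ and $\eigmin(A^{-1}) = 1/\eigmax(A)$, and the maps $\eigmin(\cdot),\eigmax(\cdot)$ are monotone with respect to the Loewner order (if $A \preceq B$ then $\eigmin(A)\le\eigmin(B)$ and $\eigmax(A)\le\eigmax(B)$, directly from the variational characterizations $\eigmin(A)=\min_{\norm{x}=1}x^\rmT A x$, $\eigmax(A)=\max_{\norm{x}=1}x^\rmT A x$). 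Thus it suffices to establish the two Loewner bounds $c\,I_n \preceq R_k$ with $c \triangleq \min\{\alpha/(1-\lambda),\,\eigmin(R_0)\}$ under the excitation hypothesis, and $R_k \preceq d\,I_n$ with $d \triangleq \max\{\beta/(1-\lambda),\,\eigmax(R_0)\}$ under the boundedness hypothesis; note $c>0$ since $\alpha>0$, $\lambda<1$, and $R_0\succ0$, so the conversion to bounds on $P_k$ is legitimate.

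For statement 1, I would induct on $k$ using the recursion \eqref{eqn: R_k update}. The base case $c\,I_n \preceq R_0$ is immediate from the definition of $c$. For the inductive step, suppose $c\,I_n \preceq R_k$; since $(\phi_k)_{k=0}^\infty$ is persistently exciting with window $N=1$, the term injected at step $k$ satisfies $\phi_k^\rmT\phi_k \succeq \alpha I_n$, so \eqref{eqn: R_k update} gives $R_{k+1} = \lambda R_k + \phi_k^\rmT\phi_k \succeq \lambda c\,I_n + \alpha I_n$. Because $c \le \alpha/(1-\lambda)$ is precisely the scalar inequality $\lambda c + \alpha \ge c$, we obtain $c\,I_n \preceq R_{k+1}$, closing the induction. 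This yields \eqref{eqn: RLS Rk lower bound}, and then $\eigmax(P_k) = 1/\eigmin(R_k) \le 1/c = \max\{(1-\lambda)/\alpha,\,\eigmax(P_0)\}$ gives \eqref{eqn: RLS Pk upper bound}.

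Statement 2 is the mirror image and needs no excitation: with $d$ as above, $R_0 \preceq d\,I_n$, and if $R_k \preceq d\,I_n$ then \eqref{eqn: R_k update} together with Definition \ref{defn: sequence upper bounded} ($\phi_k^\rmT\phi_k \preceq \beta I_n$) yields $R_{k+1} \preceq \lambda d\,I_n + \beta I_n \preceq d\,I_n$, using that $d \ge \beta/(1-\lambda)$ is the inequality $\lambda d + \beta \le d$. Inverting as before gives \eqref{eqn: RLS Rk upper bound} and \eqref{eqn: RLS Pk lower bound}. (Equivalently, one may unroll \eqref{eqn: R_k update} to $R_k = \lambda^k R_0 + \sum_{i=0}^{k-1}\lambda^{k-1-i}\phi_i^\rmT\phi_i$ and read off both bounds from the geometric sum $\sum_{i=0}^{k-1}\lambda^{k-1-i} = (1-\lambda^k)/(1-\lambda)$, noting that $\lambda^k x + (1-\lambda^k) z$ always lies between $\min\{x,z\}$ and $\max\{x,z\}$.)

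I do not anticipate a genuine obstacle; this is essentially a scalar Riccati-type comparison driven by the geometric decay of the forgetting factor $\lambda$. The two points deserving care are: (a) invoking the $N=1$ persistency hypothesis in exactly the per-step form $\phi_k^\rmT\phi_k \succeq \alpha I_n$, so that the guaranteed constant is $\alpha/(1-\lambda)$ and not a smaller quantity; and (b) justifying the passage from a Loewner bound on $R_k$ to eigenvalue bounds on $P_k = R_k^{-1}$, which is valid here precisely because $c>0$ makes $R_k$ uniformly positive definite.
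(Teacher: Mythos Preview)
Your proposal is correct and follows essentially the same approach as the paper: induction on $k$ using the recursion \eqref{eqn: R_k update}, the per-step bound $\phi_k^\rmT\phi_k \succeq \alpha I_n$ (resp.\ $\preceq \beta I_n$), and the scalar observation that $c \le \alpha/(1-\lambda)$ is equivalent to $\lambda c + \alpha \ge c$. The only cosmetic difference is that you work directly in the Loewner order whereas the paper works at the eigenvalue level via the Weyl-type inequality of Lemma~\ref{lem: lambda_k(A) + lambda_1(B) < lambda_k(A+B) < lambda_k(A) + lambda_n(B)} and then performs a short two-case check; your formulation is slightly more streamlined but not substantively different.
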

\begin{proof}
    We first prove \textit{(1)} by induction. For brevity, let $a \triangleq \min \{ \frac{\alpha}{1-\lambda}, \eigmin(R_0) \}$. For the base case, note that $\eigmin(R_0) \ge a$ immediately holds. 
    Next, assume for inductive hypothesis that $\eigmin(R_k) \ge a$. 
    It follows from \eqref{eqn: R_k update}, Lemma \ref{lem: lambda_k(A) + lambda_1(B) < lambda_k(A+B) < lambda_k(A) + lambda_n(B)}, and persistent excitation of $\left( \phi_k \right)_{k=0}^\infty$ that $\eigmin(R_{k+1}) \ge \lambda a + \alpha$. 
    Substituting $a = \min \{ \frac{\alpha}{1-\lambda}, \eigmin(R_0) \}$ yields $\eigmin(R_{k+1}) \ge \min \{\frac{\alpha}{1-\lambda} , \lambda \eigmin(R_0) + \alpha\} I_n$.
    
    In the case where $\eigmin(R_0) \ge \frac{\alpha}{1-\lambda}$, it follows that $\lambda \eigmin(R_0) + \alpha \ge \lambda (\frac{\alpha}{1-\lambda}) + \alpha = \frac{\alpha}{1-\lambda} \ge a$.
    In the case where $\eigmin(R_0) < \frac{\alpha}{1-\lambda}$, it follows that $\lambda \eigmin(R_0) + \alpha \ge \eigmin(R_0) \ge a$. 
    Hence, $R_{k+1} \succeq a I_n$, and \eqref{eqn: RLS Rk lower bound} holds by mathematical induction.
    Lastly \eqref{eqn: RLS Pk upper bound} follows directly from \eqref{eqn: RLS Rk lower bound} since, for all $k \ge 0$, $P_k = R_k^{-1}$.
    Statement \textit{(2)} can be proven using similar reasoning.
\end{proof}

\section{Subspace Decomposition of a Positive-Definite Matrix}
\label{sec: subspace decomp}
This section presents a novel decomposition of a positive-definite matrix $A \in \BBR^{n \times n}$ into the sum of two positive-semidefinite matrices based on the choice of a subspace $S \subset \BBR^n$. 
This decomposition will be the main tool used in SIFt-RLS. 
Note that this decomposition reduces to the decomposition proposed in \cite{cao2000directional} when subspace $S$ is dimension $1$.
To begin, Definition \ref{defin: subspace decomposition} defines the proposed matrix decomposition.

\begin{defn}
\label{defin: subspace decomposition}
Let $S \subset \BBR^n$ be a subspace of dimension $p \le n$ and let $A \in \BBR^{n \times n}$ be positive definite. Let $v_1, \ldots ,v_p \in \BBR^n$ be a basis for $S$ and $v \triangleq \begin{bmatrix} v_1 & \cdots & v_p \end{bmatrix} \in \BBR^{n \times p}$. Then, define the operators $\parallel$ and $\perp$ as
\begin{align}
    A^{\parallel S} &\triangleq  Av(v^T A v)^{-1} v^T A \in \BBR^{n \times n}, \label{eqn: Aparallel definition} \\
    A^{\perp S} &\triangleq  A- A^{\parallel S} \in \BBR^{n \times n}, \label{eqn: Aperp definition}
\end{align}
where, by Lemma \ref{lem: phi R phi^T PSD}, $v^\rmT A v \in \BBR^{p \times p}$ is nonsingular. 
\end{defn}

We call $A^{\parallel S}$ ``$A$ parallel to $S$" and $A^{\perp S}$ ``$A$ orthogonal to $S$". We call the equality $A = A^{\parallel S} + A^{\perp S}$ the ``$S$ subspace decomposition of $A$".
Next, Proposition \ref{prop: Apar change of basis} shows that the matrices $A^{\parallel S}$ and $A^{\perp S}$ do not depend on the choice of basis for $S$.
\begin{prop}
\label{prop: Apar change of basis}
Let $S \subset \BBR^n$ be a subspace of dimension $p \le n$ and let $A \in \BBR^{n \times n}$ be positive definite. Let $v_1, \cdots ,v_p \in \BBR^n$ and $w_1, \cdots ,w_p \in \BBR^n$ be a bases for $S$ and $v \triangleq \begin{bmatrix} v_1 & \cdots & v_p \end{bmatrix} \in \BBR^{n \times p}$, $w \triangleq \begin{bmatrix} w_1 & \cdots & w_p \end{bmatrix} \in \BBR^{n \times p}$. Then,
\begin{align}
    Av(v^\rmT A v)^{-1} v^\rmT A = Aw(w^\rmT A w)^{-1} w^\rmT A. \label{eqn: Apar change of basis}
\end{align}
\end{prop}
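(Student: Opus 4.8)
The plan is to reduce the identity to a single change-of-basis matrix. Since $v_1,\dots,v_p$ and $w_1,\dots,w_p$ are both bases of the same $p$-dimensional subspace $S$, and $v,w\in\BBR^{n\times p}$ therefore both have full column rank $p$, each $w_j$ lies in $\mathcal{R}(v)$, so there is a matrix $T\in\BBR^{p\times p}$ with $w=vT$; likewise each $v_j$ lies in $\mathcal{R}(w)$, giving $T'\in\BBR^{p\times p}$ with $v=wT'$. Combining these, $v=wT'=vTT'$, and since $X\mapsto vX$ is injective (full column rank of $v$) we get $TT'=I_p$; as $T$ is square this shows $T$ is invertible with $T^{-1}=T'$.

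Next I would substitute $w=vT$ into the right-hand side of \eqref{eqn: Apar change of basis}. First, $w^\rmT A w = T^\rmT (v^\rmT A v)\, T$, which is a product of invertible matrices: $v^\rmT A v$ is nonsingular by Lemma \ref{lem: phi R phi^T PSD} and $T$ is invertible by the previous paragraph. Hence $(w^\rmT A w)^{-1} = T^{-1}(v^\rmT A v)^{-1}T^{-\rmT}$, and therefore
\[
Aw(w^\rmT A w)^{-1}w^\rmT A = AvT\,T^{-1}(v^\rmT A v)^{-1}T^{-\rmT}\,T^\rmT v^\rmT A = Av(v^\rmT A v)^{-1}v^\rmT A,
\]
which is exactly the left-hand side, completing the proof. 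Basis independence of $A^{\perp S}$ is then immediate from $A^{\perp S}=A-A^{\parallel S}$.

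I do not expect a genuine obstacle here; the only point requiring care is the existence and invertibility of the intertwining matrix $T$, after which the result is a one-line cancellation. The underlying reason the statement is true is that $Av(v^\rmT A v)^{-1}v^\rmT A = A^{1/2}\Pi A^{1/2}$, where $\Pi$ is the orthogonal projector onto $A^{1/2}S$, an object depending only on $A$ and $S$; the computation above is simply the coordinate-free content of this observation made explicit without invoking square roots.
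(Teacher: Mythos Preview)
Your proof is correct and follows essentially the same approach as the paper: both argue that two bases of $S$ are related by an invertible change-of-basis matrix and then substitute and cancel. Your version is slightly more explicit in justifying the invertibility of $T$, but the core argument is identical.
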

\begin{proof}
    Since $v$ and $w$ are both bases for $S$, there exists invertible $P \in \BBR^{p \times p}$ such that $v = wP$. Then, \eqref{eqn: Apar change of basis} follows from substituting $v = wP$ in $Av(v^\rmT A v)^{-1} v^\rmT A$.
\end{proof}

\subsection{Properties}

Theorem \ref{theo: Apar Aperp properties} now gives the main properties of the matrices $A^{\parallel S}$ and $A^{\perp S}$ in the $S$ subspace decomposition of $A$.
\begin{theo}[Properties of $A^{\parallel S}$ and $A^{\perp S}$]
\label{theo: Apar Aperp properties} 
Let $S \subset \BBR^n$ be a subspace of dimension $p \le n$ and let $A \in \BBR^{n \times n}$ be positive definite. Then, $A$ can be decomposed into 
\begin{align}
    A = A^{\parallel S} + A^{\perp S},
\end{align}
where $A^{\parallel S} \in \BBR^{n \times n}$, given by \eqref{eqn: Aparallel definition}, satisfies the properties:
\begin{enumerate}
    \item[(1a)] $A^{\parallel S}$ is positive semidefinite,
    \item[(2a)] $\rank(A^{\parallel S}) = p$,
    \item[(3a)] For all $x \in S$, $A^{\parallel S}x = Ax$.
\end{enumerate}
and $A^{\perp S} \in \BBR^{n \times n}$, given by \eqref{eqn: Aperp definition}, satisfies the properties:
\begin{enumerate}
    \item[(1b)] $A^{\perp S}$ is positive semidefinite,
    \item[(2b)] $\rank(A^{\perp S}) = n-p$,
    \item[(3b)] For all $x \in S$, $A^{\perp S}x = 0$.
\end{enumerate}
\end{theo}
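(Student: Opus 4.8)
The plan is to collapse all six claims to elementary facts about orthogonal projectors by factoring $A$ symmetrically. Since $A$ is positive definite it has a positive-definite square root $A^{1/2}$; I would set $C \triangleq A^{1/2} v \in \BBR^{n\times p}$. Because $v$ has full column rank $p$ and $A^{1/2}$ is invertible, $C$ also has full column rank, so $C^\rmT C = v^\rmT A v$ is nonsingular (this is the nonsingularity already invoked in Definition~\ref{defin: subspace decomposition}) and $\Pi \triangleq C(C^\rmT C)^{-1}C^\rmT$ is the orthogonal projector onto $\mathcal{R}(C)$. Substituting $A = A^{1/2}A^{1/2}$ into \eqref{eqn: Aparallel definition} and \eqref{eqn: Aperp definition} gives the key identities
\begin{align}
    A^{\parallel S} = A^{1/2}\,\Pi\, A^{1/2},
    \qquad
    A^{\perp S} = A^{1/2}\,(I_n - \Pi)\,A^{1/2}.
\end{align}
By Proposition~\ref{prop: Apar change of basis} the particular basis used for $S$ is irrelevant, so no generality is lost in this representation.

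From here each property is a one-liner. For (1a) and (1b): $\Pi$ and $I_n-\Pi$ are symmetric idempotents, hence positive semidefinite, and a congruence $X \mapsto A^{1/2} X A^{1/2}$ preserves positive semidefiniteness, so $A^{\parallel S}\succeq 0$ and $A^{\perp S}\succeq 0$. For (2a) and (2b): since $A^{1/2}$ is invertible, $\rank(A^{\parallel S}) = \rank(\Pi) = \rank(C) = \rank(v) = p$ and $\rank(A^{\perp S}) = \rank(I_n - \Pi) = n - \rank(\Pi) = n-p$ (equivalently, $\rank(A^{\parallel S}) = \Tr(\Pi) = \Tr((C^\rmT C)^{-1}C^\rmT C) = p$).

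For (3a) I would not even need the factorization: given $x\in S$, write $x = vz$ with $z\in\BBR^p$; then $v^\rmT A x = (v^\rmT A v)z$, so from \eqref{eqn: Aparallel definition}, $A^{\parallel S}x = Av(v^\rmT A v)^{-1}(v^\rmT A v)z = Avz = Ax$. Property (3b) is then immediate from \eqref{eqn: Aperp definition}: for $x\in S$, $A^{\perp S}x = Ax - A^{\parallel S}x = 0$. And the decomposition $A = A^{\parallel S}+A^{\perp S}$ is \eqref{eqn: Aperp definition} itself.

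I do not expect a genuine obstacle here: the only thing requiring a moment's care is verifying that $C = A^{1/2}v$ retains full column rank --- this is exactly where positive definiteness (rather than mere positive semidefiniteness) of $A$ is used, and it is what makes $v^\rmT A v$ invertible and $\Pi$ a legitimate orthogonal projector. Everything else is bookkeeping with standard projector identities.
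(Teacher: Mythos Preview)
Your proof is correct and, in fact, more streamlined than the paper's. Both arguments hinge on the same core observation---factoring out a symmetric square root to reveal a projector---but they deploy it differently. You take the square root of $A$ itself and obtain the single congruence $A^{\parallel S}=A^{1/2}\Pi A^{1/2}$, $A^{\perp S}=A^{1/2}(I_n-\Pi)A^{1/2}$, from which all six claims fall out at once via standard orthogonal-projector facts. The paper instead treats the properties piecemeal: (1a) by the congruence $(Av)(v^\rmT A v)^{-1}(Av)^\rmT$; (1b) by showing $A^{-1}A^{\parallel S}$ is idempotent and invoking the criterion $\rho(A^{-1}A^{\parallel S})\le 1$; (2a) by taking the square root of $v^\rmT A v$ rather than of $A$; and (2b) by appealing to an external rank-subtractivity lemma (Lemma~\ref{lem: Chu 1998 Lemma 2.4, rank(H-M) = rank(H) - rank(M)}). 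The computations for (3a) and (3b) are identical in both. Your route is shorter and avoids the auxiliary lemmas entirely; the paper's route, while more fragmented, has the minor advantage that its proof of (1b) via idempotency of $A^{-1}A^{\parallel S}$ makes explicit a fact that is reused implicitly elsewhere (it is essentially the statement that $A^{-1}A^{\parallel S}$ is similar to your $\Pi$).
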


\begin{proof}
    See Appendix Section \ref{sec: subspace decomp proofs}.
\end{proof}

Note that properties \textit{2a)} and \textit{3a)} imply that $\mathcal{R}(A^{\parallel S}) = AS$ and properties \textit{2b)} and \textit{3b)} imply that $\mathcal{N}(A^{\perp S}) = S$.

Next, Theorem \ref{theo: A^parallel uniqueness} shows that $A^{\parallel S}$ is the unique matrix satisfying properties \textit{(1a*)}, \textit{(2a)}, and \textit{(3a)} of Theorem \ref{theo: A^parallel uniqueness}.
Note that since property \textit{(1a)} of Theorem \ref{theo: Apar Aperp properties} implies property \textit{(1a*)}, $A^{\parallel S}$ is also the unique matrix satisfying properties \textit{(1a)}, \textit{(2a)}, and \textit{(3a)} of Theorem \ref{theo: Apar Aperp properties}.

\begin{theo}[Uniqueness of $A^{\parallel S}$]
\label{theo: A^parallel uniqueness}
Let $S \subset \BBR^n$ be a subspace of dimension $p \le n$ and $A \in \BBR^{n \times n}$ be positive definite. If $\tilde{A} \in \BBR^{n \times n}$ satisfies the properties
\begin{enumerate}
    \item[(1a*)] $\tilde{A}$ is symmetric,
    \item[(2a)] $\rank(\tilde{A}) = p$,
    \item[(3a)] For all $x \in S$, $\tilde{A}x = A x$,
\end{enumerate}
then $\tilde{A} = A^{\parallel S}$.
\end{theo}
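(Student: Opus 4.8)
The plan is to show that any $\tilde A$ satisfying (1a*), (2a), (3a) must coincide with $A^{\parallel S}$ by pinning down its action on a full basis of $\BBR^n$. Since $A^{\parallel S}$ also satisfies these three properties (by Theorem \ref{theo: Apar Aperp properties}, noting (1a) implies (1a*)), it suffices to argue that (1a*)–(3a) determine the matrix uniquely. The natural decomposition of $\BBR^n$ here is not the orthogonal complement $S^\perp$ but the $A$-orthogonal complement: let $T \triangleq (AS)^\perp = \{\, x \in \BBR^n : x^\rmT A s = 0 \text{ for all } s \in S \,\}$. Since $A$ is positive definite, $\BBR^n = S \oplus T$ and $\dim T = n - p$. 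Property (3a) already fixes $\tilde A$ on $S$, so the whole argument reduces to showing $\tilde A x = 0$ for all $x \in T$, because then $\tilde A$ agrees with $A^{\parallel S}$ on $S$ (by (3a) applied to both) and on $T$ (both being zero there), hence on all of $\BBR^n$.

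The key steps, in order: (i) Define $T = (AS)^\perp$ and verify $\BBR^n = S \oplus T$ with $\dim T = n-p$, using positive definiteness of $A$. (ii) Show $\tilde A(S) = AS$: by (3a), $\tilde A x = Ax$ for $x \in S$, and since $A$ is invertible $AS$ has dimension $p$; combined with (2a), $\rank \tilde A = p$ forces $\mathcal R(\tilde A) = \tilde A(\BBR^n) = AS$. (iii) Use symmetry (1a*): $\mathcal N(\tilde A) = \mathcal R(\tilde A)^\perp = (AS)^\perp = T$. (iv) Conclude: for $x \in T = \mathcal N(\tilde A)$ we get $\tilde A x = 0$, which matches $A^{\parallel S} x$ since by property (3b)-type reasoning (or directly: $A^{\parallel S} = Av(v^\rmT A v)^{-1} v^\rmT A$ and $v^\rmT A x = 0$ for $x \in T$ since columns of $v$ span $S$), $A^{\parallel S} x = 0$. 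Together with (3a) on $S$, linearity gives $\tilde A = A^{\parallel S}$ on all of $\BBR^n$.

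The main obstacle — really the only nontrivial point — is step (iii): deducing that the nullspace of $\tilde A$ is exactly $T$. This needs that $\tilde A$ is symmetric (so $\mathcal N(\tilde A) = \mathcal R(\tilde A)^\perp$) together with the rank count from step (ii) forcing $\mathcal R(\tilde A)$ to be precisely $AS$ rather than merely containing it; the inclusion $AS \subseteq \mathcal R(\tilde A)$ comes from (3a), and equality then follows from $\rank \tilde A = p = \dim AS$. Once $\mathcal R(\tilde A) = AS$ is established, $\mathcal N(\tilde A) = (AS)^\perp = T$ is immediate from the symmetric rank–nullity relation, and the rest is bookkeeping. A small point worth stating carefully is why $A^{\parallel S}$ annihilates $T$: this is exactly property (3b) of Theorem \ref{theo: Apar Aperp properties} applied to the matrix $A^{\perp S}$ rephrased, or one line from the explicit formula \eqref{eqn: Aparallel definition}; I would just cite the explicit formula and the identity $v^\rmT A x = 0$ for $x \in T$.
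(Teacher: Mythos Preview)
Your proof is correct, and it takes a genuinely different and considerably more direct route than the paper's.

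The paper proceeds algebraically through the Moore--Penrose pseudoinverse: it writes $\tilde A = AXA$ with $X = A^{-1}\tilde A A^{-1}$, introduces the orthogonal projector $P = v(v^\rmT v)^{-1}v^\rmT$ onto $S$ and the matrix $Q = PAP$, then verifies (twice) the four Moore--Penrose conditions to conclude $X = Q^+ = v(v^\rmT A v)^{-1}v^\rmT$, hence $\tilde A = A^{\parallel S}$. Your argument instead works geometrically: from (3a) and (2a) you pin down $\mathcal R(\tilde A) = AS$, then use symmetry (1a*) to get $\mathcal N(\tilde A) = (AS)^\perp = T$, and finish by observing that $\BBR^n = S \oplus T$ (from positive definiteness of $A$) together with $A^{\parallel S}|_T = 0$ (one line from \eqref{eqn: Aparallel definition}). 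Your approach is shorter, uses only elementary range--nullspace duality for symmetric matrices, and makes the role of each hypothesis transparent; the paper's Moore--Penrose route is more computational and somewhat obscures why symmetry is the crucial ingredient. Incidentally, your subspace $T = (AS)^\perp$ is exactly the $S^{\perp_A}$ of Definition~\ref{defn: orthogonal complement in A inner product}, so your argument also dovetails naturally with the duality in Proposition~\ref{prop: duality}.
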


\begin{proof}
    See Appendix Section \ref{sec: subspace decomp proofs}.
\end{proof}


Next, Definition \ref{defn: A inner product} defines the $\langle \cdot , \cdot \rangle_A$ inner product for positive-definite matrix $A$ and Definition \ref{defn: orthogonal complement in A inner product} defines $S^{\perp_A}$, the orthogonal complement of subspace $S$ under this inner product.
Finally, Proposition \ref{prop: duality} shows a duality between the $\parallel$ and $\perp$ operators and the subspaces $S$ and $S^{\perp_A}$.


\begin{defn}
\label{defn: A inner product}
Let $A \in \BBR^{n \times n}$ be positive definite. Define the inner product $\langle \cdot , \cdot \rangle_A \colon \BBR^{n} \times \BBR^n \rightarrow \BBR$ as
\begin{align}
    \langle x,y \rangle_A \triangleq x^\rmT A y.
\end{align}
\end{defn}

\begin{defn}
\label{defn: orthogonal complement in A inner product}
Let $S \subset \BBR^n$ be a subspace and let $A \in \BBR^{n \times n}$ be positive definite. Define the orthogonal complement of $S$ under the inner product $\langle \cdot , \cdot \rangle_A$ as
\begin{align}
    S^{\perp_A} \triangleq \left\{x \in \BBR^m : \textnormal{for all $y \in S$, } \langle x,y \rangle_A = 0 \right\}.
\end{align}
\end{defn}


%

\begin{prop}[Duality]
\label{prop: duality}
Let $S \subset \BBR^n$ be a subspace of dimension $p \le n$ and let $A \in \BBR^{n \times n}$ be positive definite. 
%
Then,
\begin{align}
    A^{\parallel S} &= 
    A^{\perp (S^{\perp_A})} , \label{eqn: AparS duality} \\
    A^{\perp S} &= 
    A^{\parallel (S^{\perp_A})}. \label{eqn: AperpS duality}
\end{align}
\end{prop}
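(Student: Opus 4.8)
The plan is to verify the first identity $A^{\parallel S} = A^{\perp(S^{\perp_A})}$ by showing that $A^{\perp(S^{\perp_A})}$ satisfies the three characterizing properties of $A^{\parallel S}$ given in Theorem \ref{theo: A^parallel uniqueness}, namely (1a*) symmetry, (2a) rank $p$, and (3a) agreement with $A$ on $S$; the uniqueness theorem then forces equality. The second identity $A^{\perp S} = A^{\parallel(S^{\perp_A})}$ follows immediately by subtracting the first from $A = A^{\parallel S} + A^{\perp S} = A^{\parallel(S^{\perp_A})} + A^{\perp(S^{\perp_A})}$.

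First I would record the basic dimension fact that $\dim S^{\perp_A} = n - p$, since $\langle\cdot,\cdot\rangle_A$ is a genuine inner product (as $A$ is positive definite) and orthogonal complements behave dimensionally as in the Euclidean case. Then, applying Theorem \ref{theo: Apar Aperp properties} with the subspace $S^{\perp_A}$ in place of $S$: property (1b) gives that $A^{\perp(S^{\perp_A})}$ is positive semidefinite, hence symmetric, so (1a*) holds; property (2b) gives $\rank(A^{\perp(S^{\perp_A})}) = n - (n-p) = p$, so (2a) holds. The one remaining point is (3a): for all $x \in S$, $A^{\perp(S^{\perp_A})} x = A x$.

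To establish (3a), I would use the decomposition $A = A^{\parallel(S^{\perp_A})} + A^{\perp(S^{\perp_A})}$ together with property (3b) of Theorem \ref{theo: Apar Aperp properties} applied to $S^{\perp_A}$: for $x \in S^{\perp_A}$, $A^{\perp(S^{\perp_A})} x = 0$, equivalently $A^{\parallel(S^{\perp_A})} x = A x$. What I actually need, though, is the statement for $x \in S$, not $x \in S^{\perp_A}$. The key linear-algebra observation is that $\mathcal{R}(A^{\perp(S^{\perp_A})})$ equals $A S^{\perp_A \perp_A} = A S$ (using the remark after Theorem \ref{theo: Apar Aperp properties} that $\mathcal{R}(A^{\parallel T}) = AT$ and $\mathcal{N}(A^{\perp T}) = T$ for any subspace $T$, here with $T = S^{\perp_A}$, plus the double-complement identity $(S^{\perp_A})^{\perp_A} = S$). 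So I would instead argue directly: take $x \in S$ and write $x = u + w$ with $u \in \mathcal{N}(A^{\perp(S^{\perp_A})}) = S^{\perp_A}$ and $w$ in a complementary piece; more cleanly, since $A^{\perp(S^{\perp_A})}$ annihilates $S^{\perp_A}$ and $\BBR^n = S \oplus S^{\perp_A}$ (an $\langle\cdot,\cdot\rangle_A$-orthogonal direct sum), it suffices to check $A^{\perp(S^{\perp_A})} = A$ as maps restricted to $S$. For $x, x' \in S$ and $z \in S^{\perp_A}$ one has $z^\rmT A^{\perp(S^{\perp_A})} x$ determined by the action on $S$, and a short computation with the explicit formula \eqref{eqn: Aperp definition} for $A^{\perp(S^{\perp_A})}$ — choosing a basis of $S^{\perp_A}$ and using $\langle\cdot,\cdot\rangle_A$-orthogonality to $S$ — shows the correction term $A v (v^\rmT A v)^{-1} v^\rmT A$ kills exactly the $S^{\perp_A}$-part and leaves $Ax$ for $x \in S$.

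I expect the main obstacle to be the bookkeeping in (3a): one must be careful that the ``$\perp$'' in $S^{\perp_A}$ is the $A$-inner-product complement while the ``$\perp S$'' operator is defined via the formula \eqref{eqn: Aperp definition}, and that these two notions of orthogonality interact correctly. The cleanest route is probably to avoid the explicit formula entirely and argue purely from the uniqueness theorem: symmetry and rank are free from Theorem \ref{theo: Apar Aperp properties}, and for (3a) I would show $A^{-1}\mathcal{R}(A^{\perp(S^{\perp_A})}) = S$ and that $A^{\perp(S^{\perp_A})}$ restricted to $S$ is injective with image $AS$, then combine with the fact that $A^{\perp(S^{\perp_A})}$ and $A$ induce the same $\langle\cdot,\cdot\rangle_A$-pairing $S \times S \to \BBR$ because their difference $A^{\parallel(S^{\perp_A})}$ has row space inside $A S^{\perp_A}$, which is $\langle\cdot,\cdot\rangle$-orthogonal to $S$. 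Once (3a) is in hand, Theorem \ref{theo: A^parallel uniqueness} closes the argument and \eqref{eqn: AperpS duality} is immediate.
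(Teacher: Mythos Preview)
Your overall strategy --- verify that $A^{\perp(S^{\perp_A})}$ satisfies the three hypotheses of Theorem~\ref{theo: A^parallel uniqueness} and then deduce \eqref{eqn: AperpS duality} by subtraction --- is exactly the paper's approach, and your handling of (1a*) and (2a) via Theorem~\ref{theo: Apar Aperp properties} applied to the subspace $S^{\perp_A}$ is correct.

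Where you go astray is (3a): you are making it far harder than it is. The paper's argument is a single line using the explicit formula. Let $w = [w_1 \ \cdots \ w_{n-p}]$ with columns a basis of $S^{\perp_A}$. Then for any $x \in S$, the very definition of $S^{\perp_A}$ gives $w_i^\rmT A x = \langle w_i, x\rangle_A = 0$ for each $i$, i.e.\ $w^\rmT A x = 0$. Substituting into the explicit formula,
\[
A^{\perp(S^{\perp_A})} x \;=\; Ax - Aw(w^\rmT A w)^{-1}\, (w^\rmT A x) \;=\; Ax - 0 \;=\; Ax,
\]
and (3a) is done. Your worry that ``the two notions of orthogonality must interact correctly'' dissolves here: the operator $\perp$ in $A^{\perp T}$ is built from a basis of $T$, and the $A$-orthogonality of $S$ and $S^{\perp_A}$ is precisely the statement $w^\rmT A x = 0$ that makes the correction term vanish. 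There is no need to analyze ranges, null spaces, double complements, or pairings; the explicit formula you said you would ``probably avoid'' is in fact the cleanest route.
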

\begin{proof}
    See Appendix Section \ref{sec: subspace decomp proofs}.
\end{proof}

\section{SIFt-RLS}

To motivate this algorithm, suppose there exists true parameters $\theta \in \BBR^n$ such that, for all $k \ge 0$, $y_k = \phi_k \theta$. 
Then, at any $k \ge 0$, knowledge of the regressor $\phi_k \in \BBR^{p \times n}$ and measurement $y_k \in \BBR^p$ only provide information about the true parameters in the row space of $\phi_k$. 
For example, a regressor $\phi_k = [1 \ 1 \ 0]$ only informs about the sum of the first two parameters of $\theta \in \BBR^3$.
We call the row space of $\phi_k$, i.e. $\mathcal{R}({\phi}_k^\rmT)$, the \textit{information subspace}.

Note that in \eqref{eqn: R_k update}, $R_k$ is multiplied by $\lambda \in (0,1)$, which can be interpreted as forgetting uniformly over all directions.
Covariance windup in exponential forgetting RLS occurs when the information subspace does not contain particular directions over a large number of time steps, while forgetting is uniform.
As a result, persistent excitation conditions are often used to guarantee excitation in all directions.

We propose subspace of information forgetting recursive least squares, or SIFt-RLS, a directional forgetting algorithm which, at each step $k \ge 0$, only forgets in the information subspace of that step.
To summarize the algorithm, at each step $k \ge 0$, SIFt-RLS involves three parts: 
\begin{enumerate}
    \item[1.] \textit{Information Filtering}: First, we take a compact singular value decomposition (SVD) of the regressor $\phi_k \in \BBR^{ p \times n}$ to truncate any singular values smaller than a desired threshold. 
    This gives the filtered regressor $\bar{\phi}_k$ and filtered measurement $\bar{y}_k$.
    This step is critical to ensure the algorithm is numerically stable. 
    \item[2.] \textit{Subspace of Information Forgetting (SIFTing)}: Next, we decompose the information matrix $R_k$ using the subspace decomposition presented in section \ref{sec: subspace decomp} with the information subspace $\mathcal{R}(\bar{\phi}_k^\rmT)$.
    Forgetting is applied only to the component of $R_k$ parallel to the information subspace.
    \item[3.] \textit{Update}: Lastly, we compute the updated information matrix $R_{k+1}$ and parameter estimate $\theta_{k+1}$ using the filtered regressor and filtered measurement.
\end{enumerate} 

Furthermore, we also provide a way to directly update the covariance matrix $P_k \triangleq R_k^{-1}$ in steps 2 and 3 using the matrix inversion Lemma, which is more computationally efficient when $p \ll n$.
The next three subsections provide details on the three parts of SIFt-RLS. 
The algorithm is initialized with forgetting factor $\lambda \in (0,1)$, tuning parameter $\varepsilon > 0$, initial estimate of parameters $\theta_0 \in \BBR^n$, initial positive-definite information matrix $R_0 \in \BBR^{n \times n}$, and initial covariance matrix $P_0^{-1}$.

\subsection{Information Filtering}

To begin, let $\varepsilon > 0$ be a tuning parameter which, for all $k \ge 0$, will be used to truncate small singular values of $\phi_k$.
Next, let $\phi_k = U_k \Sigma_k V_k^\rmT$ be the compact SVD of $\phi_k$ with the singular values on the diagonal of $\Sigma_k$ in descending order.\footnote{If $p \le n$, then $U_k \in \BBR^{p \times p}$, $\Sigma_k \in \BBR^{p \times p}$, and $V_k \in \BBR^{n \times p}$. If $p > n$, then $U_k \in \BBR^{p \times n}$, $\Sigma_k \in \BBR^{n \times n}$, and $V_k \in \BBR^{n \times n}$.}\footnote{This is more computationally efficient than computing the full SVD of $\phi_k$. The compact SVD can, for example, be computed in MATLAB as \texttt{[U,S,V] = svd(phi,"econ")}.}\footnote{Note that, for computational efficiency, the right singular vectors $V_k$ do not need to be computed.} 
Let $q_k$ be the number of singular values of $\phi_k$ greater or equal to $\sqrt{\varepsilon}$ and define $\bar{U}_k \in \BBR^{p \times q_k}$ as the first $q_k$ columns of $U_k$. 
Then, define the filtered regressor, $\bar{\phi}_k \in \BBR^{q_k \times n}$, and filtered measurement $\bar{y}_k \in \BBR^{q_k}$ by
\begin{align}
    \bar{\phi}_k \triangleq \bar{U}_k^\rmT \phi_k, \label{eqn: phibar defn} \\
    \bar{y}_k \triangleq \bar{U}_k^\rmT y_k. \label{eqn: ybar defn}
\end{align}
Note that, by construction, all singular values of $\bar{\phi}_k$ are greater or equal to $\sqrt{\varepsilon}$.

\subsection{Subspace of Information Forgetting (SIFting)}
For all $k \ge 0$, define $R_k^{\parallel} \in \BBR^{n \times n}$ and $R_k^{\perp} \in \BBR^{n \times n}$ as
\begin{align}
    R_k^{\parallel} &\triangleq R_k^{\parallel \mathcal{R}(\bar{\phi}_k^\rmT)}
    = R_k \bar{\phi}_k^\rmT (\bar{\phi}_k R_k \bar{\phi}_k^\rmT)^{-1} \bar{\phi}_k R_k,
    \\
    R_k^{\perp} &\triangleq R_k^{\perp \mathcal{R}(\bar{\phi}_k^\rmT)}
    = R_k - R_k \bar{\phi}_k^\rmT (\bar{\phi}_k R_k \bar{\phi}_k^\rmT)^{-1} \bar{\phi}_k R_k,
\end{align}
where the operators $\parallel$ and $\perp$ are defined in Definition \ref{defin: subspace decomposition} and $\mathcal{R}(\bar{\phi}_k^\rmT)$ is the information subspace.
It will later be shown in Corollary \ref{cor: phi R phi^T well conditioned} that the matrix $\bar{\phi}_k R_k \bar{\phi}_k^\rmT$ is nonsingular and, under mild conditions, is well-conditioned.
Hence, for all $k \ge 0$, the information subspace decomposition of $R_k$ can be expressed as
\begin{align}
    \label{eqn: Rk decomp}
    R_k = R_k^{\parallel} + R_k^{\perp}.
\end{align}
Next, we forget in only the information subspace to get the \textit{SIFted information matrix} $\bar{R}_k \in \BBR^{n \times n}$, defined as
\begin{align}
    \bar{R}_k &\triangleq \lambda R_k^\parallel  + R_k^\perp \nonumber \\
    & = R_k - (1-\lambda) R_k \bar{\phi}_k^\rmT (\bar{\phi}_k R_k \bar{\phi}_k^\rmT)^{-1} \bar{\phi}_k R_k.
    \label{eqn: Rkbar defn}
\end{align}
Furthermore, we define the \textit{SIFted covariance matrix} as $\bar{P}_k \triangleq \bar{R}_k^{-1} \in \BBR^{n \times n}$.
Note that it will later be shown in Corollary \ref{cor: Rk pos def} that $\bar{R}_k$ is positive definite, and hence nonsingular.
The matrix inversion lemma (Lemma \ref{lem: matrix inversion lemma}) can also be used to express $\bar{P}_k$ as 
\begin{align}
    \label{eqn: SIFt Pkbar update}
    \bar{P}_k &= P_k + \frac{1-\lambda}{\lambda} \bar{\phi}_k^\rmT (\bar{\phi}_k R_k \bar{\phi}_k^\rmT )^{-1} \bar{\phi}_k.
\end{align}
%
%
\subsection{Update}

Lastly, for all $k \ge 0$, we compute the updated information matrix, $R_{k+1} \in \BBR^{n \times n}$, and updated parameter vector estimate, $\theta_{k+1} \in \BBR^n$ as
\begin{align}
    R_{k+1} &= \bar{R}_k + \bar{\phi}_k^\rmT \bar{\phi}_k, \label{eqn: SIFt Rk Update}
    \\
    \theta_{k+1} &= \theta_k + P_{k+1} \bar{\phi}_k^\rmT(\bar{y}_k-\bar{\phi}_k \theta_k), \label{eqn: SIFt theta_k update}
\end{align}
where $P_{k+1} \triangleq R_{k+1}^{-1} \in \BBR^{n \times n}$. It will later be shown in Corollary \ref{cor: Rk pos def} that $R_{k+1}$ is positive definite, and hence nonsingular.
Furthermore, note that the matrix inversion Lemma (Lemma \ref{lem: matrix inversion lemma}) can be used to express $P_{k+1}$ as
\begin{align}
    \label{eqn: SIFt Pk update}
    P_{k+1} &= \bar{P}_k - \bar{P}_k \bar{\phi}_k^\rmT (I_{q_k} + \bar{\phi}_k \bar{P}_k \bar{\phi}_k^\rmT)^{-1} \bar{\phi}_k \bar{P}_k.
\end{align}


\subsection{Implementation and Computational Cost}
\label{subsec: SIFt computational cost}

This subsection summarizes some subtleties in implementing SIFt-RLS which minimize computational cost and ensure numerical stability. 
We also analyze the computational complexity of SIFT-RLS and discuss when it is or is not efficient to use the matrix inversion Lemma to update the covariance matrix.

To begin our discussion on the computational cost of SIFt-RLS, note that, for all $k \ge 0$, the computational complexity of the compact SVD of $\phi_k$ is $\mathcal{O}(\min\{pn^2,p^2n\})$ (See Lecture 31 of \cite{trefethen2022numerical}).
Next, note that $\bar{R}_k$, defined in \eqref{eqn: Rkbar defn}, can be computed in $\mathcal{O}(q_k n^2)$ time complexity, where $q_k \le \min\{p,n\}$ be the number of singular values of $\phi_k$ larger than $\sqrt{\varepsilon}$. 
This is done by first computing $L_k \triangleq \bar{\phi}_k R_k$, then computing $\bar{R}_k = R_k - (1-\lambda)L_k^\rmT(L_k \bar{\phi}_k^\rmT)^{-1} L_k$.\footnote{Computing $L_k$ is important because standard left to right multiplication of \eqref{eqn: Rkbar defn} results in $\mathcal{O}(n^3)$ time complexity.}
A concern in practice is that the calculation of $\bar{R}_k$ over many steps may cause $\bar{R}_k$ to drift from symmetric due to round-off errors, resulting in numerical instability.
A simple solution is to recompute $\bar{R}_k \gets \frac{1}{2}(\bar{R}_k + \bar{R}_k^\rmT)$ at each step.


Furthermore, note that, for all $k \ge 0$, the updated parameter estimate $\theta_{k+1}$, given by \eqref{eqn: SIFt theta_k update} is most efficiently computed with the ordering $\theta_{k+1} = \theta_k + P_{k+1} [\bar{\phi}_k^\rmT(\bar{y}_k-\bar{\phi}_k \theta_k)]$. 
This computation requires the updated covariance matrix $P_{k+1}$, which can either be computed by first computing $\bar{P}_k$ using \eqref{eqn: SIFt Pkbar update} then computing $P_{k+1}$ using \eqref{eqn: SIFt Pk update}, or by directly inverting $R_{k+1}$.

Using the matrix inversion lemma, the matrix $\bar{P}_k$ can be computed in $\mathcal{O}(q_k n^2)$ as $\bar{P}_k = P_k + \frac{1-\lambda}{\lambda} \bar{\phi}_k^\rmT (L_k \bar{\phi}_k^\rmT )^{-1} \bar{\phi}_k$.
Then, $P_{k+1}$ can be computed in $\mathcal{O}(q_k n^2)$ by first computing $M_k \triangleq \bar{\phi}_k \bar{P}_k$, then computing $P_{k+1} = \bar{P}_k - M_k^\rmT (I_{q_k} + M_k \bar{\phi}_k^\rmT)^{-1} M_k$.\footnote{First computing $M_k$ is important because standard left to right matrix multiplication of \eqref{eqn: SIFt Pk update} results in $\mathcal{O}(n^3)$ time complexity.}
Similarly, $P_{k+1}$ may drift from symmetric when computed using the matrix inversion Lemma due to round-off errors. 
A solution is to recompute $P_{k+1} \gets \frac{1}{2}(P_{k+1} + P_{k+1}^\rmT)$ at each step.
In contrast, the computational complexity to compute $P_{k+1}$ by directly inverting $R_{k+1}$ is $\mathcal{O}(n^3)$.

In practice, however, it is only faster to compute $P_{k+1}$
using the matrix inversion lemma via \eqref{eqn: SIFt Pkbar update} and \eqref{eqn: SIFt Pk update} if $q_k \ll n$.
Note that $q_k \in \{0,1,\hdots, \min\{p,n\} \}$ may change at each step but $n$ remains constant.
Hence, we suggest to select $q_{\rm max} \in \{1,2,\hdots,\min\{p,n\}\}$ such that if $q_k \le q_{\rm max}$, $P_{k+1}$ is computed using \eqref{eqn: SIFt Pkbar update} and \eqref{eqn: SIFt Pk update}, otherwise $P_{k+1}$ is computed by directly inverting $R_{k+1}$.
Figure \ref{fig:SIFt MIL or not} gives data on which method is faster for various values of $q_k$ and $n$ as well as suggest values of $q_{\rm max}$ as a function of $n$.
It is important to note that the choice of $q_{\rm max}$ only affects computational cost, not the algorithm itself. 
Alternatively, for simpler implementation, one can select $q_{\rm max} = 0$ to always perform direct inversion or $q_{\rm max} = p$ to always use the matrix inversion lemma.

\begin{figure}[ht]
    \centering
    \includegraphics[width = .47 \textwidth]{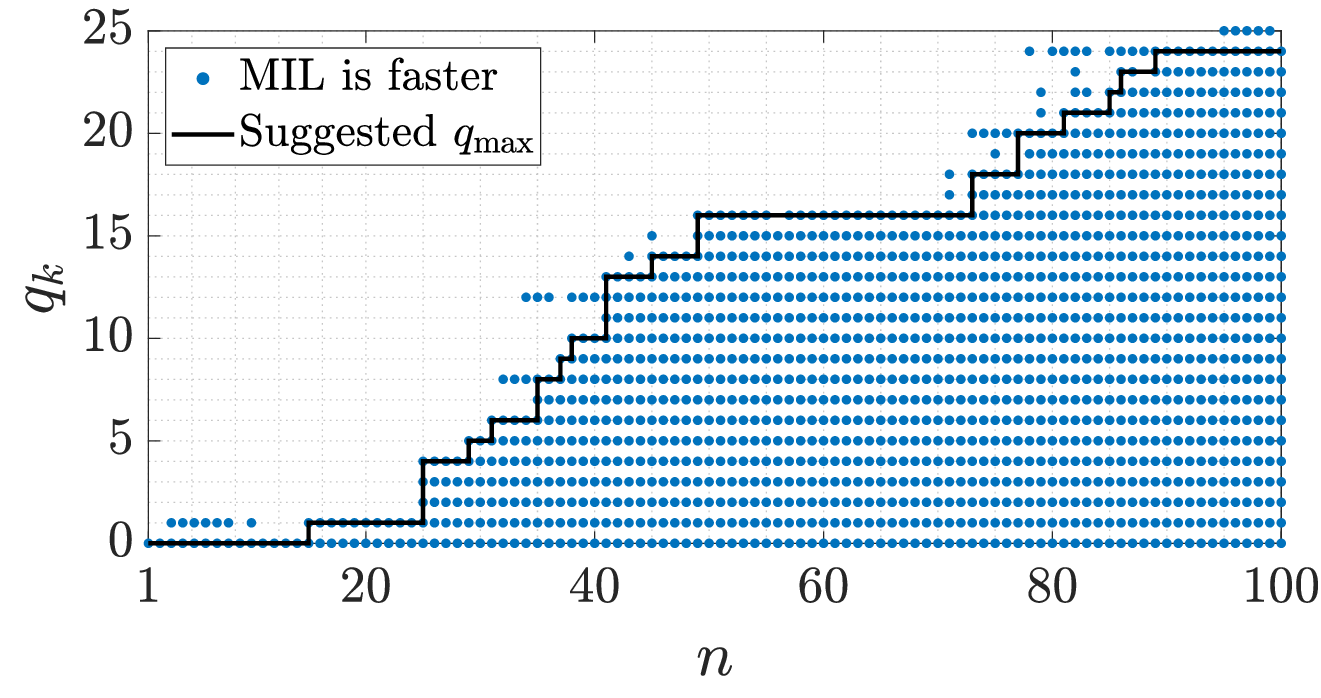}
    \caption{Comparison of computation time to compute $P_{k+1}$ in SIFt-RLS, tested in MATLAB on an i7-6600U processor with 16 GB of RAM. 
    Blue x indicates that, for the given values of $q_k$ and $n$, it is faster, on average, to compute $P_{k+1} \in \BBR^{n \times n}$ using the matrix inversion lemma via \eqref{eqn: SIFt Pkbar update} and \eqref{eqn: SIFt Pk update} than by direct inversion of $R_{k+1}$. 
    However, further testing has shown that these results may differ greatly between different machines.}
    \label{fig:SIFt MIL or not}
\end{figure}

To summarize, the algorithm SIFt-RLS can be performed in $\max\{\min\{pn^2,p^2n\},q_k n^2\}$ computational complexity per step.
While no assumptions are made whether $p < n$ or $p \ge n$, SIFt-RLS is best used when $p < n$.
In the case where $p < n$, the computational complexity of SIFt-RLS is $\max\{p^2 n , q_k n^2\}$ per step.
Note that if $p \ge n$ and, at step $k \ge 0$, all singular values of $\phi_k$ are greater or equal to $\sqrt{\varepsilon}$, then the information subspace is all of $\BBR^n$ and SIFt-RLS simplifies to exponential forgetting RLS, as shown in Proposition \ref{prop: SIFt becomes EF RLS}.
However, SIFt-RLS can still be still be useful in the case $p \ge n$ if it is expected that $\phi_k$ will have one or more singular values smaller than $\sqrt{\varepsilon}$.

\begin{prop}
    \label{prop: SIFt becomes EF RLS}
    Let $k \ge 0$, let $\phi_k \in \BBR^{p \times n}$, $y_k \in \BBR^p$, $\theta_k \in \BBR^n$, and let $R_k \in \BBR^{n \times n}$ be positive definite. 
    Also let $\varepsilon > 0$.
    Let $\theta_{k+1} \in \BBR^n$ and $R_{k+1} \in \BBR^{n \times n}$ be computed using \eqref{eqn: phibar defn}, \eqref{eqn: ybar defn}, \eqref{eqn: Rkbar defn}, \eqref{eqn: SIFt Rk Update}, and \eqref{eqn: SIFt theta_k update}.
    If $p \ge n$ and all singular values of $\phi_k$ are greater or equal to $\sqrt{\varepsilon}$, then
    \begin{align}
        R_{k+1} &= \lambda R_k + \phi_k^\rmT \phi_k, 
        \label{eqn: SIFt equivalent to EF Rk update}
        \\
        \theta_{k+1} &= \theta_k + P_{k+1} \phi_k^\rmT(y_k-\phi_k \theta_k).
        \label{eqn: SIFt equivalent to EF thetak update}
    \end{align}
\end{prop}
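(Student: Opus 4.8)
The plan is to show that under the stated hypothesis the information-filtering step is vacuous and the filtered regressor is a square invertible matrix, after which every SIFt-RLS formula collapses to its exponential-forgetting counterpart. First I would record the consequences of the hypothesis for the compact SVD $\phi_k = U_k \Sigma_k V_k^\rmT$. Since $p \ge n$, $\phi_k$ has exactly $\min\{p,n\} = n$ singular values, all of which are $\ge \sqrt{\varepsilon}$ by assumption, so $q_k = n$ and therefore $\bar{U}_k \in \BBR^{p \times n}$ is the \emph{entire} left factor $U_k$. Using $\bar{U}_k^\rmT U_k = U_k^\rmT U_k = I_n$, equation \eqref{eqn: phibar defn} gives $\bar{\phi}_k = \bar{U}_k^\rmT \phi_k = \Sigma_k V_k^\rmT \in \BBR^{n \times n}$, which is invertible because $\Sigma_k$ is a positive diagonal matrix and $V_k$ is orthogonal.

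Next I would establish the two auxiliary identities $\bar{\phi}_k^\rmT \bar{\phi}_k = \phi_k^\rmT \phi_k$ and $\bar{\phi}_k^\rmT \bar{y}_k = \phi_k^\rmT y_k$. The first follows immediately from $\bar{\phi}_k^\rmT \bar{\phi}_k = V_k \Sigma_k^2 V_k^\rmT = V_k \Sigma_k U_k^\rmT U_k \Sigma_k V_k^\rmT = \phi_k^\rmT \phi_k$. For the second, $\bar{\phi}_k^\rmT \bar{y}_k = \phi_k^\rmT \bar{U}_k \bar{U}_k^\rmT y_k$, and since $\bar{U}_k \bar{U}_k^\rmT = U_k U_k^\rmT$ is the orthogonal projector onto $\mathcal{R}(\phi_k)$, we have $(\bar{U}_k \bar{U}_k^\rmT)\phi_k = \phi_k$, hence $\phi_k^\rmT \bar{U}_k \bar{U}_k^\rmT = \phi_k^\rmT$ and the identity follows.

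Then I would evaluate the SIFted information matrix from \eqref{eqn: Rkbar defn}. Because $\bar{\phi}_k$ is a square invertible matrix, $(\bar{\phi}_k R_k \bar{\phi}_k^\rmT)^{-1} = \bar{\phi}_k^{-\rmT} R_k^{-1} \bar{\phi}_k^{-1}$, so the correction term telescopes:
\begin{align}
    R_k \bar{\phi}_k^\rmT (\bar{\phi}_k R_k \bar{\phi}_k^\rmT)^{-1} \bar{\phi}_k R_k
    = R_k \bar{\phi}_k^\rmT \bar{\phi}_k^{-\rmT} R_k^{-1} \bar{\phi}_k^{-1} \bar{\phi}_k R_k = R_k.
\end{align}
Thus $\bar{R}_k = R_k - (1-\lambda) R_k = \lambda R_k$, and substituting into \eqref{eqn: SIFt Rk Update} together with $\bar{\phi}_k^\rmT \bar{\phi}_k = \phi_k^\rmT \phi_k$ yields $R_{k+1} = \lambda R_k + \phi_k^\rmT \phi_k$, which is \eqref{eqn: SIFt equivalent to EF Rk update}. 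For the parameter update, I would compute $\bar{\phi}_k^\rmT(\bar{y}_k - \bar{\phi}_k \theta_k) = \bar{\phi}_k^\rmT \bar{y}_k - \bar{\phi}_k^\rmT \bar{\phi}_k \theta_k = \phi_k^\rmT y_k - \phi_k^\rmT \phi_k \theta_k = \phi_k^\rmT(y_k - \phi_k \theta_k)$ using the two auxiliary identities, and plug this into \eqref{eqn: SIFt theta_k update} to obtain \eqref{eqn: SIFt equivalent to EF thetak update}.

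There is no deep obstacle here; the argument is essentially SVD bookkeeping. The step requiring the most care is the dimension counting that forces $q_k = n$ and $\bar{U}_k = U_k$ (handling the $p = n$ and $p > n$ cases of the compact SVD uniformly), together with the projector identity $(\bar{U}_k\bar{U}_k^\rmT)\phi_k = \phi_k$ needed for $\bar{\phi}_k^\rmT \bar{y}_k = \phi_k^\rmT y_k$; everything after that is routine algebra exploiting the invertibility of $\bar{\phi}_k$.
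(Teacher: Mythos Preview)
Your proposal is correct and follows essentially the same approach as the paper: both arguments observe that the hypothesis forces $q_k = n$ and $\bar{U}_k = U_k$, deduce that $\bar{\phi}_k$ is square and invertible, collapse the SIFting correction term to $R_k$ so that $\bar{R}_k = \lambda R_k$, and then use the identities $\bar{\phi}_k^\rmT \bar{\phi}_k = \phi_k^\rmT \phi_k$ and $\bar{\phi}_k^\rmT \bar{y}_k = \phi_k^\rmT y_k$ to finish. Your justification of those two identities via the projector relation $U_k U_k^\rmT \phi_k = \phi_k$ is slightly more explicit than the paper's one-line appeal to semi-orthogonality, but the content is identical.
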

\begin{proof}
Let $\phi_k = U_k \Sigma_k V_k^\rmT$ be the compact SVD of $\phi_k$, where $U_k \in \BBR^{p \times n}$, $\Sigma_k \in \BBR^{n \times n}$, and $V_k \in \BBR^{n \times n}$.  
Since $p \ge n$ and all singular values of $\phi_k$ are greater or equal to $\sqrt{\varepsilon}$, it follows that $q_k = n$ and $\bar{U}_k = U_k$.
Hence, $\bar{\phi}_k = U_k^\rmT \phi_k \in \BBR^{n \times n}$ and $\bar{y}_k = U_k^\rmT y_k \in \BBR^n$. 
Moreover, since $U_k$ is semi-orthogonal, $\rank(\bar{\phi}_k) = n$, hence $\bar{\phi}_k$ is nonsingular.
Therefore, it follows that $(\bar{\phi}_k R_k \bar{\phi}_k^\rmT)^{-1} = \bar{\phi}_k^{-\rmT} P_k \bar{\phi}_k^{-1}$.
Substituting this equality into \eqref{eqn: Rkbar defn}, it follows that $\bar{R}_k = \lambda R_k$.
Furthermore, since $U_k$ is semi-orthogonal, it follows that $\bar{\phi}_k^\rmT \bar{\phi}_k = \phi_k^\rmT \phi_k$ and $\bar{\phi}_k^\rmT \bar{y}_k = \phi_k^\rmT y_k$.
Substituting into \eqref{eqn: SIFt Rk Update} and \eqref{eqn: SIFt theta_k update}, \eqref{eqn: SIFt equivalent to EF Rk update} and \eqref{eqn: SIFt equivalent to EF thetak update} follow.
\end{proof}

The implementation details discussed in this subsection are summarized in Algorithm \ref{alg:SIFt}.
Some recommendations for tuning parameters $\lambda \in (0,1)$, $\varepsilon > 0$, $\theta_0 \in \BBR^n$, and positive-definite $P_0 \in \BBR^{n \times n}$ are as follows.
$\theta_0$ and $P_0$ are a prior belief of the mean and covariance, respectively, of the parameters being estimated and can be tuned similarly as they would in other extensions of RLS. 
$\varepsilon$ can be determined based on the noise level of the data but should be small to avoid overly truncating the data.
Finally, while a forgetting factor close to $1$ is traditionally recommended to avoid covariance windup \cite{bruce2020convergence,paleologu2014practical,paleologu2008robust}, we have found $\lambda \in (0,1)$ can be tuned more aggressively (more small) in SIFt-RLS to track quickly changing parameters since forgetting only occurs in the information subspace. 
This is particularly useful if there is ample excitation in certain directions, allowing for identification of a subspace of parameters, while there is little excitation in other directions. 

\algnewcommand\algorithmicinput{\textbf{Tuning Parameters:}}
\algnewcommand\Tuning{\item[\algorithmicinput]}
\algnewcommand\algorithmicinputt{\textbf{Computational Parameters:}}
\algnewcommand\Numerical{\item[\algorithmicinputt]}

\begin{algorithm}[ht]
\caption{Subspace of Information Forgetting Recursive Least Squares (SIFt-RLS)}\label{alg:SIFt}
\begin{algorithmic}[1]

\Tuning $\lambda \in (0,1)$, $\varepsilon > 0$, $\theta_0 \in \BBR^n$, positive-definite $R_0 \in \BBR^{n \times n}$ and $P_0 \triangleq R_0^{-1}$.
\Numerical $q_{\rm max} \in [1,\min\{p,n\}]$.
\ForAll{$k \ge 0$}
    \State \uline{\textbf{1. Information Filtering:}} \vspace{2pt}
    \State Compute the compact SVD $\phi_k = U_k \Sigma_k V_k^\rmT$ with the singular values on the diagonal of $\Sigma_k$ in descending order.
            Let $q_k$ be the number of singular values of $\phi_k$ greater or equal to $\sqrt{\varepsilon}$. 
    \If{$q_k = 0$}        
    \State $R_{k+1} \gets R_k$, $\theta_{k+1} \gets \theta_k$, skip to step $k+1$.
    \EndIf
    \State Let $\bar{U}_k$ be the first $q_k$ columns of $U_k$.
    \State $\bar{\phi}_k \in \BBR^{q_k \times n} \gets \bar{U}_k^\rmT \phi_k$ 
    \State $\bar{y}_k \in \BBR^{q_k} \gets \bar{U}_k^\rmT y_k$ 
    \State \uline{\textbf{2. SIFting:}} \vspace{2pt}
    \State $L_k \gets \bar{\phi}_k R_k$
    \State $\bar{R}_k \gets R_k - (1-\lambda) L_k^\rmT (L_k \bar{\phi}_k^\rmT)^{-1} L_k$
    \State $\bar{R}_k \gets \frac{1}{2}(\bar{R}_k + \bar{R}_k^\rmT)$
    \If{$q_k \le q_{\rm max}$} 
    \State $\bar{P}_k \gets P_k + \frac{1-\lambda}{\lambda} \bar{\phi}_k^\rmT (L_k \bar{\phi}_k^\rmT )^{-1} \bar{\phi}_k.$
    \EndIf
    \State \uline{\textbf{3. Update:}} \vspace{2pt}
    \State $R_{k+1} \gets \bar{R}_k + \bar{\phi}_k^\rmT \bar{\phi}_k$
    \If{$q_k \le q_{\rm max}$} 
    \State $M_k \gets \bar{\phi}_k \bar{P}_k$ 
    \State $P_{k+1} \gets \bar{P}_k - M_k^\rmT (I_{q_k} + M_k \bar{\phi}_k^\rmT)^{-1} M_k$
    \State $P_{k+1} \gets \frac{1}{2}(P_{k+1} + P_{k+1}^\rmT)$
    \Else
    \State $P_{k+1} \gets R_{k+1}^{-1}$
    \EndIf
    \State $\theta_{k+1} \gets \theta_k + P_{k+1} [\bar{\phi}_k^\rmT(\bar{y}_k-\bar{\phi}_k \theta_k)]$
\EndFor
\end{algorithmic}
\end{algorithm}

\section{Covariance Bounds and Stability of SIFt-RLS}

This section discusses the theoretical guarantees of SIFt-RLS including bounds on the covariance matrix and estimation error stability assuming estimation of fixed parameters without noise.

\subsection{Covariance Bounds}

To begin, we show that, without any assumptions of persistent excitation, the eigenvalues of the information matrix $R_k$ in SIFt-RLS are lower bounded. 
Furthermore, we show that if the sequence of regressors $(\phi_k)_{k=0}^\infty$ is upper bounded, then the eigenvalues of the information matrix $R_k$ in SIFt-RLS are also upper bounded.
This result also immediately implies that the reciprocal expressions are bounds for the covariance matrix $P_k = R_k^{-1}$. 
These results are given in Theorem \ref{theo: Rk upper and lower bounds}.
Note that explicit bounds are given which improves upon the works of \cite{cao2000directional} and \cite{zhu2021recursive} which only state the existence of bounds.
Moreover, the bounds of Theorem \ref{theo: Rk upper and lower bounds} are stronger than those of \cite{zhu2021recursive}, see Appendix \ref{sec: issues with oblique} for details.

\begin{theo}
\label{theo: Rk upper and lower bounds}
Let $\lambda \in (0,1)$, let $\varepsilon > 0$, and let $R_0 \in \BBR^{n \times n}$ be positive definite. 
For all $k \ge 0$, let $\phi_k \in \BBR^{p \times n}$, let $\bar{\phi}_k \in \BBR^{q_k \times n}$ be defined by \eqref{eqn: phibar defn}, let $\bar{R}_k \in \BBR^{n \times n}$ be defined by \eqref{eqn: Rkbar defn}, and let $R_{k+1} \in \BBR^{n \times n}$ be defined by \eqref{eqn: SIFt Rk Update}. 
Then the following statements hold:
\begin{enumerate}
    \item For all $k \ge 0$,
    \begin{align}
        \eigmin(R_k) &\ge \min \{ \frac{\varepsilon}{1-\lambda}, \eigmin(R_0) \}, \label{eqn: Rk lower bound} 
        \\
        \eigmax(P_k) &\le \max \{ \frac{1-\lambda}{\varepsilon}, \eigmax(P_0) \}. \label{eqn: P_k upper bound}
    \end{align}
    \item If $(\phi_k)_{k=0}^\infty$ is bounded with upper bound $\beta \in (0,\infty)$, then, for all $k \ge 0$,
    \begin{align}
        \eigmax(R_k) &\le \max \{ \frac{\beta}{1-\lambda}, \eigmax(R_0) \} , \label{eqn: Rk upper bound}
        \\
        \eigmin(P_k) &\ge \min \{ \frac{1-\lambda}{\beta}, \eigmin(P_0) \}. \label{eqn: P_k lower bound}
    \end{align}
\end{enumerate}
\end{theo}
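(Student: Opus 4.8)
The plan is to prove both statements by induction on $k$, working in a single orthonormal coordinate system adapted to the information subspace $S_k \triangleq \mathcal{R}(\bar\phi_k^\rmT)$. Set $a \triangleq \min\{\tfrac{\varepsilon}{1-\lambda},\eigmin(R_0)\}$ and $b \triangleq \max\{\tfrac{\beta}{1-\lambda},\eigmax(R_0)\}$; the base cases hold since $a \le \eigmin(R_0)$ and $b \ge \eigmax(R_0)$. For the inductive step, the case $q_k = 0$ is trivial because then $R_{k+1} = R_k$, so assume $q_k \ge 1$ and pick an orthogonal $Q = \begin{bmatrix} V & W\end{bmatrix} \in \BBR^{n\times n}$ with $V \in \BBR^{n\times q_k}$ an orthonormal basis of $S_k$ and $W \in \BBR^{n\times(n-q_k)}$ an orthonormal basis of $S_k^\perp$; write $Q^\rmT R_k Q = \begin{bmatrix} P_{11} & P_{12}\\ P_{12}^\rmT & P_{22}\end{bmatrix}$ with $P_{11} = V^\rmT R_k V$, $P_{12} = V^\rmT R_k W$, $P_{22} = W^\rmT R_k W$.

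The heart of the argument is a clean recursion for $Q^\rmT R_{k+1}Q$. Since $V$ is a basis of $S_k$, Proposition \ref{prop: Apar change of basis} lets us write $R_k^\parallel = R_k V(V^\rmT R_k V)^{-1}V^\rmT R_k$, from which $Q^\rmT R_k^\parallel Q = \begin{bmatrix} P_{11} & P_{12}\\ P_{12}^\rmT & P_{12}^\rmT P_{11}^{-1} P_{12}\end{bmatrix}$. Also $\bar\phi_k W = 0$ (as $S_k^\perp = \mathcal{N}(\bar\phi_k)$), so $Q^\rmT\bar\phi_k^\rmT\bar\phi_k Q = \diag(D,0)$ with $D \triangleq V^\rmT\bar\phi_k^\rmT\bar\phi_k V \in \BBR^{q_k\times q_k}$. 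Substituting these into $R_{k+1} = R_k - (1-\lambda)R_k^\parallel + \bar\phi_k^\rmT\bar\phi_k$ (which follows from \eqref{eqn: Rkbar defn} and \eqref{eqn: SIFt Rk Update}) and simplifying gives
\begin{equation*}
Q^\rmT R_{k+1}Q = \lambda\,Q^\rmT R_k Q + \begin{bmatrix} D & 0 \\ 0 & (1-\lambda)P_{22|1}\end{bmatrix},
\end{equation*}
where $P_{22|1} \triangleq P_{22} - P_{12}^\rmT P_{11}^{-1}P_{12}$ is the Schur complement of $P_{11}$ in $Q^\rmT R_k Q$, equivalently $P_{22|1} = (W^\rmT P_k W)^{-1}$.

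Now I bound the two diagonal blocks and close the induction. The eigenvalues of $D = (\bar\phi_k V)^\rmT(\bar\phi_k V)$ are exactly the squared singular values of $\bar\phi_k$, so $\varepsilon I_{q_k}\preceq D$ always (every singular value of $\bar\phi_k$ is at least $\sqrt\varepsilon$ by the information-filtering step), and $D \preceq \svdmax(\phi_k)^2 I_{q_k} \preceq \beta I_{q_k}$ when $(\phi_k)_{k=0}^\infty$ is bounded by $\beta$. For $P_{22|1} = (W^\rmT P_k W)^{-1}$: the hypothesis $R_k \succeq a I_n$ gives $W^\rmT P_k W \preceq \tfrac1a I_{n-q_k}$, hence $P_{22|1}\succeq a I_{n-q_k}$; similarly $R_k \preceq b I_n$ gives $P_{22|1}\preceq b I_{n-q_k}$. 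For statement (1), subtracting $a I_n$ from the recursion gives
\begin{equation*}
Q^\rmT R_{k+1}Q - a I_n = \lambda\bigl(Q^\rmT R_k Q - a I_n\bigr) + \begin{bmatrix} D - (1-\lambda)a I_{q_k} & 0\\ 0 & (1-\lambda)\bigl(P_{22|1} - a I_{n-q_k}\bigr)\end{bmatrix},
\end{equation*}
and each of the three summands is positive semidefinite -- the first by the inductive hypothesis, the second because $(1-\lambda)a \le \varepsilon$, the third because $P_{22|1}\succeq a I_{n-q_k}$ -- so $R_{k+1}\succeq a I_n$, which is \eqref{eqn: Rk lower bound}. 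Statement (2) follows from the mirror computation $b I_n - Q^\rmT R_{k+1}Q = \lambda(b I_n - Q^\rmT R_k Q) + \diag\bigl((1-\lambda)b I_{q_k} - D,\ (1-\lambda)(b I_{n-q_k} - P_{22|1})\bigr)$, which is positive semidefinite since $(1-\lambda)b \ge \beta$ and $P_{22|1}\preceq b I_{n-q_k}$. Inverting ($P_k = R_k^{-1}$) converts \eqref{eqn: Rk lower bound} and \eqref{eqn: Rk upper bound} into \eqref{eqn: P_k upper bound} and \eqref{eqn: P_k lower bound}.

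I expect the only real difficulty to be choosing this coordinate system. The forgetting term $(1-\lambda)R_k^\parallel$ acts on $R_k S_k$ while $\bar\phi_k^\rmT\bar\phi_k$ acts on $S_k$, and an $R_k$-orthogonal splitting of $\BBR^n$, or a congruence by $R_k^{1/2}$, leaves cross terms that cannot be signed in one of the two bounds. The plain orthonormal splitting along $S_k$ and $S_k^\perp$ is exactly what makes the off-diagonal block of $Q^\rmT R_{k+1}Q$ equal to $\lambda$ times that of $Q^\rmT R_k Q$, collapsing everything to the scalar inequalities $(1-\lambda)a \le \varepsilon$ and $(1-\lambda)b \ge \beta$ that already drive the exponential-forgetting bounds of Proposition \ref{prop: EF RLS Covariance Bound with N=1}; the one extra ingredient is the observation that the Schur complement $P_{22|1} = (W^\rmT P_k W)^{-1}$ inherits the eigenvalue bounds of $R_k$, which carries the bound through the unexcited block.
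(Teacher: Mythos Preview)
Your proof is correct and follows essentially the same route as the paper: both argue by induction, pass to orthonormal coordinates $Q=[V\ W]$ adapted to $S_k=\mathcal{R}(\bar\phi_k^\rmT)$ and $S_k^\perp$, and derive the identical block recursion
\[
Q^\rmT R_{k+1}Q=\lambda\,Q^\rmT R_k Q+\begin{bmatrix}D&0\\0&(1-\lambda)\,W^\rmT R_k^\perp W\end{bmatrix},
\]
then bound the two diagonal blocks separately (the paper's $V_{1,k},V_{2,k}$ are your $V,W$, and its $V_{2,k}^\rmT R_k^\perp V_{2,k}$ is exactly your $P_{22|1}$).

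The one substantive difference is how the $(2,2)$ block is controlled. The paper proves a standalone Lemma~\ref{lem: Rkperp eigenvalue bounds} bounding the nonzero eigenvalues of $R_k^\perp$ between $\eigmin(R_k)$ and $\eigmax(R_k)$, using the min-max theorem and an auxiliary idempotent $\Omega_k=I_n-R_k^{1/2}\bar\phi_k^\rmT(\bar\phi_k R_k\bar\phi_k^\rmT)^{-1}\bar\phi_k R_k^{1/2}$; it then combines this with a Claim~2 identifying those nonzero eigenvalues with the spectrum of $W^\rmT R_k^\perp W$, and finishes via Weyl's inequality. Your observation that $W^\rmT R_k^\perp W$ is the Schur complement $P_{22|1}=(W^\rmT P_k W)^{-1}$ bypasses all of this: the bounds $aI\preceq P_{22|1}\preceq bI$ drop out of $aI\preceq R_k\preceq bI$ in one line, and subtracting $aI_n$ (respectively $bI_n$) from the recursion replaces the Weyl step with a direct $\succeq 0$ check. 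This is a genuine simplification, though the underlying skeleton is the same.
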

\begin{proof}
    See Appendix section \ref{sec: proof of covariance bounds} for a proof of \eqref{eqn: Rk lower bound} and \eqref{eqn: Rk upper bound}. Bounds \eqref{eqn: P_k upper bound} and \eqref{eqn: P_k lower bound} follow directly from \eqref{eqn: Rk lower bound} and \eqref{eqn: Rk upper bound} respectively since, for all $k \ge 0$, $P_k = R_k^{-1}$.
\end{proof}

Note that the bounds of Theorem \ref{theo: Rk upper and lower bounds} are strikingly analogous to the bounds of RLS with exponential forgetting in Proposition \ref{prop: EF RLS Covariance Bound with N=1}, where $(\phi_k)_{k=0}^\infty$ is persistently exciting with persistency window $N=1$.
This is a result of the information filtering step of SIFt-RLS truncating the singular values of $\phi_k$ smaller than $\sqrt{\varepsilon}$, and the SIFting step of SIFt-RLS forgetting only in the information subspace, $\mathcal{R}({\phi}_k^\rmT)$.

Next, Corollary \ref{cor: Rk pos def} shows that, for all $k \ge 0$, $R_k$ and $\bar{R}_k$ are positive definite. 
Furthermore, Corollary \ref{cor: phi R phi^T well conditioned} shows that, for all $k \ge 0$, the matrix $\bar{\phi}_k R_k \bar{\phi}_k^\rmT$ is nonsingular and, assuming bounded regressors, has a bounded condition number. 
This is ensures that computing $(\bar{\phi}_k R_k \bar{\phi}_k^\rmT)^{-1}$ in the SIFting step is well-conditioned.

\begin{cor}
\label{cor: Rk pos def}
    Consider the notation of Theorem \ref{theo: Rk upper and lower bounds}.
    For all $k \ge 0$, $R_k$ and $\bar{R}_k$ are positive definite.
\end{cor}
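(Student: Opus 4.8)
The plan is to handle the two claims separately: the positive definiteness of $R_k$ follows at once from the eigenvalue bound already proved in Theorem~\ref{theo: Rk upper and lower bounds}, while the positive definiteness of $\bar R_k$ follows from the structural properties of the information-subspace decomposition in Theorem~\ref{theo: Apar Aperp properties}. For $R_k$, I would simply invoke the lower bound \eqref{eqn: Rk lower bound}, $\eigmin(R_k) \ge \min\{\varepsilon/(1-\lambda),\ \eigmin(R_0)\}$: since $\varepsilon > 0$, $1-\lambda \in (0,1)$, and $R_0$ is positive definite (so $\eigmin(R_0) > 0$), the right-hand side is strictly positive, hence $R_k$ is positive definite for every $k \ge 0$. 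To avoid any appearance of circularity with the appendix proof of Theorem~\ref{theo: Rk upper and lower bounds}, one can instead argue by induction: $R_0$ is positive definite by hypothesis, and the inductive step is immediate from the $\bar R_k$ argument below together with $R_{k+1} = \bar R_k + \bar\phi_k^\rmT\bar\phi_k$.

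With $R_k$ positive definite in hand, I turn to $\bar R_k$. The information-filtering step ensures every singular value of $\bar\phi_k$ is at least $\sqrt\varepsilon > 0$, so $\bar\phi_k$ has full row rank and, by Lemma~\ref{lem: phi R phi^T PSD}, $\bar\phi_k R_k \bar\phi_k^\rmT$ is nonsingular; hence the decomposition $R_k = R_k^\parallel + R_k^\perp$ of \eqref{eqn: Rk decomp} is well-defined, and properties (1a) and (1b) of Theorem~\ref{theo: Apar Aperp properties} give that $R_k^\parallel$ and $R_k^\perp$ are both positive semidefinite. Since $\lambda \in (0,1)$ and $R_k^\perp$ is positive semidefinite, $(1-\lambda)R_k^\perp$ is positive semidefinite, i.e.\ $\lambda R_k^\perp \preceq R_k^\perp$; adding $\lambda R_k^\parallel$ to both sides gives
\[
\lambda R_k = \lambda\bigl(R_k^\parallel + R_k^\perp\bigr) \preceq \lambda R_k^\parallel + R_k^\perp = \bar R_k .
\]
As $\lambda R_k$ is positive definite and $\bar R_k - \lambda R_k$ is positive semidefinite, their sum $\bar R_k$ is positive definite.

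There is no deep step here; the only care needed is with well-definedness. One must be sure $\bar\phi_k R_k \bar\phi_k^\rmT$ is invertible at every step, which is exactly where $R_k$ being positive definite (just shown) and $\bar\phi_k$ having full row rank (guaranteed by the $\sqrt\varepsilon$ truncation) enter, and the degenerate case $q_k = 0$ should be noted: there $\bar\phi_k$ is empty, \eqref{eqn: Rkbar defn} reduces to $\bar R_k = R_k$, and the claim is trivial. The remaining subtlety is purely a matter of logical order --- checking that invoking Theorem~\ref{theo: Rk upper and lower bounds} is not circular, since its proof is deferred to the appendix --- which is why the self-contained induction is mentioned as a fallback.
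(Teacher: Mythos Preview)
Your proposal is correct and takes essentially the same approach as the paper: invoke the lower bound \eqref{eqn: Rk lower bound} for $R_k$, then use the positive-semidefiniteness of $R_k^\perp$ from Theorem~\ref{theo: Apar Aperp properties} to conclude $\bar R_k \succeq \lambda R_k \succ 0$. The paper reaches the same inequality via the equivalent rewriting $\bar R_k = \lambda R_k + (1-\lambda)R_k^\perp$, and your additional remarks on well-definedness, the $q_k=0$ case, and the circularity concern are sound extra care.
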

\begin{proof}
    Since $\min\{\frac{\varepsilon}{1-\lambda},\eigmin(R_0)\} > 0$, it follows from \eqref{eqn: Rk lower bound} that, for all $k \ge 0$, $R_k \succ 0$.
    Next, note that, for all $k \ge 0$, $\bar{R}_k$, defined in \eqref{eqn: Rkbar defn}, can be written as $\bar{R}_k = \lambda R_k + (1-\lambda) R_k^\perp$.
    It follows from Theorem \ref{theo: Apar Aperp properties} that $R_k^\perp$ is positive semidefinite, and hence $\bar{R}_k \succeq \lambda R_k \succ 0$.
\end{proof}

\begin{cor}
\label{cor: phi R phi^T well conditioned}
    Consider the notation of Theorem \ref{theo: Rk upper and lower bounds}.
    For all $k \ge 0$, $\bar{\phi}_k R_k \bar{\phi}_k^\rmT$ is nonsingular. 
    Moreover, if $(\phi_k)_{k=0}^\infty$ is bounded with upper bound $\beta \in (0,\infty)$, then, for all $k \ge 0$,
    \begin{align}
    \label{eqn: phi R phi^T well conditioned}
        \kappa(\bar{\phi}_k R_k \bar{\phi}_k^\rmT) \le \frac{\beta \max \{ \frac{\beta}{1-\lambda}, \eigmax(R_0) \}}{\varepsilon \min \{ \frac{\varepsilon}{1-\lambda}, \eigmin(R_0) \} }.
    \end{align}
\end{cor}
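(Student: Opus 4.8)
The plan is to sandwich $\bar{\phi}_k R_k \bar{\phi}_k^\rmT$ between scalar multiples of the identity, by combining the two-sided eigenvalue bounds on $R_k$ from Theorem \ref{theo: Rk upper and lower bounds} with the two-sided singular-value bounds on $\bar{\phi}_k$ enforced by the information-filtering step. Throughout I assume $q_k \ge 1$; if $q_k = 0$ then $\bar{\phi}_k R_k \bar{\phi}_k^\rmT$ is the empty $0 \times 0$ matrix and the claim is vacuous, consistent with Algorithm \ref{alg:SIFt} skipping step $k$.

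First I would record the facts about $\bar{\phi}_k$. Since $\bar{U}_k$ consists of the first $q_k$ left singular vectors of $\phi_k$, the singular values of $\bar{\phi}_k = \bar{U}_k^\rmT \phi_k$ are exactly the first $q_k$ singular values of $\phi_k$; by the definition of $q_k$ these are all at least $\sqrt{\varepsilon}$, and if $(\phi_k)_{k=0}^\infty$ is bounded with upper bound $\beta$ then (since $\phi_k^\rmT \phi_k \preceq \beta I_n$ forces $\svdmax(\phi_k)^2 \le \beta$) they are all at most $\sqrt{\beta}$. Hence
\begin{align}
    \varepsilon I_{q_k} \preceq \bar{\phi}_k \bar{\phi}_k^\rmT \preceq \beta I_{q_k}. \nonumber
\end{align}
In particular $\bar{\phi}_k$ has full row rank $q_k$, so by Lemma \ref{lem: phi R phi^T PSD} together with positive definiteness of $R_k$ (Corollary \ref{cor: Rk pos def}), $\bar{\phi}_k R_k \bar{\phi}_k^\rmT$ is positive definite; this proves nonsingularity with no boundedness assumption needed.

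For the conditioning bound, positive definiteness of $R_k$ gives $\eigmin(R_k) I_n \preceq R_k \preceq \eigmax(R_k) I_n$; conjugating by $\bar{\phi}_k$ and using the display above yields
\begin{align}
    \varepsilon\, \eigmin(R_k)\, I_{q_k} \preceq \bar{\phi}_k R_k \bar{\phi}_k^\rmT \preceq \beta\, \eigmax(R_k)\, I_{q_k}. \nonumber
\end{align}
Therefore $\kappa(\bar{\phi}_k R_k \bar{\phi}_k^\rmT) \le \beta\, \eigmax(R_k) / (\varepsilon\, \eigmin(R_k))$, and substituting the lower bound \eqref{eqn: Rk lower bound} on $\eigmin(R_k)$ and the upper bound \eqref{eqn: Rk upper bound} on $\eigmax(R_k)$ from Theorem \ref{theo: Rk upper and lower bounds} produces exactly \eqref{eqn: phi R phi^T well conditioned}.

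The only step requiring any care — and hence the ``main obstacle'', although it is minor — is the identification of the singular values of $\bar{\phi}_k$ with the retained singular values of $\phi_k$, equivalently the bound $\varepsilon I_{q_k} \preceq \bar{\phi}_k \bar{\phi}_k^\rmT \preceq \beta I_{q_k}$ and the full-row-rank consequence; once that is in place, the remainder is a direct chaining of inequalities already available in the excerpt.
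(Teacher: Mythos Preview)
Your proof is correct. The overall strategy matches the paper's --- bound $\svdmax$ and $\svdmin$ of $\bar{\phi}_k R_k \bar{\phi}_k^\rmT$ separately using the singular-value bounds on $\bar{\phi}_k$ and the eigenvalue bounds on $R_k$ from Theorem~\ref{theo: Rk upper and lower bounds} --- but the execution is somewhat different and more elementary. The paper obtains the upper bound via submultiplicativity of $\svdmax$ and the lower bound via the min-max characterization (Lemma~\ref{lem: min-max theorem}), arguing step by step that $\svdmin(\bar{\phi}_k R_k \bar{\phi}_k^\rmT) \ge \svdmin(\bar{\phi}_k R_k)\,\svdmin(\bar{\phi}_k^\rmT) \ge \svdmin(\bar{\phi}_k)\,\svdmin(R_k)\,\svdmin(\bar{\phi}_k^\rmT)$ using full rank at each stage. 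You instead work directly in the Loewner order: conjugating $\eigmin(R_k) I_n \preceq R_k \preceq \eigmax(R_k) I_n$ by $\bar{\phi}_k$ and then applying $\varepsilon I_{q_k} \preceq \bar{\phi}_k \bar{\phi}_k^\rmT \preceq \beta I_{q_k}$ gives both bounds at once. Your route avoids invoking Lemma~\ref{lem: min-max theorem} and is a bit cleaner; the paper's route is more hands-on with singular values but yields the identical numerical bound. Both rely on the same ingredients: Corollary~\ref{cor: Rk pos def}, Lemma~\ref{lem: phi R phi^T PSD}, the construction of $\bar{\phi}_k$ guaranteeing $\svdmin(\bar{\phi}_k) \ge \sqrt{\varepsilon}$, and Theorem~\ref{theo: Rk upper and lower bounds}.
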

\begin{proof}
    It follows from Corollary \ref{cor: Rk pos def} that $R_k$ is positive definite.
    Next, since $\rank(\bar{\phi}_k) = q_k$ by construction, it follows from Lemma \ref{lem: phi R phi^T PSD} that $\bar{\phi}_k R_k \bar{\phi}_k^\rmT$ is positive definite, hence nonsingular. 
    
    Next, suppose $(\phi_k)_{k=0}^\infty$ is bounded with upper bound $\beta \in (0,\infty)$.
    Since $\svdmax(\cdot)$ is submultiplicative, it follows that $\svdmax(\bar{\phi}_k R_k \bar{\phi}_k^\rmT) \le \svdmax(\bar{\phi}_k) \svdmax(R_k) \svdmax(\bar{\phi}_k^\rmT)$.
    Furthermore, since $(\phi_k)_{k=0}^\infty$ is bounded with upper bound $\beta \in (0,\infty)$, $(\bar{\phi}_k)_{k=0}^\infty$ is also bounded with upper bound $\beta \in (0,\infty)$.
    Therefore, $\svdmax(\bar{\phi}_k) \le \sqrt{\beta}$.
    This combined with \eqref{eqn: Rk upper bound} imply that $\svdmax(\bar{\phi}_k R_k \bar{\phi}_k^\rmT) \le \beta \max \{ \frac{\beta}{1-\lambda}, \eigmax(R_0) \}$.

    Next, as a consequence of the min-max theorem given in Lemma \ref{lem: min-max theorem}, note that 
    \begin{align*}
        \svdmin(\bar{\phi}_k R_k \bar{\phi}_k^\rmT) = \min_{x \in \BBR^{q_k}, x\ne 0} \frac{\Vert \bar{\phi}_k R_k \bar{\phi}_k^\rmT x \Vert}{\Vert x \Vert}.
    \end{align*}
    Since $\bar{\phi}_k^\rmT$ has full column rank, $x \ne 0$ implies that $\bar{\phi}_k^\rmT x \ne 0$. Therefore, we can write
    \begin{align*}
        \min_{x \in \BBR^{q_k}, x\ne 0} & \frac{\Vert \bar{\phi}_k R_k \bar{\phi}_k^\rmT x \Vert}{\Vert x \Vert}
        = \min_{x \in \BBR^{q_k} , x\ne 0} \frac{\Vert \bar{\phi}_k R_k \bar{\phi}_k^\rmT x \Vert}{\Vert \bar{\phi}_k^\rmT x \Vert } \frac{\Vert \bar{\phi}_k^\rmT x \Vert}{\Vert x \Vert}
        \\
        & \ge \min_{y \in \BBR^n,y\ne 0} \frac{\Vert \bar{\phi}_k R_k y \Vert}{\Vert y \Vert }
        \min_{x \in \BBR^{q_k} , x\ne 0} \frac{\Vert \bar{\phi}_k^\rmT x \Vert}{\Vert x \Vert} 
        \\
        & = \svdmin(\bar{\phi}_k R_k) \svdmin(\bar{\phi}_k^\rmT).
    \end{align*}
    Furthermore, since $\bar{\phi}_k$ has full row rank, it can be shown by similar reasoning that $\svdmin(\bar{\phi}_k R_k) \ge \svdmin(\bar{\phi}_k) \svdmin(R_k)$.
    By construction of $\bar{\phi}_k$, $\svdmin(\bar{\phi}_k) = \svdmin(\bar{\phi}_k^\rmT) \ge \sqrt{\varepsilon}$.
    Combined with \eqref{eqn: Rk lower bound}, it follows that $\svdmin(\bar{\phi}_k R_k \bar{\phi}_k^\rmT) \ge \varepsilon \svdmin(R_k) \ge \varepsilon \min \{ \frac{\varepsilon}{1-\lambda}, \eigmin(R_0) \}$.

    Substituting the upper bound of $\svdmax(\bar{\phi}_k R_k \bar{\phi}_k^\rmT)$ and lower bound of $\svdmin(\bar{\phi}_k R_k \bar{\phi}_k^\rmT)$ into definition \eqref{eq::Cond_numb_Defn} yields \eqref{eqn: phi R phi^T well conditioned}.
\end{proof}

\subsection{Stability}

Next, for the analysis of this section, we make the assumption that there exist fixed parameters $\theta \in \BBR^n$ such that, for all $k \ge 0$, 
\begin{align}
    \label{eqn: y = phi theta}
    y_k = \phi_k \theta.
\end{align}
Furthermore, for all $k \ge 0$, we define the parameter estimation error $\tilde{\theta}_k \in \BBR^n$ by
\begin{align}
    \label{eqn: thetatilde defn}
    \tilde{\theta}_k \triangleq \theta_k - \theta.
\end{align}
Substituting \eqref{eqn: thetatilde defn} into \eqref{eqn: SIFt theta_k update}, it then follows that, for all $k \ge 0$,
\begin{align}
\label{eqn: theta tilde update}
    \tilde{\theta}_{k+1} = (I_n - P_{k+1} \bar{\phi}_k^\rmT \bar{\phi}_k) \tilde{\theta}_k.
\end{align}
Note that \eqref{eqn: theta tilde update} is a linear time-varying system with an equilibrium point $\tilde{\theta}_k \equiv 0$.
Thus, we can study the stability of the estimation error equilibrium point $\tilde{\theta}_k \equiv 0$.
Theorem \ref{theo: SIFt stability} uses the results of \cite{lai2023generalized} to show that this the equilibrium point $\tilde{\theta}_k \equiv 0$ is uniformly asymptotically stable without further assumptions and globally uniformly exponentially stable under persistent excitation. 
For definitions of uniform asymptotic stability and global uniform exponential stability, see Definition 13.7 in \cite[pp. 783, 784]{Haddad2008}.

\begin{theo}
\label{theo: SIFt stability}
Let $\lambda \in (0,1)$, let $\varepsilon > 0$, let $R_0 \in \BBR^{n \times n}$ be positive definite, and let $\theta_0 \in \BBR^n$. 
For all $k \ge 0$, let $\phi_k \in \BBR^{p \times n}$, let $\bar{\phi}_k \in \BBR^{q_k \times n}$ be defined by \eqref{eqn: phibar defn}, let $\bar{R}_k \in \BBR^{n \times n}$ be defined by \eqref{eqn: Rkbar defn}, let $R_{k+1} \in \BBR^{n \times n}$ be defined by \eqref{eqn: SIFt Rk Update}, and let $\theta_{k+1} \in \BBR^n$ be defined by \eqref{eqn: SIFt theta_k update}.
Finally, assume there exists $\theta \in \BBR^n$ such that, for all $k \ge 0$, \eqref{eqn: y = phi theta} holds.
Then, the following two statements hold:
    \begin{enumerate}
        \item The equilibrium $\tilde{\theta}_k \equiv 0$ of \eqref{eqn: theta tilde update} is uniformly Lyapunov stable.
        \item If $\{\bar{\phi}_k\}_{k=0}^\infty$ is persistently exciting, then the equilibrium $\tilde{\theta}_k \equiv 0$ of \eqref{eqn: theta tilde update} is globally uniformly exponentially stable.
    \end{enumerate}
\end{theo}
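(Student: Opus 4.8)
The plan is to put the error recursion \eqref{eqn: theta tilde update} into the standard information-matrix form of a forgetting RLS scheme and then invoke the stability results of \cite{lai2023generalized}, after verifying that the information matrices produced by SIFt-RLS meet the structural hypotheses required there. First I would rewrite \eqref{eqn: theta tilde update}: since $R_{k+1} = \bar{R}_k + \bar{\phi}_k^\rmT\bar{\phi}_k$ by \eqref{eqn: SIFt Rk Update} and $P_{k+1} = R_{k+1}^{-1}$, we have $I_n - P_{k+1}\bar{\phi}_k^\rmT\bar{\phi}_k = R_{k+1}^{-1}(R_{k+1} - \bar{\phi}_k^\rmT\bar{\phi}_k) = R_{k+1}^{-1}\bar{R}_k$, so \eqref{eqn: theta tilde update} reads $\tilde{\theta}_{k+1} = R_{k+1}^{-1}\bar{R}_k\tilde{\theta}_k$, which is exactly the error dynamics of a forgetting RLS scheme whose ``forgotten information matrix'' is $\bar{R}_k$.

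The second step is to check the facts that the analysis of \cite{lai2023generalized} requires. (i) $\bar{R}_k$ is symmetric (immediate from \eqref{eqn: Rkbar defn}) and positive definite (Corollary \ref{cor: Rk pos def}). (ii) Forgetting does not create information: by \eqref{eqn: Rkbar defn}, $\bar{R}_k = R_k - (1-\lambda)R_k^\parallel$ with $R_k^\parallel \succeq 0$ by property (1a) of Theorem \ref{theo: Apar Aperp properties}, hence $\bar{R}_k \preceq R_k$; moreover $R_{k+1} = \bar{R}_k + \bar{\phi}_k^\rmT\bar{\phi}_k \succeq \bar{R}_k$. (iii) The information matrix is uniformly bounded below: $R_k \succeq \min\{\varepsilon/(1-\lambda),\eigmin(R_0)\}I_n \succ 0$ for all $k$ by Theorem \ref{theo: Rk upper and lower bounds}(1), equivalently $P_k$ is uniformly bounded above. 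These are the inputs under which \cite{lai2023generalized} yields uniform Lyapunov stability of $\tilde{\theta}_k \equiv 0$, giving statement 1. The underlying mechanism is the Lyapunov function $V_k(\tilde{\theta}) \triangleq \tilde{\theta}^\rmT R_k\tilde{\theta} = \Vert\tilde{\theta}\Vert_{R_k}^2$: using (ii) and the elementary fact that $0 \prec A \preceq B$ implies $AB^{-1}A \preceq A$ (applied with $A = \bar{R}_k$, $B = R_{k+1}$) gives $V_{k+1}(\tilde{\theta}_{k+1}) = \tilde{\theta}_k^\rmT\bar{R}_k R_{k+1}^{-1}\bar{R}_k\tilde{\theta}_k \le \tilde{\theta}_k^\rmT\bar{R}_k\tilde{\theta}_k \le V_k(\tilde{\theta}_k)$, so $V_k$ is nonincreasing along trajectories, while (iii) supplies the uniform comparison of $V_k$ with $\Vert\tilde{\theta}\Vert^2$.

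For statement 2 the extra hypothesis is persistent excitation of $\{\bar{\phi}_k\}$, and I would quantify the decrement of $V_k$ over one persistency window. From the identity above together with the matrix inversion lemma, $V_k(\tilde{\theta}_k) - V_{k+1}(\tilde{\theta}_{k+1}) \ge \tilde{\theta}_k^\rmT(\bar{R}_k - \bar{R}_k R_{k+1}^{-1}\bar{R}_k)\tilde{\theta}_k = \tilde{\theta}_k^\rmT\bar{\phi}_k^\rmT(I_{q_k} + \bar{\phi}_k\bar{R}_k^{-1}\bar{\phi}_k^\rmT)^{-1}\bar{\phi}_k\tilde{\theta}_k$, a weighted squared norm of $\bar{\phi}_k\tilde{\theta}_k$ whose weight is lower bounded using the eigenvalue bounds on $R_k$ and $\bar{R}_k$ (and Corollary \ref{cor: phi R phi^T well conditioned} when boundedness of the regressors is available). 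Summing over $i = k,\dots,k+N$, using that $V_k$ nonincreasing keeps $\tilde{\theta}_i$ comparable to $\tilde{\theta}_k$ in $R_i$-norm, and applying the persistent-excitation inequality $\alpha I_n \preceq \sum_{i=k}^{k+N}\bar{\phi}_i^\rmT\bar{\phi}_i$, yields $V_{k+N+1}(\tilde{\theta}_{k+N+1}) \le \rho\,V_k(\tilde{\theta}_k)$ for some $\rho \in (0,1)$ independent of $k$; geometric decay of $V_k$ plus the uniform comparison $c_1\Vert\tilde{\theta}\Vert^2 \le V_k(\tilde{\theta}) \le c_2\Vert\tilde{\theta}\Vert^2$ then gives global uniform exponential stability. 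Equivalently, one simply verifies that SIFt-RLS meets the hypothesis list of the exponential-stability result of \cite{lai2023generalized}, namely items (i)--(iii) above and persistent excitation of $\bar{\phi}_k$.

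The main obstacle is the exponential-stability part: making the per-window contraction factor $\rho < 1$ explicit and uniform in $k$ requires carrying the eigenvalue bounds on $R_i$, $\bar{R}_i$, and $\bar{\phi}_i R_i\bar{\phi}_i^\rmT$ through the entire window and combining them with the persistent-excitation inequality and the $V$-monotone control of the drift of $\tilde{\theta}_i$. By contrast, the reduction to the form $\tilde{\theta}_{k+1} = R_{k+1}^{-1}\bar{R}_k\tilde{\theta}_k$ and the verification of items (i)--(iii), which deliver statement 1, are routine given the results already established in the paper.
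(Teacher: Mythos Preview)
Your proposal is correct and follows essentially the same approach as the paper: cast the SIFt update in the form $R_{k+1} = R_k - F_k + \bar{\phi}_k^\rmT\bar{\phi}_k$ with $F_k = (1-\lambda)R_k^{\parallel} \succeq 0$, verify the structural conditions (positive-semidefiniteness of $F_k$, uniform lower bound on $\bar{R}_k = R_k - F_k$, and the covariance bound on $P_k$) using Theorem \ref{theo: Apar Aperp properties} and Theorem \ref{theo: Rk upper and lower bounds}, and then invoke Theorem~2 of \cite{lai2023generalized} for both statements. The paper's proof does exactly this and nothing more; your additional explicit Lyapunov sketch $V_k(\tilde{\theta}) = \tilde{\theta}^\rmT R_k \tilde{\theta}$ and the per-window contraction outline are simply the mechanism behind that cited result, which the paper leaves entirely to the reference.
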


\begin{proof}
    This result is based on Theorem 2 of \cite{lai2023generalized}.
    Note that, for all $k \ge 0$, \eqref{eqn: SIFt Rk Update} can be written as 
    \begin{align*}
        R_{k+1} = R_k - F_k + \bar{\phi}_k^\rmT \bar{\phi}_k,
    \end{align*}
    where $F_k \triangleq (1-\lambda)R_k^\parallel$. Note that, for all $k \ge 0$, $1-\lambda > 0$ and that by Theorem \ref{theo: Apar Aperp properties}, $R_k^\parallel \succeq 0$, hence $F_k \succeq 0$. Therefore, condition \textit{A1)} of \cite{lai2023generalized} is satisfied. 
    
    Next, note that, for all $k \ge 0$, $R_k - F_k = \lambda R_k^\parallel + R_k^\perp$. Then, since $R_k^\perp \succeq 0$ by Theorem \ref{theo: Apar Aperp properties}, it follows that
    \begin{align*}
        R_k - F_k \succeq \lambda R_k^\parallel + \lambda R_k^\perp = \lambda R_k.
    \end{align*}
    Furthermore, it follows from \eqref{eqn: Rk lower bound} of Theorem \ref{theo: Rk upper and lower bounds} that, for all $k \ge 0$,
    \begin{align*}
        R_k - F_k \succeq \lambda \min \{ \frac{\varepsilon}{1-\lambda}, \eigmin(R_0) \} I_n,
    \end{align*}
    and hence
    \begin{align*}
        (R_k - F_k)^{-1} \preceq \frac{1}{\lambda} \max \{ \frac{1-\lambda}{\varepsilon}, \eigmax(P_0) \} I_n.
    \end{align*}
    Therefore, condition \textit{A2)} of \cite{lai2023generalized} is satisfied. Finally, \eqref{eqn: Rk upper bound} of Theorem \ref{theo: Rk upper and lower bounds} implies that, for all $k \ge 0$, 
    \begin{align*}
        \min \{ \frac{1-\lambda}{\beta} , \eigmin(P_0) \} I_n \preceq P_k .
    \end{align*}
    Hence, condition \textit{A2)} of \cite{lai2023generalized} is satisfied. Thus, by statement 2) of Theorem 2 of \cite{lai2023generalized}, the equilibrium $\tilde{\theta}_k \equiv 0$ of \eqref{eqn: theta tilde update} is uniformly Lyapunov stable.
    
    Moreover, if $\{\bar{\phi}_k\}_{k=0}^\infty$ is persistently exciting, then condition \textit{A4)} of \cite{lai2023generalized} is satisfied. It then follows from statement 4) of Theorem 2 of \cite{lai2023generalized}, the equilibrium $\tilde{\theta}_k \equiv 0$ of \eqref{eqn: theta tilde update} is globally uniformly exponentially stable.
\end{proof}

\section{Numerical Example}


%
We consider the follow example of identifying $n = 4$ parameters with $p = 2$ measurements at each step. For all $0 \le k \le 1200$, consider the true parameters $\theta_{{\rm true},k} \in \BBR^4$, defined
\begin{align}
    \theta_{{\rm true},k} &\triangleq 
    \begin{bmatrix}
        \theta^1_{{\rm true},k} & \theta^2_{{\rm true},k} & \theta^3_{{\rm true},k} & \theta^4_{{\rm true},k}
    \end{bmatrix}^\rmT \nonumber
    \\
    &= \begin{bmatrix}
        \sin (\frac{\pi}{60}k) & \cos (\frac{k}{60}) & \sin (\frac{k}{225}) & \cos (\frac{k}{225})
    \end{bmatrix}^\rmT.
\end{align}
For all $0 \le k < 400$, let both rows of $\phi_k \in \BBR^{2 \times 4}$ be i.i.d. sampled from $\mathcal{N}(0,\diag(1,1,10^{-4},10^{-4}))$ and, for all $400 \le k < 800$, let both rows of $\phi_k \in \BBR^{2 \times 4}$ be i.i.d. sampled from $\mathcal{N}(0,\diag(10^{-4},10^{-4},1,1))$.
Lastly, for all $800 \le k \le 1200$, let $\phi_k = \sigma_k [0 \ 2 \ 1 \ 0] + \nu_k$ where $\sigma_k$ is i.i.d. sampled from $\mathcal{N}(0,I_2)$ and both rows of $\sigma_k \in \BBR^{2 \times 4}$ are i.i.d. sampled from $\mathcal{N}(0,10^{-4} I_4)$.
Furthermore, for all $0 \le k < 1200$, let $y_k = (\phi_k + v_k)\theta_{{\rm true},k} + w_k$, where both rows of $v_k \in \BBR^{2 \times 4}$ are i.i.d. sampled from $\mathcal{N}(0,\frac{1}{100}I_4)$ and $w_k$ is i.i.d. sampled from $\mathcal{N}(0,\frac{1}{100}I_2)$, i.e. regressor noise and measurement noise.

The result of this sampled data is,  between $0 \le k < 400$, excitation is sufficient to identify $\theta^1_{{\rm true},k}$ and $\theta^2_{{\rm true},k}$, which form a basis for the excited subspace $S^{12} \triangleq \{[x_1 \ x_2 \ 0 \ 0] \colon x_1,x_2 \in \BBR\}$, but insufficient to identify $\theta^3_{{\rm true},k}$ and $\theta^4_{{\rm true},k}$, which form a basis for the unexcited subspace $(S^{12})^\perp = S^{34} \triangleq \{[0 \ 0 \ x_3 \ x_4] \colon x_3,x_4 \in \BBR\}$.
Then, the reverse is the case between $400 \le k < 800$.
Finally, between $800 \le k \le 1200$, excitation is sufficient to identify $\theta^\parallel_{{\rm true},k} \in \BBR$, defined as
\begin{equation}
    \theta^\parallel_{{\rm true},k} \triangleq \frac{2}{3} \theta^2_{{\rm true},k} + \frac{1}{3} \theta^3_{{\rm true},k},
\end{equation}
which forms a basis for the excited subspace $S^\parallel \triangleq \{[0 \ 2x \ x \ 0] \colon x \in \BBR\}$.
However, excitation is insufficient to identify  $\theta^1_{{\rm true},k}$, $\theta^4_{{\rm true},k}$, and $\theta^\perp_{{\rm true},k} \in \BBR$, defined as
\begin{align}
    \theta^\perp_{{\rm true},k} \triangleq \frac{1}{3} \theta^2_{{\rm true},k} - \frac{2}{3} \theta^3_{{\rm true},k},
\end{align}
which form a basis for the unexcited subspace $S^\perp \triangleq \{[x_1 \ x \ -2 x \ x_4] \colon x_1,x,x_4 \in \BBR\}$. 

We now consider the performance of SIFt-RLS in identifying $\theta_{{\rm true},k}$.
We initialize $P_0 = I_4$, $\theta_0 = 0_{4 \times 1}$ and let $\lambda = 0.5$ and $\varepsilon = 10^{-4}$.\footnote{We also select $q_{\rm max} = 1$ which only affects computation time.}
The parameter tracking of SIFT-RLS is shown in Figure \ref{fig: Tracking}, where, for all $k \ge 0$, $\theta^i_{k}$ is the $i^{\rm th}$ element of $\theta_k$.
Between $0 \le k < 400$, SIFt-RLS tracks $\theta^1_{{\rm true},k}$ and $\theta^2_{{\rm true},k}$ with little change to its estimate of the $\theta^3_{{\rm true},k}$ and $\theta^4_{{\rm true},k}$ and vice versa for $400 \le k < 800$.
Similarly, between $800 \le k \le 1200$, SIFt-RLS tracks $\theta^\parallel_{{\rm true},k}$ with little change to its estimate of $\theta^1_{{\rm true},k}$, $\theta^4_{{\rm true},k}$, and $\theta^\perp_{{\rm true},k}$.

\begin{figure}[ht]
    \centering
    \includegraphics[width = .47 \textwidth]{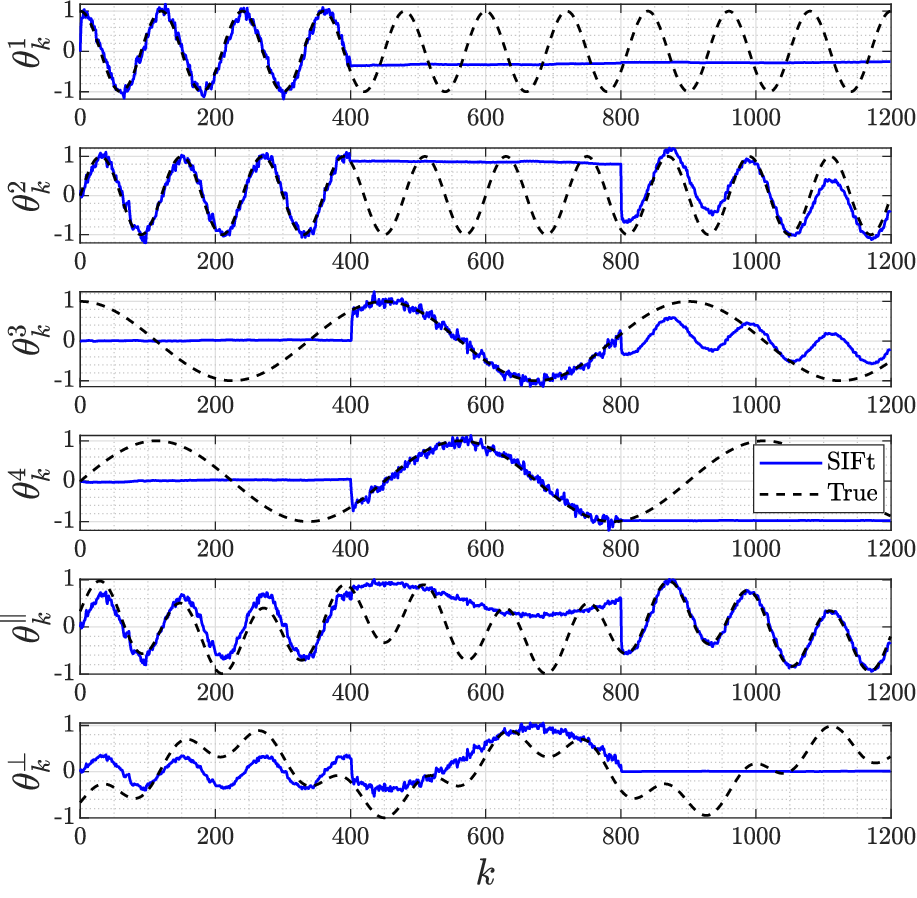}
    \caption{Estimated parameters $\theta^1_k$, $\theta^2_k$, $\theta^3_k$, $\theta^4_k$, $\theta^\parallel_k$, and $\theta^\perp_k$ using SIFt-RLS (blue) and true parameters $\theta^1_{{\rm true},k}$, $\theta^2_{{\rm true},k}$, $\theta^3_{{\rm true},k}$, $\theta^4_{{\rm true},k}$, $\theta^\parallel_{{\rm true},k}$, and $\theta^\perp_{{\rm true},k}$ (black).}
    \label{fig: Tracking}
\end{figure}

Next, we compare SIFt-RLS to RLS with no forgetting (NF), RLS with exponential forgetting (EF), and three other directional forgetting algorithms from the literature: variable direction forgetting (VD) \cite{goel2020recursive}, directional forgetting (DF) \cite{bittanti1990convergence}, and multiple forgetting (MF) \cite{vahidi2005recursive,fraccaroli2015new}.
EF and VD are tuned with a conservative $\lambda = 0.95$ to delay covariance windup. DF is tuned with the same $\lambda = 0.5$ of SIFt. MF is tuned with $\Lambda \triangleq [\lambda_1 \ \lambda_2 \ \lambda_3 \ \lambda_4] = [0.9 \ 0.9 \ 0.95 \ 0.95]$ to forget more agressively in the first and second parameters.
Finally, we test MF with a priori knowledge of which subspaces will be excited (MF a priori), where time-varying forgetting factor $\Lambda$ is tuned to only forget in the excited subspace. 
Tuning parameters are summarized in Table \ref{table:tuning}.
\begin{table}[ht]\centering
\caption{RLS Algorithms Tuning Parameters}
\setlength\tabcolsep{6pt} 
\renewcommand{\arraystretch}{1.1} 
\begin{tabular}{@{}ll@{}}
\toprule[1pt] 
Algorithm & Tuning Parameters 
\\
\midrule
SIFt & $\lambda = 0.5$, $\varepsilon = 10^{-4}$
\\
NF & --- 
\\ 
EF & $\lambda = 0.95$
\\
VD & $\lambda = 0.95$, $\varepsilon = 10^{-4}$
\\
DF & $\lambda = 0.5$
\\
MF & $\Lambda \triangleq [\lambda_1 \ \lambda_2 \ \lambda_3 \ \lambda_4] = [0.9 \ 0.9 \ 0.95 \ 0.95]$ 
\\
MF a priori & $\Lambda = \begin{cases}
    [0.5 \ 0.5 \ 1 \ 1] & 0 \le k < 400, \\
    [1 \ 1 \ 0.5 \ 0.5] & 400 \le k < 800, \\
    [1 \ 0.6 \ 0.85 \ 1] & 800 \le k \le 1200, \\
\end{cases}$
\\
\bottomrule[1pt]
\end{tabular}
\label{table:tuning}
\end{table}

To more easily compare the performance of these seven different algorithms in Figure \ref{fig: error}, we show, over $0 \le k \le 800$, the error in the $S^{12}$ subspace, $e^{12}_k \in \BBR$, and the error in the $S^{34}$ subspace, $e^{34}_k \in \BBR$, defined as
\begin{align}
    e^{12}_k &\triangleq \sqrt{(\theta^1_k - \theta^1_{{\rm true},k})^2 + (\theta^2_k - \theta^2_{{\rm true},k})^2}, \\
    e^{34}_k &\triangleq \sqrt{(\theta^3_k - \theta^3_{{\rm true},k})^2 + (\theta^4_k - \theta^4_{{\rm true},k})^2}.
\end{align}
Furthermore, Figure \ref{fig: error} also shows, between $800 \le k \le 1200$, the error in the $S^\parallel$ subspace, $e^{\parallel}_k \in \BBR$, and the error in the $S^\perp$ subspace, $e^{\perp}_k \in \BBR$, defined as
\begin{align*}
    e^{\parallel}_k &\triangleq \sqrt{(\theta^\parallel_k - \theta^\parallel_{{\rm true},k})^2}, \\
    e^{\perp}_k \hspace{-1.25pt} &\triangleq \hspace{-1pt} \sqrt{(\theta^1_k - \theta^1_{{\rm true},k})^2 + (\theta^4_k - \theta^4_{{\rm true},k})^2 + (\theta^\perp_k - \theta^\perp_{{\rm true},k})^2}.
\end{align*}
Note that NF is unable to track changing parameters and results in the largest errors in the excited subspace. 
On the contrary, while EF, VD, DF, and MF are able to track the true parameters in the excited subspaces, they all result in large error in the unexcited subspaces due to covariance windup.
DF performs the best of these methods, particularly between $0 \le k \le 800$, but still suffers from covariance windup between $800 \le k \le 1200$. 
While an aggressive $\lambda = 0.5$ is used for DF, slower windup still occurs with a more conservative $\lambda$.
Finally, MF a priori achieved very comparable performance to SIFt, with tracking in the excited subspace and little change in estimates in the unexcited subspace.
However, this required careful tuning and a priori knowledge of which subspaces would be excited.
\begin{figure*}[ht]
    \centering
    \includegraphics[width = .99 \textwidth]{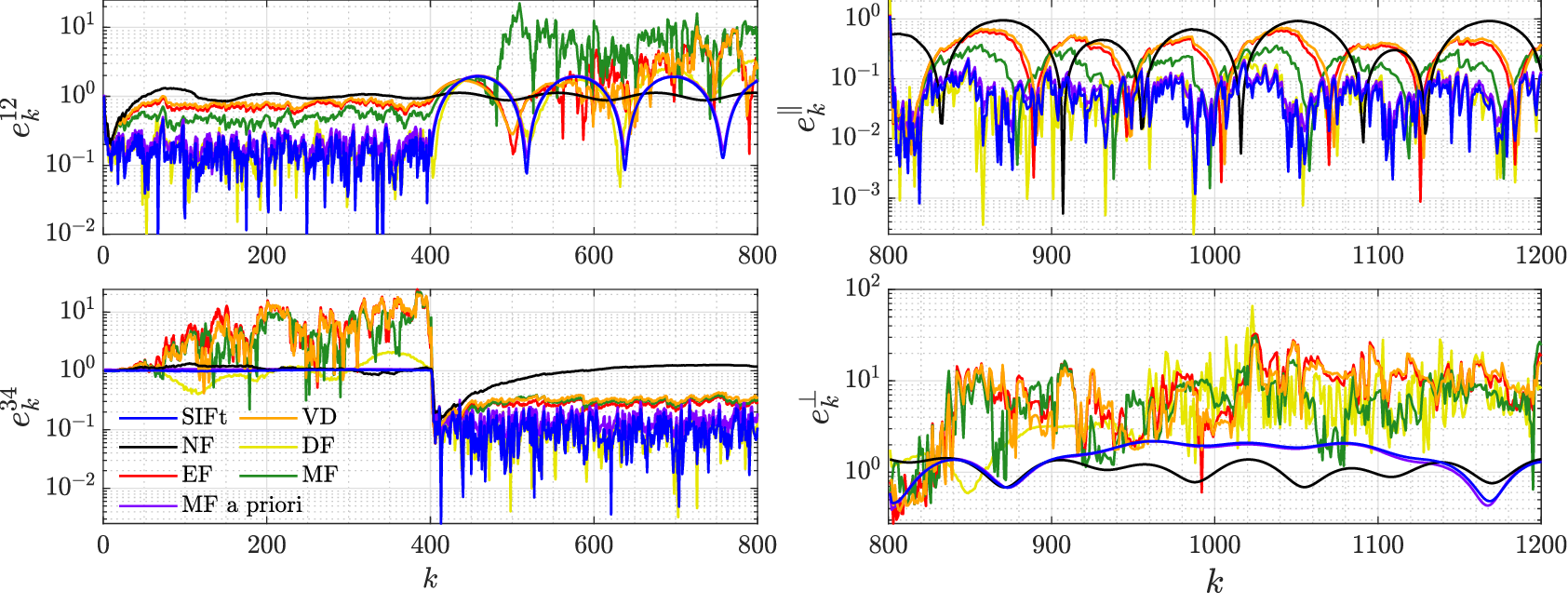}
    \caption{Over $0 \le k \le 800$, $e_k^{12}$ and $e_k^{34}$ show the parameter estimation error in the subspaces $S^{34} \triangleq \{[x_1 \ x_2 \ 0 \ 0]^\rmT \colon x_1,x_2 \in \BBR\}$ and $S_{34} \triangleq \{[0 \ 0 \ x_3 \ x_4]^\rmT \colon x_3,x_4 \in \BBR\}$, respectively. Subspace $S^{12}$ is excited over $0 \le k < 400$ while $S^{34}$ is not, and vice versa over $400 \le k < 800$.
    Next, over $800 \le k \le 1200$, $e_k^{\parallel}$ and $e_k^{\perp}$ show the parameter estimation error in the subspaces $S^\parallel \triangleq \{[0 \ 2x \ x \ 0] \colon x \in \BBR\}$ and $S^\perp \triangleq \{[x_1 \ x \ -2 x \ x_4] \colon x_1,x,x_4 \in \BBR\}$, respectively. Subspace $S^\parallel$ is excited over $800 \le k \le 1200$ while $S^\perp$ is not.
    }
    \label{fig: error}
\end{figure*}

As another useful metric, we define $\Delta^{12}_k \in \BBR$, $\Delta^{34}_k \in \BBR$, and $\Delta^{\perp}_k \in \BBR$ over $k \in [1,400]$, $k \in [401,800]$, and $k \in [801,1200]$, respectively, as
\begin{align*}
    \Delta^{12}_k &\triangleq \sqrt{(\theta^1_k - \theta^1_0)^2 + (\theta^2_k - \theta^2_{0})^2}, \\
    \Delta^{34}_k &\triangleq \sqrt{(\theta^3_k - \theta^3_{400})^2 + (\theta^4_k - \theta^4_{400})^2}, \\
    \Delta^{\perp}_k &\triangleq \sqrt{(\theta^1_k - \theta^1_{800})^2 + (\theta^4_k - \theta^4_{800})^2 + (\theta^\perp_k - \theta^\perp_{800})^2}.
\end{align*}
These quantities show how much parameter estimates change in the unexcited subspaces.
Figure \ref{fig:Delta} shows that NF, SIFt, and MF a priori makes only small changes to parameter estimates in the unexcited subspaces, while EF, VD, DF, and MF significantly update estimates of parameters in the unexcited subspace despite lack of useful new information in these directions.
\begin{figure*}[ht]
    \centering
    \includegraphics[width = .98\textwidth]{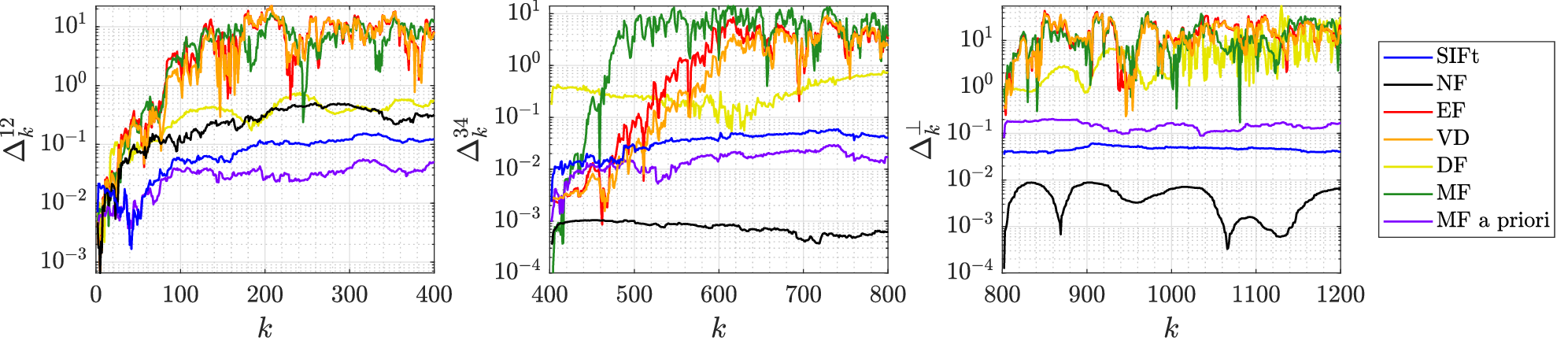}
    \caption{$\Delta_k^{12}$, $\Delta_k^{34}$, and $\Delta_k^\perp$ show the magnitude of the change in parameter estimates in the unexcited subspace over $0 \le k < 400$, $400 \le k < 800$, and $800 \le k \le 1200$, respectively.}
    \label{fig:Delta}
\end{figure*}

As a final metric, Figure \ref{fig: spectral_radius} shows the spectral radius of the covariance matrix $P_k$ for the various RLS algorithms. 
The spectral radius $\rho(P_k)$ becomes very small in NF, indicating parameter estimates is no longer changing. 
In EF, VD, DF, and MF, the spectral radius becomes very large, indicating covariance windup and sensitivity to noise.
SIFt and MF a priori both maintain a spectral radius close to $1$.

\begin{figure}[ht]
    \centering
    \includegraphics[width = .47 \textwidth]{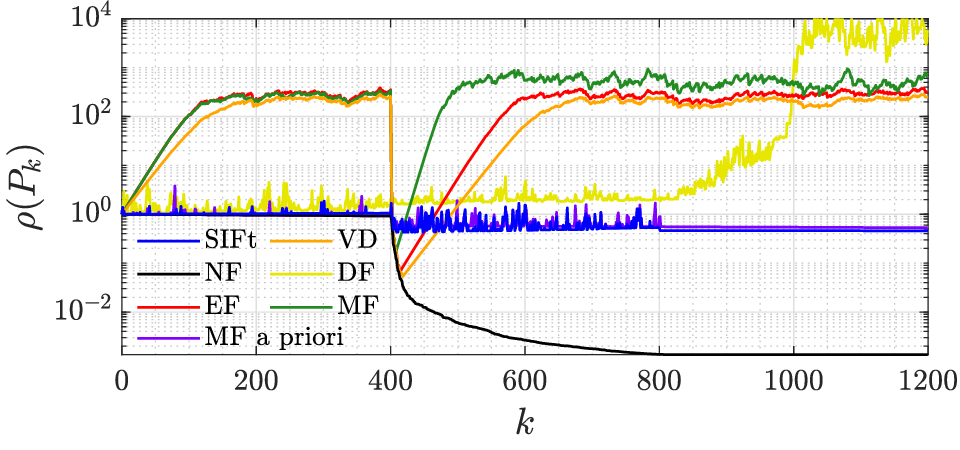}
    \caption{Spectral radius of the covariance matrix at step $k$.}
    \label{fig: spectral_radius}
\end{figure}

\section{Conclusion}

This work develops SIFt-RLS, a directional forgetting algorithm which, at each step, forgets only in the information subspace. 
The key tool used in SIFt-RLS is a subspace decomposition of a positive-definite matrix, which was developed and analyzed in Section \ref{sec: subspace decomp}.
The properties of this decomposition are used to derive explicit covariance bounds for SIFt-RLS and to develop sufficient conditions for the stability of the parameter estimation error dynamics. 
Moreover, an interesting parallel is made between the covariance bounds of SIFt-RLS and the covariance bounds of exponential forgetting RLS under persistent excitation with persistency window $1$.
A numerical example show the benefits of SIFt-RLS when the data collected is not persistently exciting.

We have presented a simple version of SIFt-RLS, in which forgetting in the information subspace is uniform and constant-rate, and no forgetting is applied in the orthogonal complement of the information subspace. 
However, a natural area of future interest is combining SIFt-RLS with other forgetting algorithms. 
One idea is to apply nonuniform forgetting and/or variable-rate forgetting to the information subspace.
Another idea is to apply a separate forgetting method, for example resetting \cite{Lai2023Exponential}, to the orthogonal complement of the information subspace.
While some other forgetting methods require persistent excitation to guarantee bounds on the covariance matrix \cite{Dasgupta1987Asymptotically,bittanti1990convergence,goel2020recursive}, a question of interest is whether SIFt-RLS combined with these methods can guarantee similar bounds without persistent excitation, as was shown in this work for exponential forgetting.

\begin{ack}                               
This work was supported by the NSF Graduate Research Fellowship under Grant DGE 1841052.  
\end{ack}

\bibliographystyle{plain}        
\bibliography{refs}           

\begin{thebibliography}{10}

\bibitem{aastrom2013adaptive}
Karl~J {\AA}str{\"o}m and Bj{\"o}rn Wittenmark.
\newblock {\em Adaptive control}.
\newblock Courier Corporation, 2013.

\bibitem{bernstein2009matrix}
Dennis~S Bernstein.
\newblock {\em Matrix mathematics: theory, facts, and formulas}.
\newblock Princeton university press, 2009.

\bibitem{bittanti1990convergence}
Sergio Bittanti, Paolo Bolzern, and M~Campi.
\newblock Convergence and exponential convergence of identification algorithms with directional forgetting factor.
\newblock {\em Automatica}, 26(5):929--932, 1990.

\bibitem{bruce2020convergence}
Adam~L Bruce, Ankit Goel, and Dennis~S Bernstein.
\newblock Convergence and consistency of recursive least squares with variable-rate forgetting.
\newblock {\em Automatica}, 119:109052, 2020.

\bibitem{cao2000directional}
Liyu Cao and Howard Schwartz.
\newblock A directional forgetting algorithm based on the decomposition of the information matrix.
\newblock {\em Automatica}, 36(11):1725--1731, 2000.

\bibitem{Chu1998rank}
Moody~T. Chu, R.~E. Funderlic, and Gene~H. Golub.
\newblock Rank modifications of semidefinite matrices associated with a secant update formula.
\newblock {\em SIAM Journal on Matrix Analysis and Applications}, 20(2):428--436, 1998.

\bibitem{Dasgupta1987Asymptotically}
S.~Dasgupta and Yih-Fang Huang.
\newblock Asymptotically convergent modified recursive least-squares with data-dependent updating and forgetting factor for systems with bounded noise.
\newblock {\em IEEE Transactions on Information Theory}, 33(3):383--392, 1987.

\bibitem{fortescue1981implementation}
TR~Fortescue, Lester~S Kershenbaum, and B~Erik Ydstie.
\newblock Implementation of self-tuning regulators with variable forgetting factors.
\newblock {\em Automatica}, 17(6):831--835, 1981.

\bibitem{fraccaroli2015new}
Francesco Fraccaroli, Andrea Peruffo, and Mattia Zorzi.
\newblock A new recursive least squares method with multiple forgetting schemes.
\newblock In {\em 2015 54th IEEE conference on decision and control (CDC)}, pages 3367--3372. IEEE, 2015.

\bibitem{goel2020recursive}
Ankit Goel, Adam~L Bruce, and Dennis~S Bernstein.
\newblock Recursive least squares with variable-direction forgetting: Compensating for the loss of persistency [lecture notes].
\newblock {\em IEEE Control Systems Magazine}, 40(4):80--102, 2020.

\bibitem{goodwin1983deterministic}
GC~Goodwin, EK~Teoh, and H~Elliott.
\newblock Deterministic convergence of a self-tuning regulator with covariance resetting.
\newblock In {\em IEEE Proceedings D-Control Theory and Applications}, volume 130, pages 6--8, 1983.

\bibitem{Haddad2008}
W.~M. Haddad and V.~Chellaboina.
\newblock {\em Nonlinear Dynamical Systems and Control: A Lyapunov-based Approach}.
\newblock Princeton University Press, 2008.

\bibitem{horn2012matrix}
Roger~A Horn and Charles~R Johnson.
\newblock {\em Matrix analysis}.
\newblock Cambridge university press, 2012.

\bibitem{islam2019recursive}
Syed Aseem~Ul Islam and Dennis~S Bernstein.
\newblock Recursive least squares for real-time implementation [lecture notes].
\newblock {\em IEEE Control Systems Magazine}, 39(3):82--85, 2019.

\bibitem{johnstone1982exponential}
Richard~M Johnstone, C~Richard Johnson~Jr, Robert~R Bitmead, and Brian~DO Anderson.
\newblock Exponential convergence of recursive least squares with exponential forgetting factor.
\newblock {\em Systems \& Control Letters}, 2(2):77--82, 1982.

\bibitem{kulhavy1984tracking}
R~Kulhav{\`y} and M~K{\'a}rn{\`y}.
\newblock Tracking of slowly varying parameters by directional forgetting.
\newblock {\em IFAC Proceedings Volumes}, 17(2):687--692, 1984.

\bibitem{Lai2023Exponential}
Brian Lai and Dennis~S. Bernstein.
\newblock Exponential resetting and cyclic resetting recursive least squares.
\newblock {\em IEEE Control Systems Letters}, 7:985--990, 2023.

\bibitem{lai2023generalized}
Brian Lai and Dennis~S Bernstein.
\newblock Generalized forgetting recursive least squares: Stability and robustness guarantees.
\newblock {\em arXiv preprint arXiv:2308.04259}, 2023.

\bibitem{lai2021regularization}
Brian Lai, Syed Aseem~Ul Islam, and Dennis~S Bernstein.
\newblock Regularization-induced bias and consistency in recursive least squares.
\newblock In {\em 2021 American Control Conference (ACC)}, pages 3987--3992. IEEE, 2021.

\bibitem{Hung2005Gradient}
Shu-Hung Leung and C.F. So.
\newblock Gradient-based variable forgetting factor rls algorithm in time-varying environments.
\newblock {\em IEEE Transactions on Signal Processing}, 53(8):3141--3150, 2005.

\bibitem{ljung1983theory}
Lennart Ljung and Torsten S{\"o}derstr{\"o}m.
\newblock {\em Theory and practice of recursive identification}.
\newblock MIT press, 1983.

\bibitem{malik1991some}
OP~Malik, GS~Hope, and SJ~Cheng.
\newblock Some issues on the practical use of recursive least squares identification in self-tuning control.
\newblock {\em International Journal of Control}, 53(5):1021--1033, 1991.

\bibitem{mohseni2022recursive}
Nima Mohseni and Dennis~S Bernstein.
\newblock Recursive least squares with variable-rate forgetting based on the f-test.
\newblock In {\em 2022 American Control Conference (ACC)}, pages 3937--3942. IEEE, 2022.

\bibitem{nguyen2021predictive}
Tam~W Nguyen, Syed Aseem~Ul Islam, Dennis~S Bernstein, and Ilya~V Kolmanovsky.
\newblock Predictive cost adaptive control: A numerical investigation of persistency, consistency, and exigency.
\newblock {\em IEEE Control Systems Magazine}, 41(6):64--96, 2021.

\bibitem{ortega2020modified}
Romeo Ortega, Vladimir Nikiforov, and Dmitry Gerasimov.
\newblock On modified parameter estimators for identification and adaptive control. a unified framework and some new schemes.
\newblock {\em Annual Reviews in Control}, 50:278--293, 2020.

\bibitem{paleologu2008robust}
Constantin Paleologu, Jacob Benesty, and Silviu Ciochina.
\newblock A robust variable forgetting factor recursive least-squares algorithm for system identification.
\newblock {\em IEEE Signal Processing Letters}, 15:597--600, 2008.

\bibitem{paleologu2014practical}
Constantin Paleologu, Jacob Benesty, and Silviu Ciochin{\u{a}}.
\newblock A practical variable forgetting factor recursive least-squares algorithm.
\newblock In {\em 2014 11th International symposium on electronics and telecommunications (ISETC)}, pages 1--4. IEEE, 2014.

\bibitem{parkum1992recursive}
JE~Parkum, Niels~Kj{\o}lstad Poulsen, and Jan Holst.
\newblock Recursive forgetting algorithms.
\newblock {\em International Journal of Control}, 55(1):109--128, 1992.

\bibitem{salgado1988modified}
Mario~E. Salgado, Graham~C. Goodwin, and Richard~H. Middleton.
\newblock Modified least squares algorithm incorporating exponential resetting and forgetting.
\newblock {\em International Journal of Control}, 47(2):477--491, 1988.

\bibitem{sastry1990adaptive}
Shankar Sastry, Marc Bodson, and James~F Bartram.
\newblock Adaptive control: stability, convergence, and robustness, 1990.

\bibitem{shaferman2021continuous}
Vitaly Shaferman, Michael Schwegel, Tobias Gl{\"u}ck, and Andreas Kugi.
\newblock Continuous-time least-squares forgetting algorithms for indirect adaptive control.
\newblock {\em European Journal of Control}, 62:105--112, 2021.

\bibitem{shin2020new}
Hyo-Sang Shin and Hae-In Lee.
\newblock A new exponential forgetting algorithm for recursive least-squares parameter estimation.
\newblock {\em arXiv preprint arXiv:2004.03910}, 2020.

\bibitem{trefethen2022numerical}
Lloyd~N Trefethen and David Bau.
\newblock {\em Numerical linear algebra}, volume 181.
\newblock Siam, 2022.

\bibitem{vahidi2005recursive}
A.~Vahidi, A.~Stefanopoulou, and H.~Peng.
\newblock Recursive least squares with forgetting for online estimation of vehicle mass and road grade: theory and experiments.
\newblock {\em Vehicle System Dynamics}, 43(1):31--55, 2005.

\bibitem{zhu2021recursive}
Kun Zhu, Chengpu Yu, and Yiming Wan.
\newblock Recursive least squares identification with variable-direction forgetting via oblique projection decomposition.
\newblock {\em IEEE/CAA Journal of Automatica Sinica}, 9(3):547--555, 2021.

\end{thebibliography}

\appendix
\section{Useful Matrix Lemmas}    

\begin{lma}[Matrix Inversion Lemma]
\label{lem: matrix inversion lemma}
Let $A \in \BBR^{n \times n}$, $U \in \BBR^{n \times p}$, $C \in \BBR^{p \times p}$, and $V \in \BBR^{p \times n}$. If $A$, $C$, and $A+UCV$ are nonsingular, then $C^{-1} + VA^{-1} U$ is nonsingular, and $(A+UCV)^{-1} = A^{-1} - A^{-1}U(C^{-1} + VA^{-1} U)^{-1} V A^{-1}$. 
\end{lma}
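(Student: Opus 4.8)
The plan is to prove this in two stages: first establish that $C^{-1}+VA^{-1}U$ is nonsingular, and then verify the claimed formula by a direct multiplication, using nonsingularity of $A+UCV$ to upgrade a one-sided inverse to the genuine inverse.

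For the first stage, I would argue by contradiction on the nullspace. Suppose $x \in \BBR^p$ satisfies $(C^{-1}+VA^{-1}U)x = 0$, and set $y \triangleq A^{-1}Ux \in \BBR^n$, so that $Ay = Ux$. From the assumed equation, $C^{-1}x = -VA^{-1}Ux = -Vy$, hence $x = -CVy$. Substituting back, $Ay = Ux = -UCVy$, so $(A+UCV)y = 0$; since $A+UCV$ is nonsingular, $y = 0$, and therefore $x = -CVy = 0$. Thus $C^{-1}+VA^{-1}U$ has trivial nullspace, and being square it is nonsingular, so the right-hand side of the identity is well defined.

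For the second stage I would multiply $A+UCV$ on the right by the claimed expression and simplify. Writing $\Gamma \triangleq C^{-1}+VA^{-1}U$, expand
\begin{align*}
    (A+UCV)\bigl(A^{-1} - A^{-1}U\Gamma^{-1}VA^{-1}\bigr)
    &= I_n + UCVA^{-1} - U\Gamma^{-1}VA^{-1} - UCVA^{-1}U\Gamma^{-1}VA^{-1} \\
    &= I_n + UCVA^{-1} - U\bigl(I_p + CVA^{-1}U\bigr)\Gamma^{-1}VA^{-1} \\
    &= I_n + UCVA^{-1} - UC\bigl(C^{-1}+VA^{-1}U\bigr)\Gamma^{-1}VA^{-1} \\
    &= I_n + UCVA^{-1} - UCVA^{-1} = I_n.
\end{align*}
This shows the claimed matrix is a right inverse of $A+UCV$; since $A+UCV$ is nonsingular by hypothesis, this right inverse coincides with $(A+UCV)^{-1}$, which is the desired identity.

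I do not expect a serious obstacle here — the only mildly delicate point is the nonsingularity of $C^{-1}+VA^{-1}U$, which must be proved before the formula even makes sense (one cannot simply assert it). The nullspace argument above handles it cleanly without invoking determinant identities. The algebraic verification is routine; the one step worth stating explicitly is the factoring $I_p + CVA^{-1}U = C(C^{-1}+VA^{-1}U)$, which is what makes the cross terms cancel.
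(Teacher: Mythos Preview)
Your proof is correct and self-contained: the nullspace argument cleanly establishes nonsingularity of $C^{-1}+VA^{-1}U$ under the stated hypotheses, and the algebraic verification is accurate, with the key factoring $I_p + CVA^{-1}U = C\Gamma$ correctly identified.

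The paper, by contrast, does not prove this lemma at all --- it simply cites Corollary~3.9.8 of Bernstein's \emph{Matrix Mathematics}. So your approach is genuinely different in that it supplies an elementary, direct argument where the paper defers to an external reference. What your approach buys is self-containment and an explanation of \emph{why} the nonsingularity of $A+UCV$ forces that of $C^{-1}+VA^{-1}U$; what the citation buys is brevity, since the Woodbury identity is standard.
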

\begin{proof}
See Corollary 3.9.8 of \cite{bernstein2009matrix}.
\end{proof}

\begin{lma}[min-max theorem]
\label{lem: min-max theorem}
Let $A \in \BBR^{n \times n}$ be symmetric. Then, for all $k = 1,\hdots,n$,
\begin{align*}
    \boldsymbol{\lambda_k}(A) &= \max_{\substack{S \subset \BBR^n \\ \dim(S) = k}} \min_{\substack{x \in S \\ \Vert x \Vert = 1}} x^\rmT A x.
\end{align*}
\end{lma}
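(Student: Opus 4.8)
The plan is to prove this classical Courant--Fischer identity from the spectral decomposition of $A$. Since $A$ is symmetric, fix an orthonormal basis $u_1,\ldots,u_n$ of $\BBR^n$ with $A u_i = \boldsymbol{\lambda_i}(A) u_i$, ordered as in the paper's convention $\boldsymbol{\lambda_1}(A) \ge \cdots \ge \boldsymbol{\lambda_n}(A)$. For any $x = \sum_{i=1}^n c_i u_i$ with $\Vert x \Vert = 1$, i.e. $\sum_{i=1}^n c_i^2 = 1$, one has $x^\rmT A x = \sum_{i=1}^n \boldsymbol{\lambda_i}(A) c_i^2$, a convex combination of the eigenvalues. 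This identity drives both inequalities.

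First I would prove ``$\ge$'' by exhibiting a witness subspace. Fix $k$ and let $S_\star \triangleq \mathrm{span}\{u_1,\ldots,u_k\}$, so $\dim S_\star = k$. For unit $x \in S_\star$ we have $c_{k+1} = \cdots = c_n = 0$, hence $x^\rmT A x = \sum_{i=1}^k \boldsymbol{\lambda_i}(A) c_i^2 \ge \boldsymbol{\lambda_k}(A) \sum_{i=1}^k c_i^2 = \boldsymbol{\lambda_k}(A)$. Minimizing over unit $x \in S_\star$ and then maximizing over all $k$-dimensional subspaces gives a lower bound of $\boldsymbol{\lambda_k}(A)$.

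The only real work is ``$\le$'': given an \emph{arbitrary} $k$-dimensional subspace $S$, I must find a unit vector in $S$ with Rayleigh quotient at most $\boldsymbol{\lambda_k}(A)$. Let $T \triangleq \mathrm{span}\{u_k,u_{k+1},\ldots,u_n\}$, so $\dim T = n-k+1$. Since $\dim S + \dim T = n+1 > n$, the intersection $S \cap T$ is nontrivial and contains a unit vector $x = \sum_{i=k}^n c_i u_i$; then $x^\rmT A x = \sum_{i=k}^n \boldsymbol{\lambda_i}(A) c_i^2 \le \boldsymbol{\lambda_k}(A)$. Thus $\min_{\Vert x \Vert = 1,\, x \in S} x^\rmT A x \le \boldsymbol{\lambda_k}(A)$ for every such $S$, and the maximum over $S$ preserves the bound; combining with the previous step yields equality. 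The main (minor) obstacle is the dimension count guaranteeing $S \cap T \ne \{0\}$; I would also remark that the inner minimum and outer maximum are genuinely attained, since the Rayleigh quotient is continuous on the compact set $\{x \in S : \Vert x \Vert = 1\}$ and there are finitely many subspaces to worry about spectrally, so writing $\min$ and $\max$ rather than $\inf$ and $\sup$ is legitimate.
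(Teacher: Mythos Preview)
Your proof is correct and is the standard Courant--Fischer argument via the spectral decomposition. The paper does not actually prove this lemma; it simply cites Theorem~4.2.6 of Horn and Johnson, \emph{Matrix Analysis}, whose proof is essentially the one you have written.
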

\begin{proof}
    See Theorem 4.2.6 of \cite{horn2012matrix}.
\end{proof}

\begin{lma}
\label{lem: lambda_k(A) + lambda_1(B) < lambda_k(A+B) < lambda_k(A) + lambda_n(B)}
Let $A \in \BBR^{n \times n}$ and $B \in \BBR^{n \times n}$ be symmetric. Then, 
\begin{align*}
    \boldsymbol{\lambda_k}(A) + \eigmin(B) \le \boldsymbol{\lambda_k}(A+B) \le \boldsymbol{\lambda_k}(A) + \eigmax(B).
\end{align*}
\end{lma}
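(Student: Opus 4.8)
The plan is to derive both inequalities directly from the variational (min–max) characterization of eigenvalues in Lemma \ref{lem: min-max theorem}, together with the elementary bound $\eigmin(B) \le x^\rmT B x \le \eigmax(B)$ valid for every unit vector $x \in \BBR^n$ (itself the $k=1$ and $k=n$ cases of that lemma applied to $B$).

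First I would prove the upper bound $\boldsymbol{\lambda_k}(A+B) \le \boldsymbol{\lambda_k}(A) + \eigmax(B)$. Fix an arbitrary $k$-dimensional subspace $S \subset \BBR^n$. For every $x \in S$ with $\norm{x} = 1$ we have $x^\rmT(A+B)x = x^\rmT A x + x^\rmT B x \le x^\rmT A x + \eigmax(B)$, and taking the minimum over such $x$ gives
\begin{align*}
    \min_{\substack{x \in S \\ \norm{x}=1}} x^\rmT (A+B) x \le \eigmax(B) + \min_{\substack{x \in S \\ \norm{x}=1}} x^\rmT A x.
\end{align*}
Now taking the maximum over all $k$-dimensional subspaces $S$ on both sides and invoking Lemma \ref{lem: min-max theorem} for $A+B$ on the left and for $A$ on the right yields $\boldsymbol{\lambda_k}(A+B) \le \eigmax(B) + \boldsymbol{\lambda_k}(A)$.

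For the lower bound I would reuse the upper bound rather than repeat the argument: apply it with the roles played by $A+B$ and $-B$, noting $\eigmax(-B) = -\eigmin(B)$. This gives $\boldsymbol{\lambda_k}(A) = \boldsymbol{\lambda_k}\big((A+B) + (-B)\big) \le \boldsymbol{\lambda_k}(A+B) + \eigmax(-B) = \boldsymbol{\lambda_k}(A+B) - \eigmin(B)$, i.e. $\boldsymbol{\lambda_k}(A) + \eigmin(B) \le \boldsymbol{\lambda_k}(A+B)$, which is the claimed left-hand inequality. (Alternatively one can mirror the first paragraph verbatim using $x^\rmT B x \ge \eigmin(B)$.) There is no real obstacle here; the only point requiring care is keeping the quantifier order straight — the inner $\min$ over unit vectors in $S$ and the outer $\max$ over $k$-dimensional subspaces — so that the bound on $x^\rmT B x$, which is uniform in $x$, can be pulled through both operations cleanly.
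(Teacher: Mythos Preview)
Your argument is correct: the min--max characterization from Lemma~\ref{lem: min-max theorem} together with the uniform bound $\eigmin(B) \le x^\rmT B x \le \eigmax(B)$ yields the upper inequality exactly as you wrote, and the substitution $(A,B) \mapsto (A+B,-B)$ then gives the lower one. The quantifier handling is fine because the constant $\eigmax(B)$ is independent of both $x$ and $S$, so it passes through the inner $\min$ and outer $\max$ without issue.

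The paper itself does not supply a proof of this lemma; it simply cites Theorem~10.4.11 of \cite{bernstein2009matrix}. Your self-contained derivation via Lemma~\ref{lem: min-max theorem} is the standard textbook route (essentially Weyl's inequality), so there is no substantive methodological difference to compare---you have just filled in what the paper delegates to a reference.
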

\begin{proof}
    See Theorem 10.4.11 of \cite{bernstein2009matrix}.
\end{proof}

\begin{lma}
\label{lem: eigenvalues of AB and BA}
Let $A \in \BBR^{n \times n}$ and $B \in \BBR^{n \times n}$. Then, the eigenvalues of $AB \in \BBR^{n \times n}$ are the eigenvalues of $BA \in \BBR^{n \times n}$
\end{lma}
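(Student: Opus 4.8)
The plan is to prove the stronger statement that $AB$ and $BA$ have the same characteristic polynomial; since the eigenvalues of a real square matrix are exactly the roots in $\BBC$ (counted with algebraic multiplicity) of its characteristic polynomial, this immediately gives the claim. The motivating special case is $A$ nonsingular: then $BA = A^{-1}(AB)A$, so $AB$ and $BA$ are similar and the characteristic polynomials coincide. The work is to remove the invertibility hypothesis, and I would do this with a block-matrix similarity valid for arbitrary $A$.

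First I would introduce the $2n \times 2n$ matrices
\begin{align*}
    M \triangleq \begin{bmatrix} AB & 0_{n \times n} \\ B & 0_{n \times n} \end{bmatrix}, \quad N \triangleq \begin{bmatrix} 0_{n \times n} & 0_{n \times n} \\ B & BA \end{bmatrix}, \quad T \triangleq \begin{bmatrix} I_n & A \\ 0_{n \times n} & I_n \end{bmatrix},
\end{align*}
and note that $T$ is nonsingular with $T^{-1} = \left[\begin{smallmatrix} I_n & -A \\ 0_{n\times n} & I_n \end{smallmatrix}\right]$. A direct block multiplication shows $MT = TN = \left[\begin{smallmatrix} AB & ABA \\ B & BA \end{smallmatrix}\right]$, hence $M = TNT^{-1}$, so $M$ and $N$ are similar and have a common characteristic polynomial.

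Next I would evaluate the two characteristic polynomials using the determinant of a block-triangular matrix: $\det(\lambda I_{2n} - M) = \det(\lambda I_n - AB)\det(\lambda I_n) = \lambda^n \det(\lambda I_n - AB)$ and, similarly, $\det(\lambda I_{2n} - N) = \lambda^n \det(\lambda I_n - BA)$. Equating these and cancelling the common factor $\lambda^n$ — legitimate since both sides are polynomials in $\lambda$ and agree for all but finitely many $\lambda$ — yields $\det(\lambda I_n - AB) = \det(\lambda I_n - BA)$. Thus $AB$ and $BA$ have identical characteristic polynomials and hence identical eigenvalues, with the same algebraic multiplicities.

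The only genuine obstacle is the singular case, which is precisely what the construction of $M$, $N$, $T$ is engineered to circumvent; once that similarity is recorded, everything else is routine block-determinant bookkeeping. An alternative, if a self-contained argument is not desired here, is simply to cite the corresponding standard result (e.g.\ in \cite{bernstein2009matrix} or \cite{horn2012matrix}).
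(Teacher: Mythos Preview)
Your argument is correct: the block-matrix similarity $MT = TN$ checks out, the block-triangular determinant computations are right, and the cancellation of $\lambda^n$ is legitimate since the characteristic polynomials are equal as elements of the polynomial ring $\BBR[\lambda]$, which is an integral domain. You therefore obtain the stronger conclusion that $AB$ and $BA$ share a characteristic polynomial, so their eigenvalues agree with multiplicity.

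The paper does not give a proof at all; it simply cites Theorem~1.3.22 of \cite{horn2012matrix}, exactly the alternative you mention in your closing sentence. So the difference is just that you supply a self-contained argument where the paper defers to a standard reference. Your route is the classical Sylvester-type block construction and is entirely appropriate here; the only trade-off is length versus self-containment.
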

\begin{proof}
   See Theorem 1.3.22 of \cite{horn2012matrix}.
\end{proof}

\begin{lma}
\label{lem: Chu 1998 Lemma 2.4, rank(H-M) = rank(H) - rank(M)}
Let $H \in \BBR^{n \times n}$ and $M\in \BBR^{n \times n}$ by symmetric matrices.
Then, 
$\rank(H-M) = \rank(H) - \rank(M)$ 
if and only if 
there exists $S\in \BBR^{n \times p}$ such that 
$M = HS(S^\rmT H S)^{-1} S^\rmT H$.
\end{lma}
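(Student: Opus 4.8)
The plan is to prove the two implications separately: the ``if'' direction by reducing $H$, $M$ and $H-M$ simultaneously to a block-diagonal normal form under congruence, and the ``only if'' direction by building $S$ directly from the ranges and kernels of $H$ and $M$.

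For the ``if'' direction, suppose $M = HS(S^\rmT H S)^{-1}S^\rmT H$ with $S^\rmT H S$ nonsingular, and set $P \triangleq S(S^\rmT H S)^{-1}S^\rmT H$. First I would record that $P^2 = P$, that $\Tr P = \Tr\big((S^\rmT H S)^{-1}S^\rmT H S\big) = p$ so $\rank P = p$ (as $P$ is idempotent), that $M = HP$ is symmetric so $HP = P^\rmT H$, and hence $M = P^\rmT H P$ and $H - M = (I_n - P)^\rmT H (I_n - P)$. Since $P$ is idempotent of rank $p$, pick invertible $T$ with $T^{-1}PT = E \triangleq \diag(I_p,\,0_{(n-p)\times(n-p)})$, so that $T^{-1}(I_n-P)T = I_n-E$. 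Writing $\tilde H \triangleq T^\rmT H T$, the two factorizations give $T^\rmT M T = E\tilde H E$ and $T^\rmT(H-M)T = (I_n-E)\tilde H(I_n-E)$; adding them and using $M+(H-M)=H$ forces the off-diagonal blocks of $\tilde H$ to vanish, so $\tilde H = \diag(\tilde H_{11},\tilde H_{22})$ with $T^\rmT M T = \diag(\tilde H_{11},0)$ and $T^\rmT(H-M)T = \diag(0,\tilde H_{22})$. As rank is preserved by congruence, $\rank H = \rank\tilde H_{11}+\rank\tilde H_{22} = \rank M + \rank(H-M)$.

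For the ``only if'' direction, assume $\rank(H-M) = \rank H - \rank M$. From $H = M + (H-M)$ one has $\mathcal{R}(H)\subseteq\mathcal{R}(M)+\mathcal{R}(H-M)$, and the rank identity forces this to be a direct sum $\mathcal{R}(H)=\mathcal{R}(M)\oplus\mathcal{R}(H-M)$; taking orthogonal complements (all three matrices are symmetric, so $\mathcal{R}(A)^\perp=\mathcal{N}(A)$) gives $\mathcal{N}(H)=\mathcal{N}(M)\cap\mathcal{N}(H-M)$. Set $r\triangleq\rank M$ and $V\triangleq\{u\in\BBR^n : Hu\in\mathcal{R}(M)\}$; since $\mathcal{R}(M)\subseteq\mathcal{R}(H)$, $\dim V = r + \dim\mathcal{N}(H)$. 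I would then show $\mathcal{N}(M)\cap V = \mathcal{N}(H)$: if $Mu=0$ then $Hu=(H-M)u\in\mathcal{R}(M)\cap\mathcal{R}(H-M)=\{0\}$, so $u\in\mathcal{N}(H)$. Now take any complement $\mathcal{U}$ of $\mathcal{N}(H)$ inside $V$, so $\dim\mathcal{U}=r$ and $H\mathcal{U}=\mathcal{R}(M)$; put $p=r$ and let the columns of $S$ be a basis of $\mathcal{U}$. Because $\mathcal{U}\cap\mathcal{N}(H)=\{0\}$ and $\mathcal{N}(M)\cap V=\mathcal{N}(H)$, we get $\mathcal{U}\cap\mathcal{N}(M)=\{0\}$, hence $\mathcal{R}(M)\cap\mathcal{U}^\perp=\{0\}$, from which $S^\rmT H S$ is nonsingular (if $S^\rmT H S x=0$ then $HSx\in H\mathcal{U}\cap\mathcal{U}^\perp=\mathcal{R}(M)\cap\mathcal{U}^\perp=\{0\}$, so $Sx\in\mathcal{N}(H)\cap\mathcal{U}=\{0\}$, so $x=0$). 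Finally both $M$ and $M'\triangleq HS(S^\rmT H S)^{-1}S^\rmT H$ agree with $H$ on $\mathcal{U}$ — for $M'$ by direct computation, for $M$ because $u\in\mathcal{U}$ implies $(H-M)u\in\mathcal{R}(M)\cap\mathcal{R}(H-M)=\{0\}$ — and both vanish on $\mathcal{N}(M)$, since $\mathcal{R}(M')=\mathcal{R}(HS)=\mathcal{R}(M)$ forces $\mathcal{N}(M')=\mathcal{N}(M)$; as $\BBR^n=\mathcal{U}\oplus\mathcal{N}(M)$, this forces $M=M'$.

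I expect the ``only if'' direction to be the main obstacle, and within it the crucial observation is $\mathcal{N}(M)\cap V = \mathcal{N}(H)$: it is precisely this that makes an arbitrary complement $\mathcal{U}$ of $\mathcal{N}(H)$ in $V$ automatically satisfy $\mathcal{U}\cap\mathcal{N}(M)=\{0\}$, which is needed both for the nonsingularity of $S^\rmT H S$ and for the concluding identification $M=M'$. The ``if'' direction is routine once the factorizations $M=P^\rmT H P$ and $H-M=(I_n-P)^\rmT H(I_n-P)$ are available.
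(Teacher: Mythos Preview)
Your argument is correct. The paper does not supply its own proof of this lemma; it simply defers to Lemma~2.4 of \cite{Chu1998rank}. Your proposal therefore goes well beyond what the paper provides, giving a complete self-contained argument.

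A few remarks on the content. The ``if'' direction is clean: the key identities $M = P^\rmT H P$ and $H-M = (I_n-P)^\rmT H(I_n-P)$, once combined with the similarity $T^{-1}PT = \diag(I_p,0)$, force $T^\rmT H T$ to be block diagonal, and the rank identity is immediate. This is essentially the standard congruence argument and matches the spirit of how the lemma is used elsewhere in the paper (cf.\ the proof of Theorem~\ref{theo: Apar Aperp properties}).

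For the ``only if'' direction your approach is a direct linear-algebraic construction rather than a reduction to a normal form. The two points a grader would scrutinize both check out: (i) the step $\mathcal{U}\cap\mathcal{N}(M)=\{0\}\Rightarrow \mathcal{R}(M)\cap\mathcal{U}^\perp=\{0\}$ relies on $\dim\mathcal{U}+\dim\mathcal{N}(M)=n$, which you have since $\dim\mathcal{U}=r$; and (ii) the verification that $M$ agrees with $H$ on $\mathcal{U}$ uses $(H-M)u = Hu - Mu \in \mathcal{R}(M)$ (both terms lie in $\mathcal{R}(M)$ because $u\in V$), hence $(H-M)u\in\mathcal{R}(M)\cap\mathcal{R}(H-M)=\{0\}$. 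Both are fine. Your identification $\mathcal{N}(M')=\mathcal{N}(M)$ via $\mathcal{R}(M')=\mathcal{R}(HS)=H\mathcal{U}=\mathcal{R}(M)$ is also correct, using that $M'S=HS$ gives $\mathcal{R}(HS)\subseteq\mathcal{R}(M')$ while the opposite inclusion is immediate from the formula.

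In short: the proposal is sound, and since the paper outsources the proof entirely, there is no meaningful methodological comparison to draw.
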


\begin{proof}
    See Lemma 2.4 in \cite{Chu1998rank}.
\end{proof}

\begin{lma}
Let $H,M \in \BBR^{n \times n}$, where $H$ is positive definite and $M$ is positive semidefinite. 
%
%
%
Then, $H - M \succeq 0$ (respectively $H - M \succ 0$) if and only if $\rho (H^{-1}M) \le 1$ (respectively $\rho (H^{-1}M) < 1$). 
\label{lem: H-M PSD}
\end{lma}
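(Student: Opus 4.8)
The plan is to convert the semidefinite inequality into an eigenvalue condition by a congruence transformation using the positive-definite square root of $H$, and then to match the resulting eigenvalues with the spectrum of $H^{-1}M$.

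First, since $H$ is positive definite it admits a positive-definite symmetric square root $H^{1/2}$ with inverse $H^{-1/2}$. Congruence by the nonsingular matrix $H^{-1/2}$ preserves both positive semidefiniteness and positive definiteness, so
\begin{align*}
    H - M \succeq 0 \iff H^{-1/2}(H - M)H^{-1/2} \succeq 0 \iff I_n - N \succeq 0,
\end{align*}
where $N \triangleq H^{-1/2} M H^{-1/2}$, and the same chain of equivalences holds with $\succeq$ replaced by $\succ$.

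Second, I would observe that $N$ is symmetric and, since $M \succeq 0$, positive semidefinite; hence all its eigenvalues are real and nonnegative and $\rho(N) = \eigmax(N)$. Consequently $I_n - N \succeq 0$ is equivalent to $\eigmax(N) \le 1$, i.e.\ $\rho(N) \le 1$, and likewise $I_n - N \succ 0$ is equivalent to $\eigmax(N) < 1$, i.e.\ $\rho(N) < 1$. Third, I would identify $\rho(N)$ with $\rho(H^{-1}M)$: writing $N = (H^{-1/2}M)(H^{-1/2})$ and invoking Lemma \ref{lem: eigenvalues of AB and BA} with $A = H^{-1/2}M$ and $B = H^{-1/2}$, the eigenvalues of $N = AB$ are exactly those of $BA = H^{-1/2}\,H^{-1/2}M = H^{-1}M$, so $\rho(N) = \rho(H^{-1}M)$. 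Chaining the three steps yields the asserted equivalence in both the nonstrict and the strict form.

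The main point to be careful about — and the reason for working with $N$ rather than $H^{-1}M$ directly — is that $H^{-1}M$ need not be symmetric, so a priori its spectral radius need not coincide with its largest eigenvalue; passing to the symmetric positive-semidefinite matrix $N$ is precisely what makes the bound $\rho \le 1$ (resp.\ $\rho < 1$) equivalent to the semidefinite inequality $H - M \succeq 0$ (resp.\ $H - M \succ 0$). The remaining manipulations are routine.
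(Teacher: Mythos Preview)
Your proof is correct. The paper does not actually prove this lemma; it merely cites Theorem 7.7.3 of Horn and Johnson, \emph{Matrix Analysis}. Your argument---reduce via congruence by $H^{-1/2}$ to $I_n - N \succeq 0$ with $N = H^{-1/2}MH^{-1/2}$ symmetric positive semidefinite, then identify the spectra of $N$ and $H^{-1}M$ using Lemma~\ref{lem: eigenvalues of AB and BA}---is the standard proof of that theorem, so you have essentially reproduced what the cited reference does rather than departed from it.
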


\begin{proof}
    See Theorem 7.7.3 in \cite{horn2012matrix}.
\end{proof}

\begin{lma}

Let $A \in \BBR^{n \times n}$ be positive definite and $b \in \BBR^{n \times p}$.
Then, $b^\rmT A b \in \BBR^{p \times p}$ is positive semidefinite, $\rank(b^\rmT R b) = \rank(b)$, and $b^\rmT A b$ is positive definite if and only if $\rank(b) = p$.
\label{lem: phi R phi^T PSD}
\end{lma}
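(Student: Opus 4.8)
The plan is to verify the three claims in order, relying on the eigen-decomposition of $A$ and on basic rank identities. First I would write $A = L L^\rmT$ where $L \in \BBR^{n \times n}$ is nonsingular (e.g.\ $L = A^{1/2}$, using that $A$ is positive definite); then $b^\rmT A b = (L^\rmT b)^\rmT (L^\rmT b)$, which is manifestly symmetric positive semidefinite because it is a Gram matrix. So the first claim is immediate.

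For the rank claim, set $M \triangleq L^\rmT b \in \BBR^{n \times p}$, so that $b^\rmT A b = M^\rmT M$. A standard fact is that $\rank(M^\rmT M) = \rank(M)$ (since $M$ and $M^\rmT M$ have the same nullspace: $Mx = 0 \iff x^\rmT M^\rmT M x = 0 \iff M^\rmT M x = 0$). Moreover, because $L^\rmT$ is nonsingular, $\rank(M) = \rank(L^\rmT b) = \rank(b)$. Chaining these gives $\rank(b^\rmT A b) = \rank(b)$.

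The third claim follows by combining the first two: a symmetric positive-semidefinite $p \times p$ matrix is positive definite if and only if it has full rank $p$; by the rank identity just proved, this holds if and only if $\rank(b) = p$. I would spell out both directions briefly: if $\rank(b) = p$ then $b^\rmT A b$ is PSD of rank $p$, hence positive definite; conversely if $b^\rmT A b \succ 0$ then it has rank $p$, so $\rank(b) = p$.

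None of the steps is a genuine obstacle; the only point requiring a little care is the rank-equality $\rank(M^\rmT M) = \rank(M)$, which should be stated with its one-line nullspace argument rather than merely asserted, since the whole lemma (and its use in Definition~\ref{defin: subspace decomposition} to guarantee invertibility of $v^\rmT A v$) hinges on it. Everything else is bookkeeping with the factorization $A = LL^\rmT$.
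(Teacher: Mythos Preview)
Your proposal is correct. The paper does not actually give a proof of this lemma: it simply cites Observation~7.1.8 in Horn and Johnson's \emph{Matrix Analysis}. Your argument via the square-root factorization $A = LL^\rmT$, the Gram-matrix identity $b^\rmT A b = (L^\rmT b)^\rmT(L^\rmT b)$, and the nullspace equality $\mathcal{N}(M) = \mathcal{N}(M^\rmT M)$ is the standard self-contained route and is exactly what underlies the cited textbook observation; there is nothing to correct or add.
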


\begin{proof}
    See Observation 7.1.8 in \cite{horn2012matrix}.
\end{proof}

\section{Proofs in the Subspace Decomposition of a Positive-Definite Matrix}
\label{sec: subspace decomp proofs}


\begin{proof}[Proof of Theorem \ref{theo: Apar Aperp properties}]
By Lemma \ref{lem: phi R phi^T PSD}, $v^\rmT A v$ is positive definite. Hence, $A^{\parallel S}$ and $A^{\perp S}$ are well defined. 

\textit{Proof of (1a), (1b)}: Since $v^\rmT A v$ is positive definite, $(v^\rmT A v)^{-1}$ is also positive definite. Furthermore, Lemma \ref{lem: phi R phi^T PSD} implies that $A^{\parallel S} = (Av)(v^\rmT A v)^{-1} (Av)^\rmT $ is positive semidefinite. Next, note that by Lemma \ref{lem: H-M PSD}, $A^{\perp S} = A - A^{\parallel S}$ is positive semidefinite if and only if $\rho(A^{-1} A^{\parallel S}) \le 1$. 
Notice that 
\begin{align*}
    (A^{-1} A^{\parallel S})^2 &= \left[v (v^\rmT A v)^{-1} v^\rmT A\right]^2, \\
    &= v (v^\rmT A v)^{-1} v^\rmT A v (v^\rmT A v)^{-1} v^\rmT A, \\
    &= v (v^\rmT A v)^{-1} v^\rmT A, 
    = A^{-1} A^{\parallel S}.
\end{align*}
Therefore, $A^{-1} A^{\parallel S}$ is idempotent implying each of its $n$ eigenvalues are either $0$ or $1$. Hence, $\rho(A^{-1} A^{\parallel S}) \le 1$ implying that $A^{\perp S}$ is positive semidefinite.

\textit{Proof of (2a), (2b)}: Define $B \triangleq (v^\rmT A v)^{\nicefrac{1}{2}} \in \BBR^{p \times p}$. It follows that
\begin{align*}
    A^{\parallel S} = Av(B B^\rmT)^{-1}v^\rmT A = (A v B^{-1})^\rmT (A v B^{-1}).
\end{align*}
Thus, $\rank(A^{\parallel S}) = \rank(A v B^{-1})$. Since $A$ and $B^{-1}$ are positive definite, it follows that $\rank(A v B^{-1}) = \rank(v)$. Hence, $\rank(A^{\parallel S}) = \rank(v) = p$. Next, Lemma \ref{lem: Chu 1998 Lemma 2.4, rank(H-M) = rank(H) - rank(M)} implies that $\rank(A^{\perp S}) = \rank(A) - \rank(A^{\parallel S}) = n-p$. 

\textit{Proof of (3a), (3b)}: Let $x \in S$. Since $v_1,\hdots,v_p$ is a basis for $S$, it follows that there exists $c \in \BBR^{p \times 1}$ such that $x = vc$. Hence, 
\begin{align*}
    A^{\parallel S}x = Av(v^\rmT A v) v^\rmT A v c = A v c = Ax,
\end{align*}
and $A^{\perp S}x = (A-A^{\parallel S})x = Ax-Ax = 0$.
\end{proof}


\begin{proof}[Proof of Theorem \ref{theo: A^parallel uniqueness}]
%
Let $v_1, \cdots ,v_p \in \BBR^n$ be a basis for $S$ and $v \triangleq \begin{bmatrix} v_1 & \cdots & v_p \end{bmatrix} \in \BBR^{n \times p}$. Begin by writing $\tilde{A} = A X A$, where $X \triangleq A^{-1} \tilde{A} A^{-1} \in \BBR^{n \times n}$ and define $P \triangleq v(v^\rmT v)^{-1} v^\rmT \in \BBR^{n \times n}$. Since $\tilde{A}v = Av$, it follows that 
\begin{align*}
    XAP = A^{-1} \tilde{A} v(v^\rmT v)^{-1} v^\rmT = A^{-1} A v(v^\rmT v)^{-1} v^\rmT = P.
\end{align*}
Since $XAP = P$, it follows that the column space of $P$ is a subset of the column space of $X$. However, note that $\rank(P) = \rank(X) = p$. Therefore, $X$ and $P$ have the same column space.

Hence, for all $x \in \BBR^n$, there exists $y \in \BBR^n$ such that $Xx = Py$. Moreover, since $P^2 = P$, it follows that, for all $x \in \BBR^n$, there exists $y \in \BBR^n$ such that $PXx = PPy = Py = Xx$. Hence, $PX = X$. Moreover, since $P$, $X$, and $PX=X$ are symmetric, it follows that $P$ and $X$ commute. Therefore,
\begin{align*}
    PX = XP = X.
\end{align*}
Next, define $Q \triangleq PAP$. Note that $XQ = XPAP = XAP = P$. Moreover, since $X$, $Q$, and $XQ=P$ are symmetric, it follows that $X$ and $Q$ commute. Therefore,
\begin{align*}
    XQ = QX = P.
\end{align*}
Thus, it follows that $X$ and $Q$ satisfy the four Moore-Penrose Conditions:
\begin{enumerate}
    \item $XQX = XP = X$,
    \item $QXQ = QP = PAPP = PAP = Q$,
    \item $XQ = P = P^\rmT = (XQ)^\rmT$,
    \item $QX = P = P^\rmT = (QX)^\rmT$.
\end{enumerate}
Therefore, $X = Q^+$ and $\tilde{A} = A Q^+ A$. Finally, define $G \triangleq v(v^\rmT A v)^{-1}v^\rmT \in \BBR^{n \times n}$. It can be easily verified that $Q =v(v^\rmT v)^{-1} v^\rmT A v(v^\rmT v)^{-1} v^\rmT$ and $G$ satisfy the four Moore-Penrose Conditions $GQG = G$, $QGQ = Q$, $GQ = (GQ)^\rmT$, and $QG = (QG)^\rmT$.
%
%
Hence, $X = Q^+ = (G^+)^+ = G$, and $\tilde{A} = A G A = A v(v^\rmT A v)^{-1}v^\rmT A = A^{\parallel S}$.
\end{proof}
 

\begin{proof}[Proof of Proposition \ref{prop: duality}]
Note that since $\langle \cdot , \cdot \rangle_A$ is an inner product, it follows that $\dim(S^{\perp_A}) = n - \dim(S) = n - p$. Let $w_1,\hdots,w_{n-p} \in \BBR^n$ be a basis for $S^{\perp_A}$ and let $w \triangleq [w_1 \ \cdots \ w_{n-p} ] \in \BBR^{n \times n-p}$. Then, $A^{\perp (S^{\perp_A})}$ can be written as
\begin{align*}
    A^{\perp (S^{\perp_A})} = A - Aw(w^\rmT A w)^{-1} w^\rmT A.
\end{align*}
Note that $A^{\perp (AS)^\perp}$ is the difference of two symmetric matrices, $A$ and $Aw(w^\rmT A w)^{-1} w^\rmT A$, hence is symmetric. 
Next, since $\rank(w) = n-p$, it follows that $\rank(Aw(w^\rmT A w)^{-1} w^\rmT A) = n-p$. 
Furthermore, it follows from Lemma \ref{lem: Chu 1998 Lemma 2.4, rank(H-M) = rank(H) - rank(M)} that $\rank(A^{\perp (AS)^\perp}) = \rank(A) - \rank(Aw(w^\rmT A w)^{-1} w^\rmT A) = n - (n-p) = p$. 
Finally, since the columns of $w$ are elements of $S^{\perp_A}$, it follows that, for all $x \in S$, $w^\rmT A x= 0_{n-p \times 1}$. Hence, for all $x \in S$, 
\begin{align*}
    A^{\perp (S^{\perp_A})}x = Ax- Aw(w^\rmT A w)^{-1} w^\rmT A x = Ax.
\end{align*}
Thus, the three properties of Theorem \ref{theo: A^parallel uniqueness} are satisfied and $A^{\parallel S} = A^{\perp (S^{\perp_A})}$ holds, proving \eqref{eqn: AparS duality}.

Next, it follows from definition \eqref{eqn: Aperp definition} that $A^{\perp (S^{\perp_A})} = A-A^{\parallel (S^{\perp_A})}$ which can be rewritten as $A- A^{\perp (S^{\perp_A})} = A^{\parallel (S^{\perp_A})}$. 
It then follows from \eqref{eqn: AparS duality} that $A- A^{\parallel S} = A^{\parallel (S^{\perp_A})}$. 
Finally, by definition \eqref{eqn: Aperp definition}, it follows that $A^{\perp S} = A^{\parallel (S^{\perp_A})}$, proving \eqref{eqn: AperpS duality}.
\end{proof}

\section{Proof of Theorem \ref{theo: Rk upper and lower bounds}}
\label{sec: proof of covariance bounds}

We begin the proof of Theorem \ref{theo: Rk upper and lower bounds} with a useful Lemma. 

\begin{lem}
\label{lem: Rkperp eigenvalue bounds}
    For all $k \ge 0$, the $n-q_k$ nonzero eigenvalues of $R_k^\perp$ are bounded below by $\eigmin(R_k)$ and bounded above by $\eigmax(R_k)$.
\end{lem}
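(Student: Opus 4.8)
The plan is to express $R_k^\perp$ in a form that exhibits its nonzero eigenvalues as eigenvalues of a related symmetric matrix, then apply the min-max theorem (Lemma~\ref{lem: min-max theorem}) or eigenvalue interlacing to bound them in terms of $\eigmin(R_k)$ and $\eigmax(R_k)$. First I would recall from Corollary~\ref{cor: Rk pos def} that $R_k$ is positive definite, so it admits a symmetric positive-definite square root $R_k^{1/2}$. Writing $R_k^\perp = R_k - R_k\bar{\phi}_k^\rmT(\bar{\phi}_k R_k \bar{\phi}_k^\rmT)^{-1}\bar{\phi}_k R_k$, I would factor out $R_k^{1/2}$ on both sides to get
\begin{align*}
    R_k^\perp = R_k^{1/2}\left(I_n - R_k^{1/2}\bar{\phi}_k^\rmT(\bar{\phi}_k R_k \bar{\phi}_k^\rmT)^{-1}\bar{\phi}_k R_k^{1/2}\right)R_k^{1/2}.
\end{align*}
The inner matrix $\Pi_k \triangleq I_n - R_k^{1/2}\bar{\phi}_k^\rmT(\bar{\phi}_k R_k \bar{\phi}_k^\rmT)^{-1}\bar{\phi}_k R_k^{1/2}$ is the orthogonal projector onto the orthogonal complement of $\mathcal{R}(R_k^{1/2}\bar{\phi}_k^\rmT)$; it is symmetric, idempotent, and has rank $n-q_k$ since $\rank(\bar\phi_k)=q_k$ by construction.

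Next I would use the fact that, because $\Pi_k$ is an orthogonal projector of rank $n-q_k$, there is a matrix $W_k \in \BBR^{n \times (n-q_k)}$ with orthonormal columns such that $\Pi_k = W_k W_k^\rmT$. Then $R_k^\perp = R_k^{1/2}W_k W_k^\rmT R_k^{1/2} = (W_k^\rmT R_k^{1/2})^\rmT(W_k^\rmT R_k^{1/2})$, and by Lemma~\ref{lem: eigenvalues of AB and BA} the nonzero eigenvalues of $R_k^\perp$ coincide with the nonzero eigenvalues of the $(n-q_k)\times(n-q_k)$ matrix $W_k^\rmT R_k^{1/2}R_k^{1/2}W_k = W_k^\rmT R_k W_k$. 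Since $W_k$ has orthonormal columns, $W_k^\rmT R_k W_k$ is a compression of $R_k$, so by the min-max characterization (or the Poincaré separation / interlacing inequalities, a direct consequence of Lemma~\ref{lem: min-max theorem}) every eigenvalue of $W_k^\rmT R_k W_k$ lies in $[\eigmin(R_k),\eigmax(R_k)]$. This gives exactly the claimed bounds on the $n-q_k$ nonzero eigenvalues of $R_k^\perp$.

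I would also need to pin down that $R_k^\perp$ has precisely $n-q_k$ nonzero eigenvalues; this is immediate since $\mathcal{R}(\bar\phi_k^\rmT)$ has dimension $q_k$ (all singular values of $\bar\phi_k$ are at least $\sqrt\varepsilon>0$, so $\bar\phi_k$ has full row rank $q_k$) and Theorem~\ref{theo: Apar Aperp properties}, property~(2b), gives $\rank(R_k^\perp)=n-q_k$. The main obstacle, such as it is, is mostly bookkeeping: making sure the projector identity for $\Pi_k$ is stated cleanly and that the reduction via Lemma~\ref{lem: eigenvalues of AB and BA} correctly matches \emph{nonzero} eigenvalues (since $AB$ and $BA$ need not have the same zero-eigenvalue multiplicities when the matrices are not square). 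An alternative that avoids the square root entirely is to note $R_k^\perp \preceq R_k$ (from Theorem~\ref{theo: Apar Aperp properties}(1b), since $R_k^{\parallel}\succeq0$), which immediately gives the upper bound $\eigmax(R_k^\perp)\le\eigmax(R_k)$ via Lemma~\ref{lem: lambda_k(A) + lambda_1(B) < lambda_k(A+B) < lambda_k(A) + lambda_n(B)}; the lower bound on the nonzero eigenvalues still requires the compression argument above, restricting attention to the range of $R_k^\perp$, which is an $R_k$-invariant-like subspace only after the $R_k^{1/2}$ change of coordinates, so the square-root route is the cleanest way to handle both bounds uniformly.
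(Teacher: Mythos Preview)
Your proof is correct and shares the paper's core idea: factor $R_k^\perp = R_k^{1/2}\Pi_k R_k^{1/2}$ with $\Pi_k$ the orthogonal projector onto $\mathcal{R}(R_k^{1/2}\bar\phi_k^\rmT)^\perp$. The executions diverge slightly from there. The paper treats the two bounds separately: the upper bound comes directly from $R_k^\parallel\succeq 0$ and Lemma~\ref{lem: lambda_k(A) + lambda_1(B) < lambda_k(A+B) < lambda_k(A) + lambda_n(B)} (the alternative you mention), while for the lower bound it writes $R_k^\perp = (R_k^{1/2}\Pi_k)(\Pi_k R_k^{1/2})$, applies Lemma~\ref{lem: eigenvalues of AB and BA} between the two \emph{square} $n\times n$ factors to pass to $\Pi_k R_k \Pi_k$, and then invokes the min--max theorem with the specific subspace $\mathcal{R}(\Pi_k)$. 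Your route instead writes $\Pi_k = W_kW_k^\rmT$ with $W_k\in\BBR^{n\times(n-q_k)}$ orthonormal and reduces to the $(n-q_k)\times(n-q_k)$ compression $W_k^\rmT R_k W_k$, getting both bounds at once from Poincar\'e interlacing. This is arguably tidier and treats the two inequalities symmetrically; the only cost is that your appeal to Lemma~\ref{lem: eigenvalues of AB and BA} uses the rectangular version (nonzero eigenvalues of $AB$ and $BA$ coincide), whereas the paper's statement of that lemma is for square matrices, so you would need to cite the slightly more general form (e.g.\ Theorem~1.3.22 of Horn--Johnson) or, equivalently, just note that the nonzero eigenvalues of $(R_k^{1/2}W_k)(W_k^\rmT R_k^{1/2})$ are the squared singular values of $W_k^\rmT R_k^{1/2}$, which are also the eigenvalues of $W_k^\rmT R_k W_k$.
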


\textbf{Proof.}
    Let $k \ge 0$. To begin, note that by Theorem \ref{theo: Apar Aperp properties}, $\rank(R_k^\perp) = n - q_k$. 
    Therefore, $R_k^\perp$ has $n - q_k$ nonzero eigenvalues.
    
    First, we show the upper bound. Since $R_k = R_k^\parallel + R_k^\perp$, it follows from Lemma \ref{lem: lambda_k(A) + lambda_1(B) < lambda_k(A+B) < lambda_k(A) + lambda_n(B)} that $\eigmin(R_k^\parallel) + \eigmax(R_k^\perp) \le \eigmax(R_k)$. 
    Next, Theorem \ref{theo: Apar Aperp properties} implies that $R_k^\parallel$ is positive semidefinite, thus $0 \le \eigmin(R_k^\parallel)$. It then follows that $\eigmax(R_k^\perp) \le \eigmax(R_k)$.

    Second, to show the lower bound, first let $R_k^{\nicefrac{1}{2}} \in \BBR^n$ be the unique positive-definite matrix such that $R_k = R_k^{\nicefrac{1}{2}} R_k^{\nicefrac{1}{2}}$. 
    Furthermore, define $\Omega_k \in \BBR^{n \times n}$ as
    \begin{align*}
        \Omega_k \triangleq I_n - R_k^{\nicefrac{1}{2}}\bar{\phi}_k^\rmT (\bar{\phi}_k R_k \bar{\phi}_k^\rmT)^{-1} \bar{\phi}_k R_k^{\nicefrac{1}{2}}.
    \end{align*}
    It then follows that $R_k^\perp = R_k^{\nicefrac{1}{2}} \Omega_k R_k^{\frac{1}{2}}$. 
    It can also be easily verified that $\Omega_k^2 = \Omega_k$, hence $R_k^\perp = (R_k^{\frac{1}{2}} \Omega_k) (\Omega_k R_k^{\frac{1}{2}})$.
    Then, it follows from Lemma \ref{lem: eigenvalues of AB and BA} that the eigenvalues of $R_k^\perp$ are the same as the eigenvalues of $(\Omega_k R_k^{\frac{1}{2}}) (R_k^{\frac{1}{2}} \Omega_k) = \Omega_k R_k \Omega_k$.

    Next, since $R_k^\perp$ has $n-q_k$ nonzero eigenvalues, it suffices to show that $\boldsymbol{\lambda_{n - q_k}}(R_k^\perp) \ge \eigmin(R_k)$. Furthermore, by the previous result, it holds that $\boldsymbol{\lambda_{n - q_k}}(R_k^\perp) = \boldsymbol{\lambda_{n - q_k}}(\Omega_k R_k \Omega_k)$.
    It then follows from Lemma \ref{lem: min-max theorem} that
    \begin{align*}
        \boldsymbol{\lambda_{n - q_k}}(R_k^\perp) = \max_{\substack{S \subset \BBR^n \\ \dim(S) = n-q_k}} \min_{\substack{x \in S \\ \Vert x \Vert = 1}} x^\rmT \Omega_k R_k \Omega_k x.
    \end{align*}
    Furthermore, since $\Omega_k = R_k^{-\nicefrac{1}{2}} R_k^\perp R_k^{-\nicefrac{1}{2}}$ and $R^{-\nicefrac{1}{2}}$ is positive definite, it follows that $\rank(\Omega_k) = n - q_k$. Hence, $\dim(\mathcal{R}(\Omega_k)) = n - q_k$. Therefore,
    \begin{align*}
        \boldsymbol{\lambda_{n - q_k}}(R_k^\perp) \ge \min_{ x \in \mathcal{R}(\Omega_k), \ \Vert x \Vert = 1} x^\rmT \Omega_k R_k \Omega_k x.
    \end{align*}
    Next, since $\Omega_k^2 = \Omega_k$, it follows that, for all $x \in \mathcal{R}(\Omega_k)$, $\Omega_k x = x$. Hence,
    \begin{align*}
        \boldsymbol{\lambda_{n - q_k}}(R_k^\perp) \ge \min_{x \in \mathcal{R}(\Omega_k), \ \Vert x \Vert = 1} x^\rmT R_k x.
    \end{align*}
    Finally, note that
    \begin{align*}
        \min_{x \in \mathcal{R}(\Omega_k), \ \Vert x \Vert = 1} x^\rmT R_k x \ge \min_{\Vert x \Vert = 1} x^\rmT R_k x = \eigmin(R_k). \tag*{\mbox{$\blacksquare$}}
    \end{align*}

We now present the proof of Theorem \ref{theo: Rk upper and lower bounds}.

\begin{proof}[Proof of Theorem \ref{theo: Rk upper and lower bounds}]
    Proof follows by induction on $k \ge 0$. First, consider the base case $k = 0$. Note that $R_0 \succeq \eigmin(R_0) I_n$ implying \eqref{eqn: Rk lower bound} and that $R_0 \preceq \eigmax(R_0) I_n$ implying \eqref{eqn: Rk upper bound}.

    Next, let $k \ge 0$. 
    Let $\phi_k = U_k \Sigma_k V_k^\rmT$ be the full SVD of $\phi_k$, where $U_k \in \BBR^{p \times p}$ is orthogonal, $\Sigma_k \in \BBR^{p \times n}$ is rectangular diagonal with diagonal in descending order, and $V_k \in \BBR^{n \times n}$ is orthogonal.
    Moreover, denote the singular of $\phi_k$ as $\sigma_{1,k},\sigma_{2,k},\hdots,\sigma_{\min\{p,n\},k}$.
    Finally, since $q_k \le \min \{p,n\}$, we can partition the columns of $U_k$ and rows of $V_k$ as
    \begin{align}
        U_k = \begin{bmatrix}
            U_{1,k} & U_{2,k}
        \end{bmatrix},
        \quad
        V_k = \begin{bmatrix}
            V_{1,k} & V_{2,k}
        \end{bmatrix},
    \end{align}
    where $U_{1,k} \in \BBR^{p \times q_k}$ and $V_{1,k} \in \BBR^{n \times q_k}$ are the first $q_k$ columns of $U_k$ and $V_k$ respectively, and $U_{2,k} \in \BBR^{p \times (p-q_k)}$ and $V_{2,k} \in \BBR^{n \times (n-q_k)}$ are the remaining columns of $U_k$ and $V_k$ respectively.
    Note that $\bar{\phi}_k \in \BBR^{q_k \times n}$, defined in \eqref{eqn: phibar defn}, can be expressed as
    \begin{align}
        \bar{\phi}_k = U_{1,k}^\rmT \phi_k.
    \end{align}
    Moreover, the columns of $V_{1,k}$ form an orthonormal basis for $\mathcal{R}(\bar{\phi}_k^\rmT)$.

    Next, it follows from \eqref{eqn: Rkbar defn} and \eqref{eqn: SIFt Rk Update} that $R_{k+1} = \lambda R_k^\parallel + R_k^\perp + \bar{\phi}_k^\rmT \bar{\phi}_k$.
    It then follows from \eqref{eqn: Rk decomp} that
    \begin{align}
        R_{k+1} = \lambda R_k + (1-\lambda) R_k^\perp + \bar{\phi}_k^\rmT \bar{\phi}_k.
    \end{align}
    Pre-multiplying by $V_k^\rmT$ and post-multiplying by $V_k$, it follows that
    \begin{align}
    \label{eqn: V R V update temp}
        V_k^\rmT R_{k+1} V_k = V_k^\rmT [\lambda R_k + (1-\lambda) R_k^\perp + \bar{\phi}_k^\rmT \bar{\phi}_k] V_k.
    \end{align}
    Since the columns of $V_{1,k}$ form an orthonormal basis for $\mathcal{R}(\bar{\phi}_k^\rmT)$, it follows from Theorem \ref{theo: Apar Aperp properties} that $R_k^\perp V_{1,k} = 0_{n \times q_k}$.
    It then follows that
    \begin{align}
    \label{eqn: QRperpQ^T temp}
        V_k^\rmT R_k^\perp V_k =
        \begin{bmatrix}
       0_{q_k \times q_k} & 0_{q_k \times (n-q_k)}
        \\ 
        0_{(n-q_k) \times q_k} & V_{2,k}^\rmT R_k^\perp V_{2,k}
        \end{bmatrix}.
    \end{align}
    Moreover, note that the columns of $V_{2,k}$ are orthogonal to the columns of $V_{1,k}$, hence are also orthogonal to the rows of $\bar{\phi}_k$. 
    Therefore, $\bar{\phi}_k V_{2,k} = 0_{q_k \times (n-q_k)}$.
    Thus it follows that
    \begin{align}
    \label{eqn: phik V2k = 0 temp}
        V_k^\rmT \bar{\phi}_k^\rmT \bar{\phi}_k V_k = 
        \begin{bmatrix}
        V_{1,k}^\rmT \bar{\phi}_k^\rmT \bar{\phi}_k V_{1,k} & 0_{q_k \times (n-q_k)} \\ 0_{(n-q_k) \times q_k} & 0_{(n-q_k) \times (n-q_k)}
        \end{bmatrix}.
    \end{align}
    Substituting \eqref{eqn: QRperpQ^T temp} and \eqref{eqn: phik V2k = 0 temp} into \eqref{eqn: V R V update temp}, it follows that 
    \begin{align}
    \label{eqn: Q Rk+1 Q^T temp 2}
        V_k^\rmT R_{k+1} & V_k  =  \lambda V_k^\rmT R_{k} V_k \nonumber
        \\
        & + \begin{bmatrix}
        V_{1,k}^\rmT \bar{\phi}_k^\rmT \bar{\phi}_k V_{1,k} & 0_{q_k \times (n-q_k)} \\ 0_{(n-q_k) \times q_k} & (1-\lambda) V_{2,k}^\rmT R_k^\perp V_{2,k}
        \end{bmatrix}.
    \end{align}
    We now prove two claims which will later be useful.
    
    \textbf{\textit{Claim 1:}} $V_{1,k}^\rmT \bar{\phi}_k^\rmT \bar{\phi}_k V_{1,k} = \diag(\sigma_{1,k}^2,\hdots,\sigma_{q_k,k}^2)$.

    \textbf{\textit{Proof of Claim 1:}}
    Note that $\bar{\phi}_k = U_{1,k}^\rmT U_k \Sigma_k V_k^\rmT$.
    Furthermore, note that, $U_{1,k}^\rmT U_k = [U_{1,k}^\rmT U_{1,k} \ U_{1,k}^\rmT U_{2,k} ] = [I_{q_k} \ 0_{q_k \times (p-q_k)}]$.
    Similarly, we can express $V_{1,k}^\rmT V_k =  [V_{1,k}^\rmT V_{1,k} \ V_{1,k}^\rmT V_{2,k} ] = [I_{q_k} \ 0_{q_k \times (n-q_k)}]$. 
    Therefore, $V_{1,k}^\rmT \bar{\phi}_k^\rmT \bar{\phi}_k V_{1,k}$ can be written as
    \begin{align}
        V_{1,k}^\rmT \bar{\phi}_k^\rmT \bar{\phi}_k V_{1,k} 
        &= V_{1,k}^\rmT V_k \Sigma_k^\rmT U_k^\rmT U_{1,k} U_{1,k}^\rmT U_k \Sigma_k V_k^\rmT V_{1,k} \nonumber
        \\
        &= \begin{bmatrix}
            I_{q_k} & 0 
        \end{bmatrix} 
        \Sigma_k^\rmT
        \begin{bmatrix}
            I_{q_k} & 0 \\ 0 & 0
        \end{bmatrix}
        \Sigma_k
        \begin{bmatrix}
            I_{q_k} \\ 0 
        \end{bmatrix}
        \nonumber
        \\
        & = \diag(\sigma_{1,k}^2,\hdots,\sigma_{q_k,k}^2),
    \end{align}
    with zero matrices of appropriate sizes. \hfill$\square$
    
    \textbf{\textit{Claim 2:}} \textit{The $n-q_k$ nonzero eigenvalues of $R_k^\perp$ are the eigenvalues of $V_{2,k}^\rmT R_k^\perp V_{2,k} \in \BBR^{(n-q_k)\times(n-q_k)}$.}
    
    \textbf{\textit{Proof of Claim 2:}}
    First, note that, by Theorem \ref{theo: Apar Aperp properties}, $\rank(R_k^\perp) = n-q_k$, hence $R_k^\perp$ has $n-q_k$ nonzero eigenvalues.
    Next, it follows from \eqref{eqn: QRperpQ^T temp} that $V_k^\rmT R_k^\perp V_k$ is block diagonal. 
    Therefore, the eigenvalues of $V_k^\rmT R_k^\perp V_k$ are $q_k$ zeros and the $n-q_k$ eigenvalues of $V_{2,k}^\rmT R_k^\perp V_{2,k}$.
    Finally, since similarity transformation preserve eigenvalues, it follows that the eigenvalues of $R_k^\perp$ are $q_k$ zeros and the $n-q_k$ eigenvalues of $V_{2,k}^\rmT R_k^\perp V_{2,k}$. \hfill$\square$

    We now show the induction step for statement \textit{(1)} of Theorem \ref{theo: Rk upper and lower bounds}.
    First, assume, for inductive hypothesis, that \eqref{eqn: Rk lower bound} holds $k \ge 0$. 
    Applying Lemma \ref{lem: lambda_k(A) + lambda_1(B) < lambda_k(A+B) < lambda_k(A) + lambda_n(B)} to \eqref{eqn: Q Rk+1 Q^T temp 2} yields
    \begin{align}
    \label{eqn: eigmin Rk+1 temp 1}
        & \eigmin ( V_k^\rmT R_{k+1} V_k) \ge  \lambda \eigmin(V_k^\rmT R_k V_k) + \nonumber \\
        & \min \{ \eigmin(V_{1,k}^\rmT \bar{\phi}_k^\rmT \bar{\phi}_k V_{1,k}) , (1-\lambda) \eigmin(V_{2,k}^\rmT R_k^\perp V_{2,k}) \}.
    \end{align}
    Since similarity transformations preserve eigenvalues, it then follows that 
    \begin{align}
    \label{eqn: eigmin Rk+1 temp}
        & \eigmin ( R_{k+1}) \ge  \lambda \eigmin(R_k) + \nonumber \\
        & \min \{ \eigmin(V_{1,k}^\rmT \bar{\phi}_k^\rmT \bar{\phi}_k V_{1,k}) , (1-\lambda) \eigmin(V_{2,k}^\rmT R_k^\perp V_{2,k}) \}.
    \end{align}
    Next, applying the results of Claims 1 and 2, it follows that
    \begin{align}
    \label{eqn: eigmin Rk+1 temp 2}
        \eigmin ( R_{k+1}) & \ge  \lambda \eigmin(R_k) \nonumber \\
        & + \min \{ \sigma_{q_k,k}^2 , (1-\lambda) \boldsymbol{\lambda_{n-q_k}}(R_k^\perp) \}.
    \end{align}
    Note that, by construction of $q_k$, it follows that $\sigma_{q_k,k} \ge \sqrt{\varepsilon}$.
    Moreover, by Lemma \ref{lem: Rkperp eigenvalue bounds}, $\boldsymbol{\lambda_{n-q_k}}(R_k^\perp) \ge \eigmin(R_k)$.
    Applying these two inequalities to \eqref{eqn: eigmin Rk+1 temp 2}, it follows that
    \begin{align}
    \label{eqn: eigmin Rk+1 temp 2.5}
        \eigmin ( R&_{k+1}) \ge  \lambda \eigmin(R_k) 
        + \min \{ \varepsilon , (1-\lambda) \eigmin(R_k) \} \nonumber
        \\
        &= \min \{ \varepsilon + \lambda \eigmin(R_k) , \eigmin(R_k) \}.
    \end{align}
    Next, substituting inductive hypothesis \eqref{eqn: Rk lower bound} into \eqref{eqn: eigmin Rk+1 temp 2.5}, it follows that
    \begin{align}
    \label{eqn: eigmin Rk+1 temp 3}
        \eigmin ( R_{k+1}) \ge \min \Big\{ 
        &
        \varepsilon + \lambda (\frac{\varepsilon}{1-\lambda}), 
        \varepsilon + \lambda \eigmin(R_0), \nonumber
        \\
        & \frac{\varepsilon}{1-\lambda},
        \eigmin(R_0)
        \Big\}.
    \end{align}
    Note that $\varepsilon + \lambda (\frac{\varepsilon}{1-\lambda})$ simplifies to $\frac{\varepsilon}{1-\lambda}$.
    Hence, \eqref{eqn: eigmin Rk+1 temp 3} can be rewritten as
    \begin{align}
    \label{eqn: eigmin Rk+1 temp 4}
        \eigmin ( R_{k+1}) \ge \min \{ 
        \varepsilon + \lambda \eigmin(R_0), 
        \frac{\varepsilon}{1-\lambda},
        \eigmin(R_0)
        \}.
    \end{align}
    In the case where $\eigmin(R_0) \ge \frac{\varepsilon}{1-\lambda}$, it follows that 
    \begin{align*}
        \varepsilon + \lambda \eigmin(R_0) \ge \varepsilon + \lambda (\frac{\varepsilon}{1-\lambda}) = (\frac{\varepsilon}{1-\lambda}).
    \end{align*}
    In the case where $\eigmin(R_0) < \frac{\varepsilon}{1-\lambda}$, it follows that $\varepsilon - (1-\lambda) \eigmin(R_0) > 0$, and hence
    \begin{align*}
        \varepsilon + \lambda \eigmin(R_0) &= \eigmin(R_0) + \varepsilon - (1-\lambda) \eigmin(R_0) 
        \\
        &> \eigmin(R_0).
    \end{align*}
    Combining these two cases implies that
    \begin{align}
    \label{eqn: eps + lambda eigmin(R0) temp}
        \varepsilon + \lambda \eigmin(R_0) \ge \min \{ \frac{\varepsilon}{1-\lambda}, \eigmin(R_0) \}.
    \end{align}
    Substituting \eqref{eqn: eps + lambda eigmin(R0) temp} into \eqref{eqn: eigmin Rk+1 temp 4} yields
    \begin{align}
        \eigmin ( R_{k+1}) \ge \min \{ \frac{\varepsilon}{1-\lambda}, \eigmin(R_0) \}.
    \end{align}
    Therefore, by induction, \eqref{eqn: Rk lower bound} holds for all $k \ge 0$.

    Next, we will show the induction step for statement \textit{(2)} of Theorem \ref{theo: Rk upper and lower bounds}.
    Assume that $(\phi_k)_{k=0}^\infty$ is bounded with upper bound $\beta \in (0,\infty)$ and assume, for inductive hypothesis, that \eqref{eqn: Rk upper bound} holds for $k \ge 0$. 
    Next, applying Lemma \ref{lem: lambda_k(A) + lambda_1(B) < lambda_k(A+B) < lambda_k(A) + lambda_n(B)} to \eqref{eqn: Q Rk+1 Q^T temp 2}, it follows that
    \begin{align}
    \label{eqn: eigmax Rk+1 temp}
        & \eigmax (V_k^\rmT R_{k+1} V_k) \le  \lambda \eigmax(V_k^\rmT R_k V_k) + \nonumber \\
        & \max \{ \eigmax(V_{1,k}^\rmT \bar{\phi}_k^\rmT \bar{\phi}_k V_{1,k}) , (1-\lambda) \eigmax(V_{2,k}^\rmT R_k^\perp V_{2,k}) \}.
    \end{align}
    By analogous reasoning to that used previously, we use the fact that similarity transformations preserve eigenvalues, Lemma \ref{lem: Rkperp eigenvalue bounds}, and Claims 1 and 2 to simplify \eqref{eqn: eigmax Rk+1 temp} to
    \begin{align}
    \label{eqn: eigmax Rk+1 temp 1.5}
        \eigmax ( R_{k+1}) &\le  \lambda \eigmax(R_k) \nonumber
        \\
        &+ \max \{ \sigma_{1,k}^2 , (1-\lambda) \eigmax(R_k) \}.
    \end{align}
    Moreover since $(\phi_k)_{k=0}^\infty$ is bounded with upper bound $\beta$, it follows from Definition \ref{defn: sequence upper bounded} that $\sigma_{1,k}^2 \le \beta$. Thus, it follows from \eqref{eqn: eigmax Rk+1 temp 1.5} that
    \begin{align}
    \label{eqn: eigmax Rk+1 temp 2}
        \eigmax ( R_{k+1})
        &\le \max \{ \beta + \lambda \eigmax(R_k) , \eigmax(R_k) \}.
    \end{align}
    Next, substituting inductive hypothesis \eqref{eqn: Rk upper bound} into \eqref{eqn: eigmax Rk+1 temp 2} yields
    \begin{align}
    \label{eqn: eigmax Rk+1 temp 3}
        \eigmax ( R_{k+1}) \le \max \Big\{ 
        &
        \beta + \lambda (\frac{\beta}{1-\lambda}), 
        \beta + \lambda \eigmax(R_0), \nonumber
        \\
        & \frac{\beta}{1-\lambda},
        \eigmax(R_0)
        \Big\}.
    \end{align}
    Then, by similar reasoning to before, it can be shown that \eqref{eqn: eigmax Rk+1 temp 3} simplifies to
    \begin{align}
        \eigmax ( R_{k+1}) \le \max \{ \frac{\beta}{1-\lambda}, \eigmax(R_0) \}.
    \end{align}
    Therefore, by induction, \eqref{eqn: Rk upper bound} holds for all $k \ge 0$. 
\end{proof}

\section{Comparison to \cite{zhu2021recursive}}
\label{sec: issues with oblique}
In \cite{zhu2021recursive}, a different vector measurement extension of \cite{cao2000directional} is presented using an oblique projection decomposition. A tuning parameter $\varepsilon > 0$ is chosen and, for all $k \ge 0$, if $\Vert \phi_k \Vert_2 \ge \varepsilon$, then the information matrix $R_k \in \BBR^{n \times n}$ is updated as
\begin{equation}
\label{eqn: Rk update zhu 2021}
    R_{k+1} = R_k - (1-\lambda) R_k \phi_k^\rmT (\phi_k R_k \phi_k^\rmT)^+ \phi_k R_k + \phi_k^\rmT \phi_k,
\end{equation}
and if $\Vert \phi_k \Vert_2 < \varepsilon$, then $R_{k+1} = R_k + \phi_k^\rmT \phi_k$.
It is shown in Theorem 3 of \cite{zhu2021recursive} that, for all $k \ge 0$, $R_k$ is positive definite.
However, we show that there may not exist $\gamma > 0$ such that, for all $k \ge 0$, $\eigmin(R_k) \succeq \gamma$.
In particular, the following example shows that $R_k$ may converge to a positive semidefinite matrix.
Hence, Theorem \ref{theo: Rk upper and lower bounds} of SIFt-RLS provides stronger bounds on the eigenvalues of the information matrix than \cite{zhu2021recursive} does.

Let $\varepsilon > 0$. Let $R_0 = \diag(\varepsilon^2,1)$ and, for all $k \ge 0$, let $\phi_k = \diag(\varepsilon,\lambda^{\frac{k+1}{2}})$. Then, for all $k \ge 0$, $\Vert \phi_k \Vert_2 \ge \varepsilon$. Moreover, for all $k \ge 0$, $\phi_k$ is nonsingular, hence $\eqref{eqn: Rk update zhu 2021}$ simplifies to 
\begin{align}
    R_{k+1} = \lambda R_k + \phi_k^\rmT \phi_k = \lambda R_k + \begin{bmatrix}
        \varepsilon^2 & 0 \\
        0 & \lambda^{k+1}
    \end{bmatrix}.
\end{align}
Therefore, for all $k \ge 0$, 
\begin{align}
    R_k = \begin{bmatrix}
        \varepsilon^2 \sum_{i=0}^k \lambda^i & 0 \\
        0 & (k+1) \lambda^k
    \end{bmatrix}.
\end{align}
This implies that $\lim_{k \rightarrow \infty} R_k = \diag(\frac{\varepsilon^2}{1-\lambda},0)$.
Therefore, there does not exist $\gamma > 0$ such that, for all $k \ge 0$, $\eigmin(R_k) \succeq \beta$.

\end{document}